\newtheorem{lemma}{Lemma}
\newtheorem{definition}{Definition}
\newtheorem{corollary}{Corollary}
\newtheorem{fact}{Fact}
\newtheorem*{thmA*}{Main Algorithmic Theorem}
\newtheorem*{thmL*}{Lower Bound Theorem}
\declaretheorem{theorem}
\newcommand{\dk}[1]{{\color{purple} #1}}
\newcommand{\pga}[1]{{\color{red} #1}}
\newcommand{\mm}[1]{{\color{cyan}{#1}}}
\newcommand{\mam}[1]{{\color{cyan}{#1}}}
\newcommand{\mamr}[1]{{\color{cyan}{#1}}}
\newcommand{\todo}[1]{{\color{red} [TODO: #1]}}
\renewcommand{\dk}[1]{#1}
\renewcommand{\mm}[1]{#1}
\renewcommand{\mam}[1]{#1}
\renewcommand{\mamr}[1]{#1}
\renewcommand{\pga}[1]{#1}
\newcommand{\mF}{\mathcal{F}}
\newcommand{\cP}{\mathcal{P}}
\newcommand{\cM}{\mathcal{M}}
\newcommand{\cN}{\mathcal{N}}
\newcommand{\remove}[1]{}
\newcommand{\polylog}{\text{\em\ polylog }}
\newcommand{\parhead}[1]{\noindent{\textbf{#1.}\xspace}}
\newcommand{\congest}{{\fontfamily{cmss}\selectfont CONGEST}\xspace}
\newcommand{\alg}{{\sc c2b}\xspace}
\title{Beeping Deterministic \congest Algorithms in Graphs
}
\author{
Pawel Garncarek\thanks{University of Wroclaw, Institute of Computer Science, Wroclaw, Poland; supported by the National Science Center, Poland (NCN), grant 2020/39/B/ST6/03288.} 
\and Dariusz R. Kowalski\thanks{Augusta University, Department of Computer \& Cyber Sciences, Augusta, GA, USA} 
\and Shay Kutten\thanks{Technion, Israel Institute of Technology, Haifa, Israel; a large part of this author's research was performed while he was on a sabbatical at Fraunhofer SIT in Darmstadt. Research supported in part by the Israeli Science Foundation and by The Bernard M. Gordon Center for Systems Engineering at the Technion.} 
\and Miguel A. Mosteiro\thanks{Pace University, Computer Science Department, New York, NY, USA; partially supported by Pace SRC grant and Kenan fund.}
}
\date{}
\begin{document}
\maketitle


\begin{abstract}
\remove{
The Beeping Network (BN) model captures important properties of 
biological processes,
for instance when the beeping entity, called \emph{node}, models a cell
 (see Navlakha and Bar-Josef, CACM 2014, and Afek et al., Science 2011).  Perhaps paradoxically, even the fact that the communication capabilities of such nodes are extremely limited has helped 
BN become one of the fundamental models for networks where nodes' transmissions interfere with each other. Since, in each round, a node may transmit at most one bit, 
it is useful to treat the communications in the network as distributed coding and design it to overcome the interference. We study both non-adaptive and adaptive codes. Some communication and graph problems already studied in the Beeping Networks admit fast (i.e., polylogarithmic in the network size $n$) \emph{randomized} algorithms.
On the other hand, all known \emph{deterministic} algorithms for non-trivial problems have time complexity (i.e., the number of beeping rounds, corresponding to the length of the used codes) at least polynomial in the maximum node-degree $\Delta$. 

We improve known results for deterministic algorithms by first showing that this polynomial can be as low as $\tilde{O}(\Delta^2)$. More precisely, we prove that beeping out a single round of any \congest algorithm in any network of maximum node-degree $\Delta$ can be done in $O(\Delta^2 \polylog n)$ beeping rounds, each accommodating at most one beep per node, even if the nodes intend to send different messages to different neighbors. This upper bound reduces the time for a \emph{deterministic} simulation of \congest in a Beeping network to (up to a poly-logarithmic factor) the time obtained recently using \emph{randomization} (see Davies, ACM PODC 2023).
This simulator allows us to implement any efficient algorithm designed for the \congest networks in the Beeping Networks, with $O(\Delta^2 \polylog n)$ overhead. This $O(\Delta^2 \polylog n)$ implementation results in a polynomial improvement upon the best-to-date $\Theta(\Delta^3)$-round beeping MIS algorithm (and of related tasks). Using a more specialized (and thus, more efficient) transformer and some additional machinery,  we constructed various other efficient deterministic Beeping algorithms for various other commonly used building blocks, such as 
 Network Decomposition (seminal in the field of \congest graph algorithms).
\dk{For $h$-hop simulations, we prove a lower bound $\Omega(\Delta^{h+1})$, and we design nearly matching algorithm that is able to ``pipeline'' the information in a faster way than layer-to-layer.}
We also prove that non-adaptive content-oblivious deterministic algorithms that use at least a single local broadcast require $\Omega(\min\{n,\Delta^2/\log^2 n\})$ beeping rounds in some networks of maximum node-degree $\Delta$. This lower bound establishes a gap between randomized and non-adaptive content-oblivious deterministic algorithms for many such tasks (in contrast to the general case where we have shown no gap exists, up to a polylog factor). 
\\
}

The Beeping Network (BN) model captures important properties of 
biological processes, for instance, when the beeping entity, called \emph{node}, models a cell.
 Perhaps paradoxically, even the fact that the communication capabilities of such nodes are extremely limited has helped 
BN become one of the fundamental models for networks where nodes' transmissions interfere with each other. Since in each round, a node may transmit at most one bit, 
it is useful to treat the communications in the network as distributed coding and design it to overcome the interference. We study both non-adaptive and adaptive codes. Some communication and graph problems already studied in the Beeping Networks admit fast (i.e., polylogarithmic in the network size $n$) \emph{randomized} algorithms.
On the other hand, all known \emph{deterministic} algorithms for non-trivial problems have time complexity (i.e., the number of beeping rounds, corresponding to the length of the used codes) at least polynomial in the maximum node-degree $\Delta$. 

We improve known results for deterministic algorithms by first showing that this polynomial can be as low as $\tilde{O}(\Delta^2)$. More precisely, we prove that beeping out a single round of any \congest algorithm in any network of maximum node-degree $\Delta$ can be done in $O(\Delta^2 \polylog n)$ beeping rounds, each accommodating at most one beep per node, even if the nodes intend to send different messages to different neighbors. This upper bound reduces polynomially the time for a \emph{deterministic} simulation of \congest in a Beeping network, comparing to the best known algorithms, and nearly matches the time obtained recently using \emph{randomization} (up to a poly-logarithmic factor). 
Our simulator allows us to implement any efficient algorithm designed for the \congest networks in the Beeping Networks, with $O(\Delta^2 \polylog n)$ overhead. This $O(\Delta^2 \polylog n)$ implementation results in a polynomial improvement upon the best-to-date $\Theta(\Delta^3)$-round beeping MIS algorithm (and of related tasks). Using a more specialized (and thus, more efficient) transformer and some additional machinery,  we constructed various other efficient deterministic Beeping algorithms for various other commonly used building blocks, such as 
 Network Decomposition (seminal in the field of \congest graph algorithms).
For $h$-hop simulations, we prove a lower bound $\Omega(\Delta^{h+1})$, and we design a nearly matching algorithm that is able to ``pipeline'' the information in a faster way than
 working layer by layer.

\

\noindent
{\bf Keywords:} Beeping Networks, \congest Networks, deterministic simulations, 
graph algorithms.




\end{abstract}


\thispagestyle{empty}

\newpage

 \setcounter{page}{1}

\remove{
\section*{TODO:}

- re-state lemmas 1 to 5 in the details section using the tool of thms 1 to 4.

- pictures to illustrate techniques


- what if $\Delta$ is unknown - could we still simulate our algorithm? we could interleave protocols for $\Delta$ being a power of $2$, but first, we need to assure consistency between them, and second, could we determine termination?

- how to compute short schedules? maybe there is a trade-off between short schedule implementable (even in off-line fashion) on beeping network and the number of rounds to compute such a schedule in a distributed way? check radio networks literature

- repeated local broadcast?

- Check carefully best beeping solutions to other problems, so that we could really make a long justified list of problems we improve (not only MIS)

- Prove lower bounds for specific problems, eg. MIS; even $\Delta$ could be good, if it distinguishes from randomized solutions (in MIS, randomized is only $\log^3 n$, without any $\Delta$); I thought I have one, but I found a bug and I need to re-think it; we may also try developing lower bounds for other problems

- Can we find an important problem that does not require $\Delta^2$? In other words, it does not need local broadcast in beeping in order to be solved deterministically? This would nicely contrast with our lower bound $\Delta^2/\log^2 n$ for local broadcast

- Could we improve our lower bound? E.g., eliminate assumption on content-oblivious, and if not entirely possible, we could consider bounded local memory (number of states)? Or extend to other problems, e.g., maximal matching (since we use matching as a tool in realization)?

\dk{OPEN PROBLEMS:

- solving same problems but with adversarially delayed wake-up times -- especially if the diameter is much larger than our best time to solve problems in case all are awaken simultaneously

- how to synchronize the network?

- energy aspects, awake model, etc.

}

- counting problem -- determine the exact number of nodes $n$.

- k-hop neighborhood -- It can be solved in $O(\Delta^{k+1} polylog(n))$ rounds: Solve $(k-1)$-neighborhood in $O(\Delta^{k} polylog(n))$ rounds; now each node has $\Delta^{k-1} \log n$ bits of information to transmit ($\Delta^{k-1}$ IDs, where each ID takes $\log n$ bits) -- let's transmit them using our routine that takes $O(\Delta^2 \polylog n)$ rounds per bit, for a total time complexity $O(\Delta^{k+1} polylog(n))$.

- lower bounds for k-neighborhood

- generalize our local broadcast lower bound to other problems, such as finding neighborhood

- deterministic convergecast on a tree in beeping model -- $O(n\cdot \Delta)$ rounds? Could we do just $O(n+\Delta+D)$ rounds instead?

- coloring, 2-hop MIS, other specific problems

- even weak lower bounds such as $\Omega(\Delta)$ on some problems, such as MIS

- lower bound of $\Omega(\Delta)$ for Learning neighborhood based on Davies lower bound (Lemma 14 in Davies' paper); write it down!

- multihop version of Davies lower bound (Lemma 14 in Davies' paper)? Write it down! Both for Local Broadcast and for sending different messages to different k-hop neighbors.

- algorithm for multihop Learning neighborhood via flooding (repeated use of Local Broadcast); we can also compute the shortest paths to each node within k hops.

- computing local aggregation functions, e.g., computing the OR of my neighbors inputs can be done in 1 round. Computing AND is also done in 1 round. XOR may require $\Delta$ rounds. Prove that XOR requires $\Delta$ rounds? Can there be problems that require more than 1 round but less than $\Delta$ rounds?

- learning triangles and cycles of length $4$ and so on

- CONGESTed Clique model and MPC model (Massive Parallel Computation)

- sleeping model - implementing such algorithm, in general or specific ones, could allow us to use more sparse communication or pipeline, because such algorithms use much smaller number of communication rounds per node


}

\section{Introduction}

The study of the Beeping Networks (BN) model is simultaneously interesting, challenging, and useful in several respects. 
Even 
multiple less-restrictive ad-hoc networks (e.g., wireless) 
are frequently the algorithmist delight: a seemingly simple computational problem that becomes challenging under harsh yet realistic conditions.  In terms of communication, Beeping Networks~\cite{cornejo2010deploying} may be the harshest model: network nodes can only \emph{beep} (emit a signal) or \emph{listen} (detect if a signal is emitted in its vicinity). A listening node may distinguish between \emph{silence} (no beeps) and \emph{noise} (at least one beep), but it cannot distinguish between a single beep and multiple. Theoretically, the Beeping model is important since it enables one to study whether distributed tasks can be performed efficiently under minimal conditions,
 \mam{and it is a fundamental model to study communication complexity on channels where signals are superimposed (by applying the OR operator), which makes it more challenging than the basic model in which transmissions on links are independent. Moreover, in a one-hop network,\footnote{\mam{For any given path of links connecting two nodes, the number of \emph{hops} is the number of links in such path.}} \dk{non-adaptive communication schedules in the Beeping model are} equivalent to superimposed codes, which have been widely applied not only to communication problems, but also to text alignments, bio-computing, data pooling, dimensionality reduction, and other~areas.}


In addition to its theoretical importance, 
this model is recognized as useful for studying natural communication in biological networks~\cite{navlakha2014distributed,afek2011biological} (e.g., 
cells). A better understanding of the power of biological communication processes may lead to better nature-inspired algorithms~\cite{MooreNeuron24}.  

It has been argued that the model is also a practical tool because algorithms designed for Beeping Networks can be implemented on non-expensive devices with low energy consumption. For example, nodes are deployed (in ad-hoc topologies) for monitoring in Internet of Things (IoT) applications such as Sensor Networks. It should be noted that a mechanism similar in properties to beeping, called ``busy tone,'' has been in wide use in wireless networks but for very limited purposes in channel access algorithms (as opposed to the Beeping model that is intended for general purposes distributed algorithms). See, e.g., \cite{tobagi1975packet,haas2002dual}.


A wealth of successful research on computational problems in Beeping Networks has appeared in the literature (see, for instance \cite{beauquier2018fast,dufoulon2022beeping} and the references therein). Nevertheless, fundamental distributed computing questions 
remain open in the context of Beeping Networks, especially for deterministic algorithms. 
On the other hand, the \congest ~\cite{peleg2000distributed} model has been profusely studied.
%
%
A natural question that follows is how to efficiently transform \congest Network algorithms into Beepping Network algorithms.


\vspace*{-1ex}
\subsection{Our Contributions}




\vspace*{-0.2ex}
The results are detailed and compared to previous work in Table \ref{table:relwork}.
Our main contributions are the near optimal\footnote{Optimal up to a polylogarithmic factor.} deterministic implementations of two simulators.
The main simulator efficiently translates any \congest algorithms \mam{(deterministic or randomized)} to the Beeping model.
Each \congest round is simulated using $O(\Delta ^2 \polylog n \log \Delta )$ rounds, improving the previous result of $O(\Delta^4 \log n)$ of \cite{beauquier2018fast}.
The other simulator translates (more efficiently -
$O(B\Delta^2 \log n)$ for sending $B$ bits)
the more specialized
\emph{local broadcast} operation 
(called ``broadcast \congest'' in \cite{davies2023optimal}) 
where, whenever a node sends a message to its neighbors, it sends all of them the same message. It seems interesting that for the specialized simulator, we managed to employ non-adaptive ``Beeping codes'', 
i.e.,
not needing to know the topology nor to adapt an action based on beeps heard from the channel; \dk{as such, its performance almost matches a lower bound $\Omega(\Delta^2\log_\Delta n)$ for {\em non-adaptive} algorithms~\cite{CLEMENTI2003337}.} 
In contrast, for the more general simulator, we had to develop a technique that may be of interest in itself. We started with more complex codes and added a hierarchical structure. On top of that, we developed an adaptive algorithm that employs a novel three-stage handshake to distinguish between a genuine message and a message the adversary may have composed by colliding~several~transmissions.
  
  %
We demonstrate the usefulness of the general simulator by improving the known results for the problem most heavily investigated under the Beeping model, namely, the Maximal Independent Set (MIS). We also demonstrate the use of the more specialized simulator (plus some additional machinery) to construct efficient algorithms for other common building blocks of algorithms, such as a node learning its neighborhood (in $O(\Delta^2 \log^2 n)$ beeping rounds), 
learning the whole cluster  ($O(\Delta^2 \log^4 n)$ for clusters of depth $O(\log ^2 n)$) for all the clusters in parallel, 
and Network Decomposition ($O(\Delta^2 \log^8 n)$).

Another set of results concerns whether a gap exists between the complexity of randomized and deterministic algorithms. For the task of a general simulation of \congest algorithms, the time taken by our \emph{deterministic} simulator 
%
($O(\Delta ^2 \polylog n \log \Delta )$),  
matches the known lower bound ($\Omega(\Delta^2 \log n)$~\cite{davies2023optimal}) up to a poly-logarithmic factor. 
The proof of the $\Omega(\Delta^2 \log n)$ lower bound in~\cite{davies2023optimal} is based on requiring each node to transmit a different string of bits to each of its neighbors.\footnote{This problem is 
named local broadcast 
in~\cite{davies2023optimal}.
However, 
if a different message is sent to each neighbor, the reason for calling it broadcast is unclear. 
We reserve the name local broadcast for the classic problem of sending the {\em same message to~all~neighbors}.} If the string of bits could be the same, that is, the much simpler problem known as local broadcast, that lower bound would collapse to $\Omega(\Delta \log n)$.
\dk{Both lower bounds also hold for randomized solutions, which implies no (substantial) gap between randomness and determinism in the case of general \congest simulation algorithms.}
\remove{??? On the other hand, for local broadcast, we show a lower bound of $\Omega\left(\min\left\{n,\Delta^2/\log^2 n\right\}\right)$ for deterministic \mam{non-adaptive content-oblivious} beeping algorithms, which to the best of our knowledge is the first (nearly) quadratic \mam{local broadcast (same message for all neighbors)} lower bound in the Beeping model.
The complexity of randomized local broadcast is much better - $O(\Delta \log n)$~\cite{davies2023optimal}, establishing a gap for algorithms that use only local broadcasts. (Note that local broadcasts are a common primitive in various models). ???}

\begin{table}[t]
    \vspace*{-1ex}
    \centering
    \begin{tabular}{|c|c|c|c|}
    \hline
    \rule{0pt}{3ex}
        \cellcolor[gray]{.8}problem&\cellcolor[gray]{.8}\begin{tabular}{c}protocol\\type\end{tabular}&\cellcolor[gray]{.8}beeping rounds&\cellcolor[gray]{.8}ref\\
    [.03in]
    \hline
    \rule{0pt}{3ex}
        \multirow{5}{*}{\begin{tabular}{c}simulation of one\\\congest round\end{tabular}}&\multirow{2}{*}{randomized}&$O(\Delta\min(n,\Delta^2)\log n)$ whp&\cite{ashkenazi2020brief}\\
    [.03in]
    \cline{3-4}
    \rule{0pt}{3ex}
        &&$O(\Delta^2\log n)$ whp&\cite{davies2023optimal}\\
    [.03in]
    \cline{2-4}
    \rule{0pt}{3ex}
        &\multirow{2}{*}{deterministic}&$O(\Delta^4 \log n)$&\cite{beauquier2018fast}\\
    [.03in]
    \cline{3-4}
    \rule{0pt}{3ex}
        &&\cellcolor[gray]{.9}$O(\Delta^2 \polylog n)$&{\bf Thm.~\ref{thm:congest-sim}}\\
    [.03in]
    \cline{2-4}
    \rule{0pt}{3ex}
        &rand./det.&\cellcolor[gray]{.9}$\Omega(\Delta^2\log n)$&\cite{davies2023optimal}\\
    [.03in]
    \hline
    \hline
    \rule{0pt}{3ex}
        \multirow{2}{*}{\begin{tabular}{c}$B$-bit $h$-hop simulation\end{tabular}}&deterministic&\cellcolor[gray]{.9}$O(h\cdot B\Delta^{h+2}\polylog n)$ &{\bf Thm.~\ref{thm:multihopub}}\\
    [.03in]
    \cline{2-4}
    \rule{0pt}{3ex}
        &rand./det.&\cellcolor[gray]{.9}$\Omega(B\Delta^{h+1})$ &{\bf Thm.~\ref{thm:multihoplb}}\\
    [.03in]
    \hline
    \hline
    \rule{0pt}{3ex}
        \multirow{2}{*}{Local Broadcast}&rand./det.&
        $\Omega(\Delta\log n)$&\cite{davies2023optimal}\\
    [.03in]
    \cline{2-4}
    \rule{0pt}{3ex}
        &deterministic
        &$O(B\Delta^2\log n)$, for $B$ bits&{\bf Thm.~\ref{th:local_broadcast}}\\
    [.03in]
    \hline
    \hline
    \rule{0pt}{3ex}
        \begin{tabular}{c}$B$-bit $h$-hop\\ Local Broadcast\end{tabular}&rand./det.&$\Omega(B\Delta^{h})$ &{\bf Thm.~\ref{thm:multihoplb}}\\
    [.03in]
    \hline
    \hline
    \rule{0pt}{3ex}
        Learning Neighborhood&\multirow{5}{*}{deterministic}&$O(\Delta^2 \log^2 n)$&{\bf Thm.~\ref{th:learning_neighbourhood}}\\
    [.03in]
    \cline{1-1}\cline{3-4}
    \rule{0pt}{3ex}
        Cluster Gathering&&$O(\Delta^2 \log^4 n)$&{\bf Thm.~\ref{th:cluster_gathering}}\\
    [.03in]
    \cline{1-1}\cline{3-4}
    \rule{0pt}{3ex}
        \begin{tabular}{c}$(\log n,\log^2 n)$-Network\\Decomposition\end{tabular}&&$O(\Delta^2 \log^8 n)$&{\bf Thm.~\ref{thm:local-decomposition}}\\
    [.03in]
    \cline{1-1}\cline{3-4}
    \rule{0pt}{3ex}
        \multirow{2}{*}{MIS}&&\cellcolor[gray]{.9}$O(\Delta^3+\Delta^2\log n)$&\cite{beauquier2018fast}\\
    [.03in]
    \cline{3-4}
    \rule{0pt}{3ex}
        &&\cellcolor[gray]{.9}$O(\Delta^2 \polylog n)$&{\bf Cor.~\ref{cor:mis}}\\
    [.03in]
    \hline
    \end{tabular}
    \caption{
    Summary of our results and related work for Beeping Networks with $n$ nodes and maximum degree $\Delta$. All protocols are distributed. The local broadcast lower bound applies 
    \mam{to non-adaptive content-oblivious protocols.}
    Highlighted pairs of cells show a comparison of our main results. For the general \congest simulator and Local Broadcast we show their comparison with the existing randomized results, whereas for MIS we show the upper bound improvement with respect to existing~work.
    }
    \label{table:relwork}
    \vspace{-3ex}
\end{table}

\mam{
An interesting generalization is to extend the notion of neighborhood to $h\geq 1$ hops. For the case where each node has a possibly different $B$-bit message to deliver to all nodes in its $h$-hop neighborhood, a problem that we call \emph{$B$-bit $h$-hop simulation}, we show bounds $O(h\cdot B\Delta^{h+2} \polylog n)$ and $\Omega(B\Delta^{h+1})$, 
where the lower bound holds also for randomized solutions. Our algorithm, on the other hand, efficiently ``pipelines'' point-to-point messages, and achieves substantially better complexity (for $h>3$) than a straightforward application of 1-hop simulator $h$ times (which would need $O(\Delta^{2h}\polylog n)$ time).
For the simpler problem when each node has to deliver the same message to all nodes in its $h$-hop neighborhood, which we call \emph{$B$-bit $h$-hop Local Broadcast}, we show a lower bound of $\Omega(B\Delta^h)$ even with randomization. 
}

\remove{

\subsection{Main theorems proven in this paper}




\sk{COMMENT TO BE REMOVED: We talked today about shortening 1.1 and moving some text from it to other places. I wrote a much shorter 1.1, moved the formal theorem STATEMENT out of 1.1 to 1.2, and shortened it by putting a lot of text here in a "remove" macro. I think some of the text should go to other places, and some are said twice here and elsewhere.

Please feel free to edit if we can afford it or if necessary.

Another issue: stating the theorems and corollaries both here and elsewhere seems an unjustifiable overuse of space real estate. I would either remove here altogether (delete section 1.2) or, e.g. in section 4, write "proof of theorem 6", but NOT repeat theorem 6.} 
\mm{[[MM: I agree that section 1.2 is not needed, except for the last paragraph that could be added to section 1.1, and the MIS section that needs to be moved somewhere else.]]}

\remove{


\SK{THE FOLLOWING MAY BELONG UNDER A SEPARATE HEADING OF "CHALLENGES", NOT UNDER CONTRIBUTIONS, NOR THEOREM.}
The main challenge is that even if the nodes already learned their neighbors' unique identities (IDs) in the network (using our previously described auxiliary techniques), they still do not know when they have a unique 1 in their code for each neighbor. Intuitively, learning this would require learning 2-hop neighborhoods, but this is inefficient even using our auxiliary methods (which were efficient for learning 1-hop neighborhoods in parallel) and would have resulted in asymptotically $\Delta^4 \polylog n$ beeping rounds.  
Instead, we use general avoiding-selector codes that allow us to ``announce'' some fraction of ``non-realized'' neighbor connections (not yet used) and then realize them using faster codes, i.e., ``avoiding-selectors'' for specific parameters of roughly linear length. Interestingly, such a combination reduces the number of non-realized connections to neighbors by half and requires only $O(\Delta^2 \polylog n)$ beeping rounds. We repeat the above $\log\Delta$ times to get the final result.
} 

\paragraph{\mm{Maximal Independent Set (MIS):}}
Applying our \congest round simulation to~\cite{ghaffari2021improved}%
, we get the following result, which improves polynomially (with respect to $\Delta$) the best-known solutions 
for MIS (c.f. \cite{beauquier2018fast})  but also yield efficient results for many other problems.


\todo{More uses for Network Decomposition! Add corollaries here and citations in Related Work.}

\begin{corollary}
\label{cor:mis}
MIS can be solved deterministically on any network of maximum node-degree $\Delta$ in $O(\Delta^2 \polylog n)$ beeping rounds.
\end{corollary}

\sk{WE SHOULD HAVE A PLACE IN THE PAPER THAT STATES THAT... I COULD NOT FIND IT}
\sk{I moved this to the end of the general simulations section.}

\paragraph{\mm{Lower Bounds:}} (See Section~\ref{sec:lower-bound}.)
\sk{wE SAID WE MAY MOVE THE LOWER BOUND DISCUSSION TO THE LOWER BOUND SECTION. \\
i SUGGEST TO CLARIFY WHAT DOES IT MEAN THE "REALIZATION OF A LINK" THAT APPEARS IN THE DISCUSSION (THOUGH I CAN GUESS WHAT IT IS)}
\remove {
We also prove a lower bound for Local Broadcast in the Beeping model in Section~\ref{sec:lower-bound}. To do that, we focus on the Radio Network model with collision detection, which is stronger than the Beeping Network model in the sense that in the beeping model, a listening node is not able to distinguish between a single beep and multiple beeps, whereas in the former, nodes can detect collisions and recognize a uniquely transmitting neighbor (see more details in Section~\ref{sec:model}). Thus, our lower bound for Radio Networks with collision detection also applies to the Beeping model. Our lower bound applies to deterministic adaptive Local Broadcast protocols for Radio Networks with collision detection that are \emph{content-oblivious}, that is, protocols where the content of the messages received is not used to decide future transmissions. \mm{Given that in Beeping Networks the information conveyed by beeps is the same as channel feedback (silence or noise), the lower bound proved applies to general protocols for Beeping Networks.}
Specifically, we prove the following. 
} 
\remove{
\begin{thmL*}[Theorem \ref{thm:LBlower}] 
Consider any deterministic content-oblivious adaptive protocol $\cP$ that solves Local Broadcast in a Radio Network with collision detection. Let $\tau$ be the number of rounds needed by $\cP$ in the worst case. Then, for each $\cP$, there exists an adversarial input network with maximum degree $\Delta$ such that  
$$\tau\in\Omega\left(\min\left\{n,\frac{\Delta^2}{\log^2 n}\right\}\right).$$
\end{thmL*}
}

\begin{restatable}[]{corollary}{LBlowerbound} 
\label{cor:LBlower}
Consider any deterministic adaptive protocol $\cP$ that solves Local Broadcast in a Beeping Network. Let $\tau$ be the number of beeping rounds needed by $\cP$ in the worst case. Then, for each $\cP$, there exists an adversarial input network with maximum degree $\Delta$ such that  
$$\tau\in\Omega\left(\min\left\{n,\frac{\Delta^2}{\log^2 n}\right\}\right).$$
\end{restatable}

\remove{
It follows that any Beeping Networks algorithm that uses, a local broadcast as a subroutine (e.g., simulating algorithms from other message-passing models, including \congest) requires 
the~same~time.
}
\remove{
Our proof of the lower bound relies on the novel techniques of choosing an arbitrary matching and designing a network that postpones the realization of some edges of the matching by a given local broadcast algorithm as much as possible. Network construction starts with a particular random graph and simulates consecutive rounds of the local broadcast algorithm. During the simulation, additional edges can be added to the network to delay the realization of some links in the matching. Still, at the same time, they do not change the beeping feedback (or, more generally, radio network feedback) received by nodes at all previously considered simulation rounds. 
}

}
\remove {
\subsection{Some challenges and technical approaches for the general simulator}

\mm{[[MM:If we keep this section we may want to move back here the description of lower bound technique]]}

In a general \congest round, each node can send different messages to different neighbors. A centralized coordination mechanism does not exist that could tell a node what its neighbors or their neighbors intend to send. 
 
\sk{UNCLEAR: Our beeping schedules combine specific codes, called avoiding selectors, with adaptive mechanisms to encode and decode logarithmic pieces of information at positions where neighboring codes have exclusive 1. } 
\sk{NEIGHBORING CODES? EXCLUSIVE 1?}

\sk{NOT CLEAR TO ME: The main challenge is that even if the nodes already learned the unique identities (ids) of their neighbors in the network (using our previously described auxiliary techniques), they still do not know when they have a unique 1 in their code for each neighbor.} \sk{It is not clear what is a unique 1 and why is it useful. I guess that the idea is (1) that there are rounds as many as the bits in my ID and that when I have 1 in the id bit corresponding to the number of the current slot, I transmit.  this is the meaning of 1 . I think that this meaning is not at all clear to the reader at this point, and (2) a unique 1 means that I am the unique neighbor with 1 at that position of the id. This too , I think is not known to the reader at this point.}

Suppose the nodes use their different identities (IDs) to transmit at different times, distinguishing the messages. For example, suppose that in the $i$th time slot, a node $v$ transmits if it has 1 in the $i$th position of its id. For a neighbor $u$ to notice that a single node ($v$) is transmitted at that time slot, at least the following two conditions must hold. First, $u$ must not be transmitted, that is, have no 1 in that id position. Second, $v$ has to be the only neighbor of $u$ transmitting, that is, the unique neighbor of $u$ who has 1 in that id position. 

Addressing the first condition is relatively easier since we can use our building block (also presented in this paper) of a node to learn its neighbors efficiently. Unfortunately, addressing the second condition seems to require learning 2-hop neighborhoods. This would be inefficient, resulting in asymptotically $\Delta^4 \polylog n$ beeping rounds.  

Instead, we use general avoiding-selector codes that allow us to ``announce'' some fraction of non-realized neighbor connections and then realize them using faster codes, i.e., avoiding-selectors for specific parameters of roughly linear length. Interestingly, such a combination reduces the number of non-realized connections to neighbors by half and requires only $O(\Delta^2 \polylog n)$ beeping rounds. We repeat the above $\log\Delta$ times to get the final result.

\sk{WHAT IS THE MEANING OF (NON) REALIZED CONNECTION AND HOW DOES ONE REALIZE THEM }

}
\subsection{Related Work}
\label{sec:relwork}



The BN model was defined by Cornejo and Kuhn in~\cite{cornejo2010deploying} in 2010, inspired by 
continuous beeping studied by Degesys et al.~\cite{degesys2007desync} and Motskin et al.~\cite{motskin2009lightweight}, and by the implementation of coordination by carrier sensing given by Flury and Wattenhofer in~\cite{flury2010slotted}.
Since then, the literature has included studies on 
MIS and Coloring~\cite{afek2011biological,afek2013beeping,jeavons2016feedback,holzer2016brief,beauquier2018fast,casteigts2019design}, 
Naming~\cite{chlebus2017naming}, 
Leader Election~\cite{ghaffari2013near,forster2014deterministic,dufoulon2018beeping}, 
Broadcast~\cite{ghaffari2013near,hounkanli2015deterministic,hounkanli2016asynchronous,czumaj2019communicating,beauquier2019optimal},
and Shortest Paths~\cite{dufoulon2022beeping}.

Techniques to implement \congest algorithms in Beeping Networks were studied. In~\cite{beauquier2018fast},
the approach is to schedule transmissions according to a 2-hop $c$-coloring to avoid collisions. The multiplicative overhead introduced by the simulation is in $O(c^2 \log n)$. A constant $c$ is enough for the simulation, but the only coloring algorithm provided in the same paper takes time $O(a^2\Delta^2\log^2 n +a^3 \Delta^3 \log n)$ for a $(\Delta^2+1)$-coloring. Thus, the multiplicative overhead is in $O(\Delta^4 \log n)$. 
Thus, with respect to such work, our simulation improves by a factor of $\Delta^2/\polylog n$.

On the side of randomized protocols, in a recent work by Davies~\cite{davies2023optimal}, a protocol that simulates a \congest round in a Beeping Network with $O(\Delta^2\log n)$ overhead is presented. The protocol works even in the presence of random noise in the communication channel. Still, it is correct only with high probability (whp),\footnote{An event $E$ occurs \emph{with high probability} if $Prob(E)\geq 1-1/n^c$ for some $c>0$.} and requires a polynomial number of \congest rounds in the simulated algorithm.   
More relevant for comparison with our work, in the same paper, a lower bound of $\Omega(\Delta^2 \log n)$ on the overhead to simulate a \congest round is shown. The lower bound applies even in a noiseless environment and regardless of randomization. 
Thus, our simulation is optimal modulo some poly-logarithmic factor. 
%
Another randomized simulation was previously presented in~\cite{ashkenazi2020brief} with overhead of $O(\Delta\min(n,\Delta^2)\log n)$ whp.


With respect to our study on MIS, the closest work is the deterministic protocol presented in~\cite{beauquier2018fast}, which runs in $O(\Delta^2 \log n + \Delta^3)$. Thus, our results improve by a factor of $\Delta/\polylog n$ for any $\Delta\in \omega(\log n)$, and match the running time (modulo poly-logarithmic factor) for $\Delta\in O(\log n)$. 

On the side of randomized MIS protocols, an upper bound of $O(\log^3 n)$ has been shown in~\cite{afek2013beeping} for the same beeping model, and faster with some additional assumptions.

Our network decomposition algorithm for Beeping Networks is heavily based on the protocol for the \congest model~\cite{ghaffari2021improved} that shows a $(\log n, \log^2 n)$ network decomposition in $O(\log^5 n)$ \congest rounds. For comparison, we complete the same network decomposition in $O(\Delta^2 \log^7 n)$ beeping rounds.

Most works in the Beeping Networks literature assume that all nodes start execution simultaneously, called global synchronization. For settings where that is not the case, it is possible to simulate global synchronization in Beeping Networks where nodes start at different times, as shown in~\cite{afek2013beeping,forster2014deterministic,dufoulon2018beeping,hounkanli2020global}.

\newcommand{\myboldmath}{}
\newcommand{\defn}[1]{{\textit{\textbf{\myboldmath #1}}}}

\vspace*{-1.5ex}
\section{Model, Notation, and Problems}
\label{sec:model}

\vspace*{-1ex}
We consider a communication network formed by $n$ devices with communication and computation capabilities, called \defn{nodes}. 
Each node has a unique \defn{ID}
from the range $[1,n^c]$
for some constant $c \geq 1$.\footnote{The availability of identifiers is essential in order to break symmetry in deterministic protocols, as pointed out in previous works on deterministic protocols in the Beeping model~\cite{beauquier2018fast,dufoulon2018beeping}.}
Nodes communicate by sending \defn{messages} among them. 
A message is composed of a binary sequence containing the source node ID, the destination node ID (if applicable), and the specific information to be sent. 
If the destination node receives the message from the source node, we say that the message was \defn{delivered}.
Each pair of nodes that are able to communicate directly (i.e., without relaying communication through other nodes) are said to be connected by a communication \defn{link} and are called \defn{neighbors}.
We assume that links are \defn{symmetric}, i.e., messages can be sent in both directions (delivery is restricted to the communication models specified below).
The network topology defined by the communication links is modeled with an undirected graph $G=(V,E)$ where $V$ is the set of nodes and $E$ is the set of links. 
\mam{If $E$ is such that for every pair of nodes $u,v\in V$ there is a path of links connecting $u$ and $v$ we say that the network is \defn{connected}.}
For each node $v\in V$, the set of neighbors of $v$ is called its \defn{neighborhood}, denoted as $N(v)$.
We assume that time is slotted in \defn{rounds} of communication. All nodes start running protocols simultaneously, i.e., the network is \defn{synchronous}.
We assume that computations take negligible time with respect to communication.
Thus, we measure algorithm performance in rounds.

\vspace*{-2ex}
\subsection{Communication Models}

\vspace*{-0.95ex}
\parhead{Beeping Networks~\cite{cornejo2010deploying}}
In this model, 
in each round each node can either \defn{beep} (send a signal) or \defn{listen} (do not send any signal). 
By doing so, nodes obtain the following \defn{communication channel feedback}. 
In any given round, a listening node \defn{hears} either \defn{silence} (no neighbor beeps) or \defn{noise} (one or more neighbors beep).
A listening node that hears noise cannot distinguish between a single beep and multiple beeps. 

Network protocols may use the channel feedback
(i.e. the temporal sequence of strings from \{``silence'',``noise''\})
to make decisions adaptively.
However, delivering messages is not straightforward because it requires sending (and receiving) the whole binary sequence of the message (according to some beeping schedule, possibly changing adaptively during the communication), somehow encoded with beeps. In that sense, protocols for Beeping Networks can be seen as radio network coding to cope with the communication restrictions (as in~\cite{efremenko2018interactive} to cope with noise). 

\vspace*{-2ex}
\begin{figure}[th]
\centering
\begin{subfigure}[htbp]{0.3\textwidth}
\centering
\vspace*{-2ex}
\includegraphics[scale=0.12]{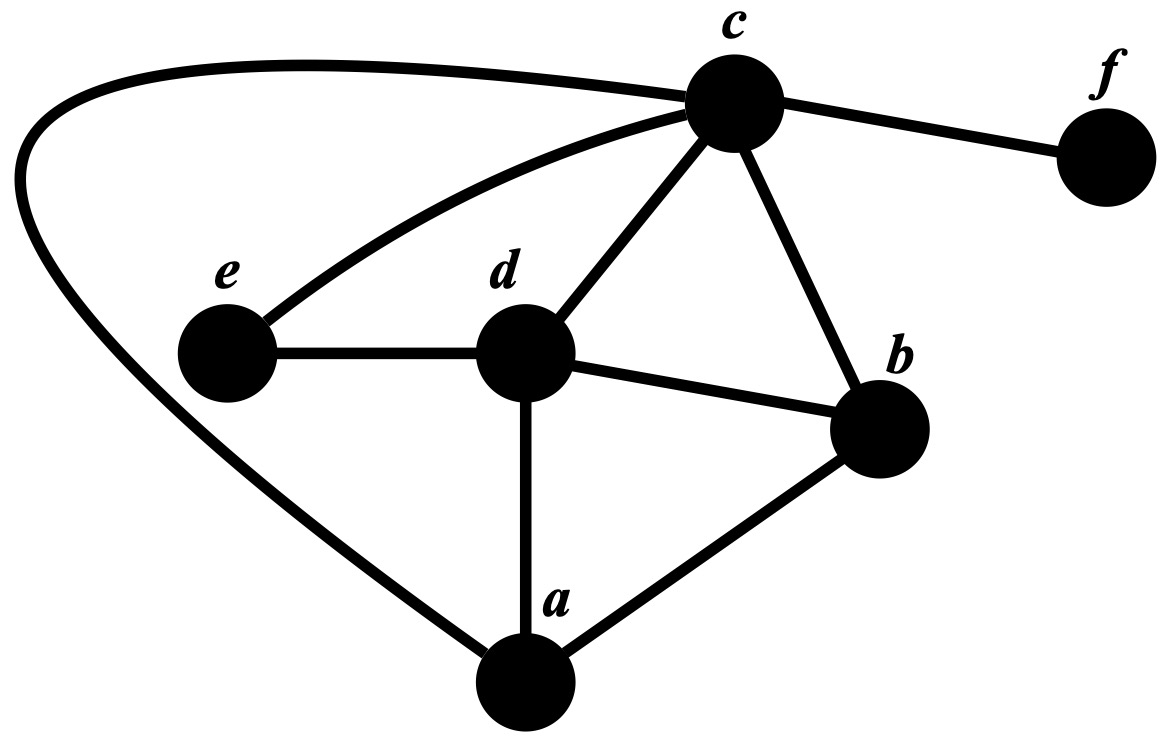}
\caption{All nodes listen. All nodes hear silence.}
\label{subfig:network}
\end{subfigure}
\hspace{0.1in}
\begin{subfigure}[htbp]{0.3\textwidth}
\centering
\includegraphics[scale=0.12]{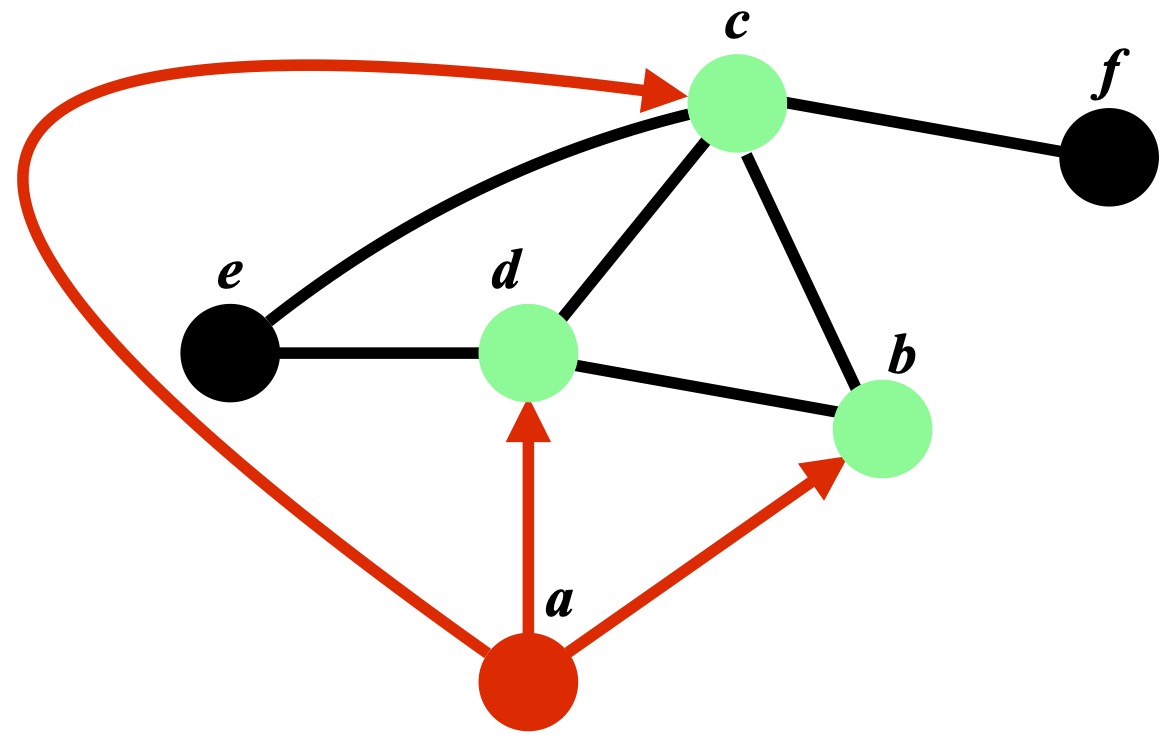}
\caption{Node $a$ beeps, $\{b,c,d,e,f\}$ listen. $\{b,c,d\}$ hear noise. $\{e,f\}$ hear silence.}
\label{subfig:1beep}
\end{subfigure}
\hspace{0.1in}
\begin{subfigure}[htbp]{0.3\textwidth}
\centering
\vspace*{-0.2ex}
\includegraphics[scale=0.12]{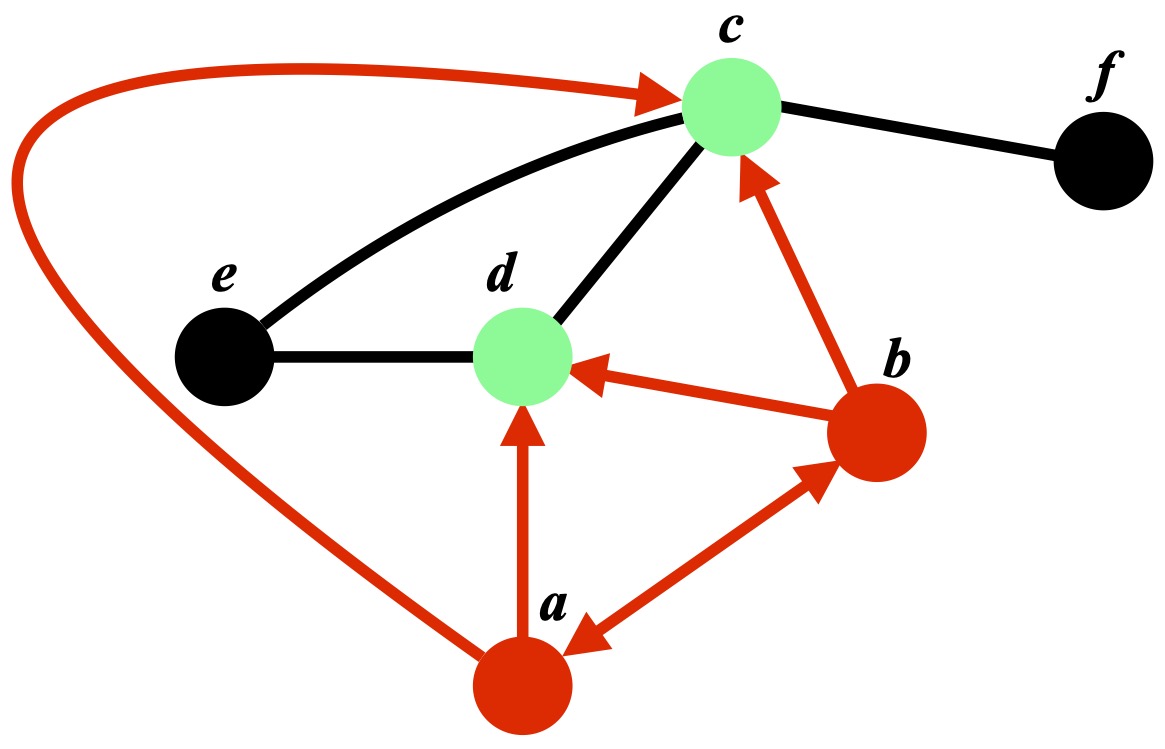}
\caption{Nodes $\{a,b\}$ beep, $\{c,d,e,f\}$ listen. $\{c,d\}$ hear noise. $\{e,f\}$ hear silence.}
\label{subfig:2beep}
\end{subfigure}
\vspace*{-1.5ex}
\caption{Beeping Network communication model example.}
\vspace*{-1.5ex}
\label{fig:BNmodel}
\end{figure}

\remove{
\parhead{Radio Networks with Collision Detection~\cite{chlamtac1985broadcasting}}
In this model, 
in each round each node can either \defn{transmit} (send a message) or \defn{receive} (do not send any message). 
By doing so, nodes obtain the following \defn{communication channel feedback}. 
In any given round a receiving node \defn{receives} either \defn{silence} (no neighbor transmits), or a \defn{successful transmission} (one neighbor transmits), or a \defn{collision} (two or more neighbors transmit).

Note, that having three states the channel feedback of this model is richer, potentially allowing protocols that adapt more effectively. Moreover, the length of the messages transmitted is not restricted, reducing the problem of message delivery to avoiding collisions. 

}

\parhead{\congest Networks~\cite{peleg2000distributed}}
In this model, 
in each round each node can send a (possibly different) message of $O(\log n)$ bits to each neighbor independently. All nodes receive the messages sent by their neighbors. That is, there are no collisions. In the \defn{\congest Broadcast} version of this model, each node can only broadcast the same message to all its neighbors in each round.

\vspace*{-1ex}
\subsection{Problems Studied}

Our main research question in this work is how to efficiently simulate a round of communication of a \congest Network protocol in a Beeping Network. 
We also study several distributed computing problems (of independent interest) in the context of Beeping Networks, and applications of our simulator that improve efficiency with respect to known solutions for Beeping Networks.

Before specifying the problems studied, we define the following notation for any Beeping Network with topology graph $G=(V,E)$.
A \defn{cluster} of nodes $C$ is a subset of nodes in $V$ (i.e. $C\in V$)~\footnote{Note that a cluster does not need to be connected.}.
We say that clusters $C_1$ and $C_2$ are \defn{neighboring clusters} if and only if there exist vertices $v_1 \in C_1$ and $v_2 \in C_2$ such that $\{v_1,v_2\} \in E$.
We say that a subgraph $G'$ of graph $G$ has a \defn{weak-diameter} $D$ if each vertex in $G'$ is at most $D$ distance away from every other vertex in $G$ (the original graph).\footnote{In contrast with the regular diameter, where each vertex in $G'$ is at most $D$ distance away from every other vertex in $G'$.
}
An \defn{independent set} is a set of nodes $S\subseteq V$ such that $\forall u,v \in S : \{u,v\}\notin E$.
A \defn{maximal independent set (MIS)} is an independent set that is not a subset of any other independent set. 

The definitions of all problems studied follow. 

\parhead{Local Broadcast} 
Given a Beeping Network with topology graph $G=(V,E)$, where each node $v\in V$ holds a message $m_v$, this problem is solved once $m_v$ is delivered to every node in $N(v)$. 

\parhead{Learning Neighborhood} 
Given a Beeping Network with topology graph $G=(V,E)$, the learning neighborhood problem is solved once every node $v\in V$ knows \mam{the ID of every node $u\in N(v)$.} 

\parhead{Cluster Gathering}
Given a Beeping Network with topology graph $G=(V,E)$, where 
each node $v\in V$ holds some data $d_v$, 
the set of nodes is partitioned in $k\geq 1$ \defn{clusters} as $\{C_1,C_2,\dots,C_k\}$, and 
for each cluster $C_i$, $i\in[k]$, there is a designated \defn{leader} node $l_{C_i}\in C_i$ and a Steiner tree of depth at most $O(\log^2 n)$ that spans the cluster $C_i$,
the cluster gathering problem is solved once all leaders have received the data of all nodes in their cluster. That is, for each $i\in [k]$, $l_i$ knows $d_v$, for all $v\in C_i$.

\parhead{Network Decomposition} 
Given a Beeping Network with topology graph $G=(V,E)$ and parameter integers $C$ and $D$,
the $(C,D)$-network decomposition problem is to find a partition of $V$ into clusters $\{C_1,C_2,\dots\}$ such that each cluster has weak-diameter at most $O(D)$, and each cluster can be assigned a color so that, 
for every pair of neighboring clusters $C_i,C_j$, 
the color of $C_i$ and $C_j$ are different,
and the number of colors used is in $O(C)$.

\parhead{\congest Simulation}
Given a Beeping Network with topology graph $G=(V,E)$, where each node $u\in V$ may hold a message $m_{u,v}$ of $O(\log n)$ bits that must be delivered to node $v\in N(u)$, the \congest simulation problem is solved once for every $u\in V$ and $v\in N(u)$, 
$m_{u,v}$ and the ID of $u$ has been delivered~to~$v$.

\parhead{Maximal Independent Set}
Given a Beeping Network with topology graph $G=(V,E)$, 
The Maximal Independent Set problem is solved when, for some such set,
 $S\subseteq V$, every node $v\in S$, $v$  knows that it is~in~$S$.

\remove{
\section{Model and notation}
\label{sec:model}






We consider a communication network represented by an undirected graph $G=(V,E)$ with $n$ nodes. Nodes represent devices, and edges represent pairs of devices that can 
communicate directly.
Each node has a unique ID from the range $[1,n^c]$ for some constant $c>1$.
Time is divided into synchronous rounds of communication, and algorithm performance is measured in rounds.

\subsection{Communication models in networks}

\parhead{Beeping Networks}
In the beeping model, in each round, each node can either \emph{listen} or \emph{beep} (send a signal). A node $v$ that listens may \emph{hear} either \emph{silence} or \emph{noise}. Silence is heard in a round if all neighbors of $v$ are listening. Noise is heard by $v$ if $v$ listens and at least one neighbor of $v$ beeps. A listening node cannot distinguish between a single beep and multiple beeps from different neighbors in the same round. A node that beeps cannot listen in the same round. A node may interpret the channel feedback it receives 
(a temporal sequence of strings from \{``silence'',``noise''\})
and change its behavior accordingly.

Note that in the beeping model, transmitting and receiving messages is not explicit. Each message can be represented as a binary sequence containing the source \pga{ID}, the destination ID (if applicable), and the content string. Learning such sequence requires beeping by the message source (according to some beeping schedule, possibly changing adaptively during the communication) and the receiver has to interpret the sequence of beeps heard during the communication to extract the right message and its sender ID from this sequence. 

\parhead{Radio Networks with collision detection}
In the Radio Networks with collision detection model, presented in~\cite{chlamtac1985broadcasting}, a message transmitted by node $u$ is received by node $v$ in a round $r$ if $\{u,v\}\in E$, $u$ is transmitting, $v$ is not transmitting, and for any other node $u'\neq u$, such that $\{u',v\}\in E$, $u'$ is not transmitting in round $r$. If $u$ and $u'$ transmit in the same round, we say that a \emph{collision} occurs at $v$. Nodes are able to distinguish collisions from the background noise present in the communication channel when no node transmits.



\parhead{\congest Networks}
In the \congest model, in each round, every node can exchange (transmit and receive simultaneously) $O(\log n)$ bits with each neighbor independently. All pairs of nodes can communicate simultaneously and there are no collisions.


\subsection{Studied problems}

We study several problems, including graph problems (e.g., MIS) and tools (e.g.,  the problem of network decomposition), as well as auxiliary problems of learning neighborhood, local broadcast, cluster gathering and distributed simulation of \congest round in the beeping model. Ultimately, our solutions are designed for the beeping model, which is more demanding (due to limited communication and channel feedback) than the \congest model.  

\noindent\textbf{Learning neighborhood.} A \emph{learning neighborhood} problem is the problem of finding all neighbors for all nodes in the graph. 
This can be done based on the feedback on the channel a node was receiving during the algorithm.

\noindent\textbf{Local broadcast.} A \emph{local broadcast} routine is an algorithm that all nodes in the graph perform concurrently. The input is some message $m_v$ at each node $v$. At the end of the local broadcast, for each node $v$, the contents of message $m_v$ are known to all neighbors of $v$.


\noindent 
{\bf Distributed simulation of any \congest round in the beeping model.}
Suppose every node has a possibly different message of logarithmic size to deliver to each of its neighbors. 
The goal is that each neighbor receives such a message, in the sense that it learns and stores a binary sequence consisting of neighbor ID and the message. 


\parhead{Maximal Independent Set}
Given a graph $G=(V,E)$, an \emph{independent set} is a set of nodes $S\subseteq V$ such that each pair of nodes in $S$ is not adjacent, that is, $\forall u,v \in S : \{u,v\}\notin E$.
A \emph{maximal} independent set (MIS) is an independent set that is not a subset of any other independent set. 
Given a network with topology represented by a graph $G=(V,E)$, the Maximal Independent Set problem is the problem of identifying the nodes in a maximal independent set $S\subseteq V$.

Before we describe the Network Decomposition problem -- a key tool in solving several graph problems in the \congest model -- we introduce some notation. 

\begin{definition}[Weak-diameter]
We say that a subgraph $G'$ of some graph $G$ has a \emph{weak-diameter} $D$ if each vertex in $G'$ is at most $D$ distance away from every other vertex in the \emph{original graph} $G$. Compare it to the regular diameter, where each vertex in $G'$ is at most $D$ distance away from every other vertex in the \emph{subgraph} $G$.
\end{definition}


\begin{definition}[neighboring clusters]
    We say that clusters $c_1$ and $c_2$ in graph $G=(V,E)$ are neighbors iff there exist vertices $v_1 \in c_1$ and $v_2 \in c_2$ such that $(v_1,v_2) \in E$.
\end{definition}


\noindent\textbf{Network decomposition.} 
A $(C,D)$ network decomposition of a graph $G=(V,E)$ is a partition of graph $G$ into clusters\footnote{Note, a cluster does not have to be connected.} $c_1,c_2,\dots$ such that each cluster has weak-diameter at most $O(D)$ and the clusters can be colored in at most $O(C)$ colors such that any two neighboring clusters have different~colors.

The problem of network decomposition that we solve below is the problem of finding a $(C,D)$ network decomposition for a given graph $G$ and parameters $C,D$.

\noindent\textbf{Cluster gathering.}
In short, in the \emph{cluster gathering} problem, a graph is partitioned into clusters. 
Every node $v$ in a cluster $C$ has data $d_v$. The task is to gather all the data from nodes $v \in C$ into the designated leader $l_C$ of cluster $C$, for all clusters $C$.

Specific assumptions and discussion of the cluster gathering problem can be found in Section~\ref{sec:gathering}.

}
\vspace*{-1ex}
\section{Initial Results}
\label{sec:primitives}

In this section, we present beeping protocols for four fundamental network problems, usually used as building blocks of more complex tasks. Namely, Local Broadcast, Cluster Gathering, Learning Neighborhood, and Network Decomposition. 
%
The following theorems establish the performance of our protocols. The details of the algorithms as well as the proofs of the theorems are left to Section~\ref{sec:prim_details}.

Recall that IDs of nodes come from the range $[1,n^c]$, for some constant $c \geq 1$.

\begin{restatable}[]{theorem}{localbroadcastthm} 
\label{th:local_broadcast}
    Let $\cN$ be a Beeping Network with 
    $n$ nodes, where each node 
    $v$
    knows $n$, parameter~$c$, the maximum degree $\Delta$, and its neighborhood $N(v)$, and holds a message $m_v$ of length at most $B>0$.
    There is a deterministic distributed algorithm that solves local broadcast on $\cN$ in $O(B\Delta^2 \log n)$ beeping~rounds.
\end{restatable}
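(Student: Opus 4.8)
The plan is to reduce local broadcast to the existence of a suitable \emph{strongly selective family} (equivalently, a superimposed code) over the ID universe, and then to ``spell out'' each node's message inside the isolating slots that such a family guarantees. Recall that a family $\mathcal{F}=(S_1,\dots,S_t)$ of subsets of $[n^c]$ is $(n^c,k)$-strongly-selective if for every $A\subseteq[n^c]$ with $|A|\le k$ and every $a\in A$ there is an index $j$ with $S_j\cap A=\{a\}$. Classical constructions (and the matching lower bound of~\cite{CLEMENTI2003337}) give such a family of length $t=O(k^2\log(n^c))$; taking $k=\Delta+1$ and using $c=O(1)$ yields $t=O(\Delta^2\log n)$. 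Since $\mathcal{F}$ depends only on the publicly known quantities $n,c,\Delta$, every node can deterministically compute the same canonical family (for instance, the lexicographically first one of the guaranteed length), so no communication is needed to agree on it.

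First I would fix this family and identify each node $u$ with the codeword $x_u\in\{0,1\}^t$ given by $x_u[j]=1$ iff $u\in S_j$. The schedule consists of $t$ blocks, one per coordinate of $\mathcal{F}$, each block split into $B$ sub-slots. In sub-slot $b$ of block $j$, node $u$ beeps if and only if $x_u[j]=1$ and the $b$-th bit of $m_u$ equals $1$; otherwise $u$ listens. This schedule is \emph{non-adaptive}: a node's actions depend only on its own ID (through $x_u$) and on $m_u$, never on channel feedback. The total number of rounds is $tB=O(B\Delta^2\log n)$, as claimed.

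Next I would argue correctness, i.e.\ that every listener decodes each neighbor's message. Fix a node $v$ and a neighbor $u\in N(v)$, and apply strong selectivity to the set $N(v)\cup\{v\}$, whose size is at most $\Delta+1$: there is a block $j$ with $S_j\cap(N(v)\cup\{v\})=\{u\}$. Two facts follow. Since $v\notin S_j$, node $v$ listens throughout block $j$ (it beeps only in blocks $j'$ with $x_v[j']=1$), so the half-duplex constraint is not an issue. And since $u$ is the only neighbor of $v$ lying in $S_j$, the only beeps $v$ can hear during block $j$ are those of $u$. Hence in sub-slot $b$ of block $j$ the node $v$ hears noise exactly when the $b$-th bit of $m_u$ is $1$, so $v$ recovers all $B$ bits of $m_u$. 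Because $v$ knows $N(v)$ and can itself compute $\mathcal{F}$, it can locate, for each neighbor $u$, an isolating block $j=j(v,u)$ and read off $m_u$ from that block; this is exactly the decoding rule. Delivering $m_u$ to every $u\in N(v)$, and doing so for all $v$ in parallel, solves local broadcast.

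The main obstacle I anticipate is combinatorial rather than algorithmic: establishing (or quoting) that a strongly selective family for sets of size $\Delta+1$ has length $O(\Delta^2\log n)$, since this is what pins down the running time. The only genuinely beeping-specific subtlety is the half-duplex constraint, and I would handle it, as above, by selecting within $N(v)\cup\{v\}$ rather than within $N(v)$ alone, which forces the isolating block to leave $v$ silent. Everything else---the per-bit expansion into $B$ sub-slots and the receiver-side decoding from a known neighborhood---is routine once the family is in hand.
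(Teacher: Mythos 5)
Your proof is correct and follows essentially the same route as the paper's: a non-adaptive schedule driven by a strongly selective family of length $O(\Delta^2\log n)$, isolating slots used to read one bit per neighbor at a receiver that knows $N(v)$ and the family, and a factor-$B$ blow-up for $B$-bit messages (your block-major ordering of $t$ blocks with $B$ sub-slots versus the paper's $B$ repetitions of the whole selector is immaterial). The one substantive difference is that you take $k=\Delta+1$ and apply selectivity to $N(v)\cup\{v\}$, whereas the paper uses an $(n^c,\Delta)$ strong selector applied to $N(v)$ alone; your variant is actually the more careful one, since under the paper's scheme the isolating round for a neighbor $u$ of $v$ may be a round in which $v$ itself belongs to the selecting set and beeps (the model is half-duplex, so $v$ would then miss that bit of $m_u$), and the selector guarantees no second isolating round. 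Forcing the isolating block to leave $v$ silent, as you do, closes this subtlety at no asymptotic cost.
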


\vspace*{-1.5ex}
\begin{restatable}[]{theorem}{learningneighthm} 
\label{th:learning_neighbourhood}
    Let $\cN$ be a Beeping Network with 
    $n$ nodes, where each node 
    $v$
    knows $n$ and parameter~$c$.
    There is a deterministic distributed algorithm that solves learning neighborhood  on $\cN$ in $O(\Delta^2 \log^2 n)$~beeping~rounds.
\end{restatable}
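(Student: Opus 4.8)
The plan is to reduce learning the neighborhood to the classical tool of \emph{strongly selective families} (superimposed codes) indexed by node IDs, combined with a collision-detecting bit encoding so that a listening node can decide whether what it hears is a single genuine ID or the superposition of several. Recall that a family $\mathcal{F}=\{F_1,\dots,F_L\}$ of subsets of the ID universe $[1,n^c]$ is \emph{$k$-strongly selective} if for every $S$ with $|S|\le k$ and every $x\in S$ there is an index $i$ with $F_i\cap S=\{x\}$; such families can be constructed deterministically from $n,c,k$ with length $L=O(k^2\log n)$ (matching, up to a $\log\Delta$ factor, the $\Omega(\Delta^2\log_\Delta n)$ barrier referenced above). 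Since no node is assumed to know $\Delta$, I would run the protocol in phases for geometrically increasing guesses $\hat\Delta=1,2,4,\dots$, using in phase $j$ the family that is $\hat\Delta$-strongly selective for $\hat\Delta=2^j$; because every node knows $n$ and $c$, all nodes agree on the same family in each phase with no communication.

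Within a phase I would process the $L$ members of $\mathcal{F}$ one at a time, expanding each into a block of $O(\log n)$ beeping rounds. In the block for $F_i$, every node $v$ whose ID lies in $F_i$ \emph{beeps out its ID} using a balanced (Manchester) encoding---bit $1$ as beep-then-silence, bit $0$ as silence-then-beep---while every node $v$ with $\mathrm{ID}(v)\notin F_i$ simply listens throughout the block. A listening node $u$ then inspects the two-round pairs it heard: it accepts the block and records the decoded ID as a neighbor exactly when every pair contains precisely one beep, and discards the block otherwise. The key correctness claim, which I would prove next, is that this rule yields perfect collision detection: a single transmitter produces exactly one beep per pair, so a clean single ID is always accepted; whereas if two or more neighbors transmit, their IDs differ in some bit, and that position produces a pair with two beeps (the ``$11$'' pattern), impossible for a lone transmitter, so $u$ rejects. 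Hence $u$ never records a phantom neighbor, and whenever it does record an ID it is the ID of an actual transmitting neighbor.

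It remains to argue that all true neighbors are eventually recorded and to bound the time. For a node $u$, apply the strongly selective property to the set $N(u)\cup\{u\}$, of size at most $\Delta+1$: once $\hat\Delta\ge\Delta+1$, for each neighbor $v\in N(u)$ there is an index $i$ with $F_i\cap(N(u)\cup\{u\})=\{v\}$. In the corresponding block $u\notin F_i$, so $u$ listens, and $v$ is the unique transmitting neighbor, so $u$ hears $v$'s ID with no collision and records it. Thus by the end of phase $\lceil\log(\Delta+1)\rceil$ every node knows its entire neighborhood. Summing the costs, phase $j$ contributes $O(4^j\log n)$ members times $O(\log n)$ rounds per member, and the geometric sum over $j\le\lceil\log(\Delta+1)\rceil$ is dominated by its last term, giving $O(\Delta^2\log^2 n)$ beeping rounds in total, as claimed. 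The main obstacle is precisely the collision-detection step---ensuring a listener cannot be fooled by a superposition of several IDs into recording a non-neighbor---which the balanced encoding together with the uniqueness of IDs resolves; a secondary subtlety is coping with the unknown $\Delta$, handled by the doubling schedule, whose cost stays $O(\Delta^2\log^2 n)$ because the running time is charged to the first guess reaching $\Delta$.
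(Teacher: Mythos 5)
Your proposal is correct and follows essentially the same route as the paper: a $\Delta$-strongly selective family (the paper's $(n^c,\Delta)$ ``strong selector'' of length $O(\Delta^2\log n)$), each member expanded into a block of $O(\log n)$ rounds in which selected nodes beep their IDs under the balanced beep-then-silence/silence-then-beep encoding, with a listener rejecting any block containing a two-beep pair --- exactly the paper's collision-detection rule. Two points where you are actually more careful than the paper's write-up: first, the paper's proof silently uses a selector parameterized by $\Delta$ even though the theorem only assumes knowledge of $n$ and $c$; your doubling schedule over guesses $\hat\Delta=2^j$ repairs this, and the geometric sum argument correctly keeps the cost at $O(\Delta^2\log^2 n)$ (with the standard caveat that, $\Delta$ being unknown, nodes cannot locally detect termination --- the neighborhood knowledge is correct within the stated bound, which is the usual reading of such statements). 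Second, you apply selectivity to $N(u)\cup\{u\}$ rather than $N(u)$, guaranteeing the listener $u$ is not itself transmitting in the isolating block; the paper applies the selector to $N(r)$ alone and glosses over this. Your collision-detection argument is also airtight: if every two-round pair contains exactly one beep, all transmitters must agree on every bit, which forces a single transmitter by uniqueness of IDs, so no phantom neighbor is ever recorded.
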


\vspace*{-1.5ex}
\begin{restatable}[]{theorem}{clustergatherthm} 
\label{th:cluster_gathering}
    Let $\cN$ be a Beeping Network with 
    $n$ nodes, where each node 
    $v$
    knows $n$, parameter $c$, the maximum degree $\Delta$, and its neighborhood $N(v)$.
    There is a deterministic distributed algorithm that solves cluster gathering on $\cN$ in $O(\Delta^2 \log^4 n)$ beeping rounds.
\end{restatable}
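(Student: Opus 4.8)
The plan is to root each cluster's given Steiner tree at its leader $l_{C_i}$ and run a \emph{pipelined convergecast} (upcast), so that every node's data $d_v$ climbs the tree toward the root. Since each node already knows its neighborhood $N(v)$ by hypothesis, and the spanning Steiner tree is part of the input, every node can identify which of its neighbors is its tree-parent and which are its tree-children; Steiner relay nodes that do not belong to $C_i$ simply forward what they receive. Because the clusters partition $V$, each node lies in exactly one tree, so all $k$ convergecasts can be driven in parallel by a single shared sequence of beeping phases.

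First I would realize each convergecast step with the Local Broadcast primitive of Theorem~\ref{th:local_broadcast}: one local-broadcast phase delivers every node's $B$-bit message to all of its neighbors at once, and in particular to its tree-parent, at a cost of $O(B\Delta^2\log n)$ beeping rounds. Crucially, because the primitive delivers each node's message to \emph{all} its neighbors simultaneously, a parent collects one item from \emph{each} of its children in a single phase. Taking $B=O(\log n)$ — room for one data item tagged with the ID of its origin — each phase costs $O(\Delta^2\log^2 n)$. A node keeps only the items arriving from its own children and discards whatever it hears across a cluster boundary, so the shared schedule produces no cross-cluster interference.

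To keep every transmitted message at $O(\log n)$ bits I would pipeline the upcast rather than forward whole subtrees at once: in each phase every node pushes one still-pending item from its buffer up to its parent, draining the buffer one item per phase while absorbing one new item per child. Scheduling level by level from depth $O(\log^2 n)$ up to the root, the number of phases needed for the leader to receive everything is $O(D + M)$, where $D=O(\log^2 n)$ is the tree depth and $M$ is the number of items that must ultimately reach $l_{C_i}$ (the standard upcast bound). When $D$ and $M$ are both $O(\log^2 n)$, this is $O(\log^2 n)$ phases at $O(\Delta^2\log^2 n)$ each, giving the claimed $O(\Delta^2\log^4 n)$ beeping rounds.

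The hard part will be exactly this congestion near the leader: data accumulates as it climbs, so the busiest root-ward edge must carry as many items as its subtree holds, and I must argue that the pipelined schedule drains all $O(\log^2 n)$ levels within $O(\log^2 n)$ phases for the payload at hand. Establishing that the per-leader gathered volume is $O(\log^2 n)$ items (so that $D+M=O(\log^2 n)$), and verifying that the one-item-per-phase buffering schedule never stalls a parent whose children contend for the same upward edge, is where the real work lies; the remaining bookkeeping (tagging items with IDs, ignoring out-of-cluster beeps, and composing the $O(\log^2 n)$ local-broadcast phases) is routine given Theorem~\ref{th:local_broadcast}.
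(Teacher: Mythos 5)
Your plan has a genuine gap, and it is exactly the one you flagged at the end: nothing in the hypotheses bounds the number of items $M$ that must reach a leader. A cluster can contain $\Theta(n)$ nodes (the problem only promises a partition into clusters with depth-$O(\log^2 n)$ Steiner trees), so your pipelined upcast bound $O(D+M)$ phases degenerates to $\Theta(n)$ phases, and the hoped-for ``per-leader gathered volume is $O(\log^2 n)$ items'' is simply false in general. There is no way to ship all raw data items $d_v$ to the leader within $O(\log^2 n)$ local-broadcast phases when the cluster is large; the congestion on the root-ward edges is real and cannot be scheduled away.

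The paper avoids this by \emph{aggregating} rather than relaying: in each step every node performs one local broadcast of an $O(\log n)$-bit message, and a parent \emph{combines} the values received from its children (e.g., sums numbers smaller than $n$) into a single $O(\log n)$-bit message before forwarding. Because the payload per node per step never grows, the convergecast completes in exactly tree-depth many steps, i.e.\ $O(\log^2 n)$ steps, each costing $O(\Delta^2\log^2 n)$ beeping rounds by Theorem~\ref{th:local_broadcast}, giving $O(\Delta^2\log^4 n)$ total. This is also how the primitive is consumed downstream (gathering \emph{counts} of proposals and tokens in the network-decomposition algorithm), so the intended guarantee is aggregate gathering, not delivery of all raw items. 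A secondary inaccuracy in your setup: a node lies in exactly one cluster but may serve as a Steiner relay in up to $O(\log n)$ different Steiner trees, so ``each node lies in exactly one tree'' does not hold; the paper's accounting ($O(\log n)$ bits passed per node \emph{per tree it is in}) is what keeps the overlapping trees affordable.
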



\vspace*{-1.5ex}
\begin{restatable}[]{theorem}{networkdecompthm} 
\label{thm:local-decomposition}
    Let $\cN$ be a Beeping Network with set $V$ of $n$ nodes, where each node $v\in V$ knows $n$, parameter $c$ and the maximum degree $\Delta$.
    There is a deterministic distributed algorithm that computes a $(\log n, \log^2 n)$-network decomposition of $\cN$ in $O(\Delta^2 \log^8 n)$ beeping rounds.
\end{restatable}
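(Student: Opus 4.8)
The plan is to take the deterministic \congest network decomposition of~\cite{ghaffari2021improved}, which produces a $(\log n,\log^2 n)$-decomposition in $O(\log^5 n)$ \congest rounds, and to re-implement it directly over the beeping primitives established above rather than simulating it round by round with the general simulator (Theorem~\ref{thm:congest-sim}). The structure I would exploit is that this \congest algorithm maintains, at all times, a partition of the still-uncolored nodes into clusters, each with a leader and a weak-diameter-$O(\log^2 n)$ Steiner tree (in the original graph $G$) joining the cluster to its leader, and that its only communication consists of (i) convergecasts of $O(\log n)$-bit summaries up these trees to the leaders, (ii) broadcasts of the leaders' decisions back down, and (iii) one-hop exchanges across cluster boundaries through which a cluster discovers the identifiers of adjacent clusters and decides whether to grow into them or retreat.

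Each of these three patterns maps onto a primitive we already have. First I would run Learning Neighborhood (Theorem~\ref{th:learning_neighbourhood}) once, in $O(\Delta^2\log^2 n)$ rounds, so that every node knows $N(v)$, a precondition of both the Local Broadcast and Cluster Gathering routines. Then each convergecast (i) is one invocation of Cluster Gathering (Theorem~\ref{th:cluster_gathering}), which delivers to every leader the data of its whole cluster in $O(\Delta^2\log^4 n)$ beeping rounds and for all clusters simultaneously; the downward broadcast (ii) is realized symmetrically by the same Steiner-tree machinery; and the boundary exchange (iii) is one call to Local Broadcast (Theorem~\ref{th:local_broadcast}) at cost $O(\Delta^2\log n)$ per $O(\log n)$-bit message. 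Because the beeping primitives assume a fixed, known cluster-and-tree structure while the decomposition algorithm is precisely what builds and modifies that structure, I would fold into each boundary exchange the maintenance of the trees: when a cluster absorbs a boundary node, that node (and the out-of-cluster relay nodes a weak-diameter tree passes through) must learn its parent and role so that the tree is consistent for the next gathering.

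For the running time, the dominant cost is Cluster Gathering. Mapping the communication of the $O(\log^5 n)$-round \congest algorithm onto the primitives, the convergecasts and downward broadcasts amount to $O(\log^4 n)$ invocations of Cluster Gathering, with the boundary Local Broadcasts and the one-time Learning Neighborhood contributing only lower-order terms. Since each Cluster Gathering costs $O(\Delta^2\log^4 n)$ and runs for all clusters in parallel, the total is $O(\log^4 n)\cdot O(\Delta^2\log^4 n)=O(\Delta^2\log^8 n)$. I would treat the precise log-exponent accounting as routine, since it is a mechanical mapping of the \congest round budget onto the primitive costs rather than a conceptual step.

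The main obstacle I anticipate is fidelity of the cluster structure under the beeping model. In \congest the weak-diameter Steiner trees and the inter-cluster coordination come essentially for free, but here every relay node, in particular the out-of-cluster nodes that a weak-diameter tree traverses, must be correctly enrolled in each gather and scatter, and the operations of many clusters must proceed concurrently without their beeps corrupting one another. The parallel-clusters guarantee of Theorem~\ref{th:cluster_gathering} absorbs most of this interference, so the real work is establishing two invariants: (a) the partial clustering and its trees can be maintained entirely through leader-coordinated gatherings and boundary broadcasts, and (b) the data a leader must gather in any one step stays $O(\polylog n)$ bits, keeping each Cluster Gathering within its stated bound. Proving these invariants, and verifying that the decisions computed by the leaders reproduce exactly the \congest algorithm's behavior, is where the substance of the argument lies.
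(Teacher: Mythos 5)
Your proposal takes essentially the same route as the paper: run Learning Neighborhood once as preprocessing, then re-implement the network decomposition of~\cite{ghaffari2021improved} in the beeping model by replacing each of its communication patterns with the primitives --- Local Broadcast for the proposal/boundary/stalling messages and Cluster Gathering on the maintained weak-diameter-$O(\log^2 n)$ Steiner trees for the leader convergecasts and downcasts --- with correctness inherited because only the communication layer changes. The only divergence is bookkeeping: the paper charges $O(\Delta^2\log^6 n)$ per step for the gathering (each node lies in up to $O(\log n)$ Steiner trees) over the $O(\log n)\cdot O(\log n)=O(\log^2 n)$ steps of the algorithm, rather than your $O(\log^4 n)$ invocations of an $O(\Delta^2\log^4 n)$ gathering (the algorithm in fact performs only $O(1)$ gatherings per step, so $O(\log^2 n)$ invocations in total), but both tallies arrive at the stated $O(\Delta^2\log^8 n)$ bound.
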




\vspace*{-3ex}
\section{Simulation of a \congest Round in Beeping Networks}
\label{sec:main-simulation}

Unfortunately, not all efficient graph algorithms in the \congest networks have the property of always broadcasting the same (short) message to every neighbor, which we exploit in Section~\ref{sec:local-broadcast}.\footnote{%
Note that in the LOCAL model, where the sizes of messages are of second importance (as long as they are polynomial), nodes can combine individual messages into one joint message and send it to all neighbors.}
In this section, we present a deterministic distributed algorithm that simulates a round of {\em any algorithm in the \congest model}, even if the algorithm sends different messages to neighbors.
It is only somewhat (polylogarithmically) slower than the more restricted one
(local broadcast, which required a node to send the same message to all its neighbors),
given in Theorem~\ref{th:local_broadcast} and Section~\ref{sec:local-broadcast}, but it is adaptive and uses heavier machinery.
  %
 The novel construction is built hierarchically using the known family of code called ``avoiding selectors''. This, intuitively, already says ``when to beep.'' However, it is still possible that, for example, when two neighbors of   some node $v$ 
 send a (different) message, of multiple bits per message, node $v$ will receive a ``message'' that is a logical OR of the two.
This is efficiently resolved by the new adaptive algorithm by employing a 3-stage handshake procedure, which sends pieces of the code that now serve in identifying what the IDs and messages are; it allows to spot overlapping transmissions from more than one neighbor and successfully decode those that do not overlap. Intuitively, each stage is ``triggered'' by a different level of the code.


\vspace*{-2ex}
\paragraph{Preliminaries and Challenges.}
Suppose every node has a possibly different message to deliver to each of its neighbors. We could use the algorithm from Section~\ref{sub:neighbourhood} to learn neighbors' IDs first in $O(\Delta^2\log^2 n)$ beeping rounds.
W.l.o.g., assume that each message from a node $v$ to a node $w$ has $O(\log n)$ bits
(otherwise, 
the bound on the time complexity is increased by a factor of $\cM/\log n$, where $\cM$ is the maximum size of a single message).
Simulation of a \congest round faces the following challenges.

\noindent
{\em Challenge 1.}
A node could try to compute its beeping schedule to avoid overlapping with other neighbors of the receiving node. However, it requires knowing at least $2$-hop neighborhood, which is costly. 
A node could try to learn first the IDs of its $1$-hop neighbors, and then broadcast them, one after another, using the local broadcast algorithm, 
but since there could be $\Theta(\Delta)$ such IDs (each represented by $O(\log n)$ bits), the overall time complexity would be $O(\Delta^3 \log^3 n)$, by Theorem~\ref{th:local_broadcast}.
Instead, our algorithm uses specific codes, called avoiding selectors (see Definition~\ref{def:avoid-selector}), to assure partial progress in information exchange in periods that sum up to $\Theta(\Delta^2 \polylog n \log \Delta)$.

\noindent
{\em Challenge 2.}
A node has to choose which of its input messages to beep at a time or find a more complex beeping code to encode many of its input messages. 
If it chooses ``wrongly,'' the message could be ``jammed'' by other beeping neighbors of the potential receiver.
To overcome this, avoiding selectors ensures that many nodes ``announce'' themselves successfully (i.e., without interference) to many of their neighbors, and these ``responders'' use avoiding selectors to respond. Once an announcer hears the ID of its responder, the handshaking procedure allows them to fix rounds for their point-to-point, non-interrupted communication.


The abovementioned avoiding selectors for $n$ nodes are parameterized by two numbers, $k,\ell$, corresponding to the number of competing neighbors/responders versus the other (potentially interrupting) neighbors:

\begin{definition}[Avoiding selectors]
\label{def:avoid-selector}
    A family $\mathcal{F}$ of subsets of $[n]$ 
    is called an \emph{$(n,k,\ell)$-avoiding selector}, where $1\le \ell < k\le n$, if for every non-empty subset $S \in [n]$ such that $|S| \leq k$ and for any subset $R\subseteq S$ of size at most $\ell$, there is an element $a \in S\setminus R$ for which there exists a set $F \in \mathcal{F}$ such that $|F \cap S| = \{a\}$.
\end{definition}

The following fact follows directly from Definition~\ref{def:avoid-selector}, see also \cite{BonisGV05,ChlebusK05}.

\begin{fact}
\label{fact:avoiding-selectors}
Suppose we are given an $(n,k,\ell)$-avoiding selector $\mathcal{F}$ and a set $S$ of size at most $k$.
Then, the number of elements in $S$ not ``selected'' by selector $\mF$ (i.e., for which there is no set in the selector that intersects $S$ on such singleton element) is smaller than $k-\ell$.
\end{fact}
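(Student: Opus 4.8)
The plan is to translate the defining property of the avoiding selector into a statement about a fixed set of ``selected'' elements, and then argue by a short counting/contradiction. First I would fix the set $S$ with $|S|\le k$ (the degenerate case $S=\emptyset$ being immediate, since then there are $0$ unselected elements and $k>\ell\ge 1$). Define $W\subseteq S$ to be the set of elements $a\in S$ that are \emph{selected}, i.e., for which some $F\in\mathcal{F}$ satisfies $F\cap S=\{a\}$, and let $U=S\setminus W$ be the unselected elements. The crucial point to record is that whether an element is selected depends only on $S$ (and $\mathcal{F}$), not on any auxiliary set $R$, so $W$ is well defined once $S$ is fixed. The goal is then exactly to show $|U|<k-\ell$.

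The key step is to reread Definition~\ref{def:avoid-selector} for this fixed $S$: it asserts that for \emph{every} $R\subseteq S$ with $|R|\le\ell$ there exists a selected element $a\in S\setminus R$. In the language of $W$, this says precisely that $W\not\subseteq R$ for every such $R$. I would then deduce $|W|\ge \ell+1$ by contradiction: if instead $|W|\le\ell$, then the choice $R=W$ is admissible (indeed $W\subseteq S$ and $|W|\le\ell$), and for this $R$ the set $S\setminus R$ contains no selected element at all, contradicting the selector property.

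Finally, combining $|S|\le k$ with $|W|\ge\ell+1$ yields $|U|=|S|-|W|\le k-(\ell+1)=k-\ell-1<k-\ell$, which is the claimed bound.

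I do not expect a genuine obstacle, as the Fact is essentially a repackaging of the definition. The only subtlety worth stating carefully is the direction of the quantifiers: since being ``selected'' is a property of $S$ alone, the universally quantified $R$ in the definition may be specialized to the single convenient choice $R=W$, and it is exactly this specialization that converts the existential guarantee of the selector into the desired upper bound on the number of unselected elements.
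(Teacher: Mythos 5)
Your proof is correct and matches the paper's treatment: the paper gives no explicit argument, stating only that the Fact ``follows directly from Definition~\ref{def:avoid-selector}'', and your specialization of the universally quantified $R$ to the set $W$ of selected elements (forcing $|W|\ge \ell+1$, hence at most $|S|-(\ell+1)\le k-\ell-1$ unselected elements) is exactly that direct derivation, equivalent to the contrapositive form where one takes $R=S\setminus U$ assuming $|U|\ge k-\ell$. Nothing further is needed.
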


\begin{theorem}[\cite{BonisGV05,ChlebusK05}]
\label{thm:avoiding-selectors}
There exists an $(n,k,\ell)$-avoiding selector of length $O\left(\frac{k^2}{k-\ell}\log n\right)$, and moreover, an $(n,k,\ell)$-avoiding selector of length $O\left(\frac{k^2}{k-\ell}\text{\em\ polylog } n\right)$ can be efficiently deterministically constructed (in polynomial time of $n$) for some polylogarithmic function $\text{\em\ polylog } n$, locally by each~node.
\end{theorem}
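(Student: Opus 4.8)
The plan is to prove the two assertions separately: the existential length bound by the probabilistic method, and the explicit construction by importing the code machinery of the cited works. Throughout, it suffices to verify the selector property for sets $S$ of the maximal size $|S|=k$, which is the binding case for the claimed length (and the case underlying Fact~\ref{fact:avoiding-selectors} and its applications). First I would form a random family $\mathcal{F}=\{F_1,\dots,F_m\}$ by placing each element of $[n]$ into each $F_j$ independently with probability $p=1/k$. Fix a set $S$ with $|S|=k$ and an ``already avoided'' set $R\subseteq S$ with $|R|=\ell$ (the hardest choice, since larger $R$ leaves fewer targets), and call the pair $(S,R)$ \emph{bad} if no $F_j$ isolates an element of $S\setminus R$, that is, if no $F_j$ satisfies $F_j\cap S=\{a\}$ for some $a\in S\setminus R$. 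For a single $F_j$ the events $\{F_j\cap S=\{a\}\}$ over distinct $a$ are disjoint, so the probability that $F_j$ isolates some $a\in S\setminus R$ equals $(k-\ell)\,p(1-p)^{k-1}$; with $p=1/k$ and the standard inequality $(1-1/k)^{k-1}\ge 1/e$ this is at least $(k-\ell)/(ek)$.

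Next, since the $F_j$ are independent, $\Pr[(S,R)\text{ bad}]\le \left(1-\frac{k-\ell}{ek}\right)^{m}\le \exp\!\left(-\frac{(k-\ell)\,m}{ek}\right)$. The number of pairs $(S,R)$ is at most $\binom{n}{k}\binom{k}{\ell}\le (2n)^{k}$, whose logarithm is $O(k\log n)$. A union bound over all pairs then makes the failure probability strictly less than $1$ as soon as $\frac{(k-\ell)\,m}{ek}$ exceeds $c\,k\log n$ for a suitable constant $c$, i.e.\ for $m=\Theta\!\left(\frac{k^{2}}{k-\ell}\log n\right)$. This proves existence of an avoiding selector of the stated length; equivalently it yields Fact~\ref{fact:avoiding-selectors}, that every $k$-set has more than $\ell$ isolated elements.

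For the explicit, deterministic, polynomial-time construction I would invoke the selector constructions of \cite{BonisGV05,ChlebusK05}. Note that a direct derandomization of the argument above by the method of conditional expectations is not obviously efficient: the number of bad pairs is $\binom{n}{k}\binom{k}{\ell}$, which is super-polynomial in $n$ once $k$ grows, so the natural pessimistic estimator cannot be evaluated in polynomial time. This is exactly the obstacle, and it is overcome in the cited works by explicit algebraic codes --- for instance cover-free/superimposed codes obtained by concatenating Reed--Solomon codes, or disperser/extractor-based selectors --- which attain length $O\!\left(\frac{k^{2}}{k-\ell}\,\polylog n\right)$ deterministically in time polynomial in $n$. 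Since the resulting family is a fixed deterministic function of the publicly known parameters $n,k,\ell$ (and of the chosen polylog function), every node computes the identical family locally, with no communication. The routine part is thus the probabilistic existence bound; the genuine difficulty, which I would cite rather than reprove, is matching that length up to only a polylogarithmic factor while remaining both deterministic and polynomial-time.
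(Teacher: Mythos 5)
The paper offers no proof of Theorem~\ref{thm:avoiding-selectors} at all: it is imported from \cite{BonisGV05,ChlebusK05}. Measured against that, your proposal does the natural thing, and the probabilistic core is correct: with $p=1/k$ the isolation events over distinct $a\in S\setminus R$ within one $F_j$ are disjoint, giving per-set success probability at least $(k-\ell)p(1-p)^{k-1}\ge (k-\ell)/(ek)$; independence over the $m$ sets gives failure probability at most $\exp\left(-(k-\ell)m/(ek)\right)$ per pair; and the union bound over at most $\binom{n}{k}\binom{k}{\ell}\le (2n)^{k}$ pairs yields $m=O\left(\frac{k^{2}}{k-\ell}\log n\right)$. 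Restricting to $|R|=\ell$ is also legitimate, since enlarging $R$ inside a fixed $S$ only shrinks $S\setminus R$, and $\ell<k$ keeps it non-empty. Deferring the deterministic polynomial-time construction to the cited works is exactly what the paper does, and your observation that conditional expectations is not obviously polynomial here (super-polynomially many bad pairs) is apt.

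The one step you assert without justification --- ``it suffices to verify the property for $|S|=k$'' --- needs care, because under Definition~\ref{def:avoid-selector} as literally written the reduction is false, and in fact the literal definition is unattainable at the claimed length. For $|S|=k'<k$ the definition quantifies over all $R\subseteq S$ with $|R|\le\ell$; taking $k'=\ell+1$ and $R$ equal to all of $S$ but one element would force every element of every $(\ell+1)$-set to be isolated, i.e., an $(n,\ell+1)$-strong selector, whose length is known to be $\Omega\left(\min\left\{n,\ell^{2}\log n/\log\ell\right\}\right)$; for, say, $\ell=k/2$ this exceeds the claimed $O(k\log n)$ once $k$ is large. So any correct proof must read the definition as the cited works do, for $|S|=k$ (your reading), and what survives for smaller sets is precisely Fact~\ref{fact:avoiding-selectors}, which your proof should derive by a short padding argument you omitted: given $S$ with $|S|=k'<k$, extend it to a $k$-set $S'\supseteq S$; since $F\cap S'=\{a\}$ with $a\in S$ implies $F\cap S=\{a\}$, the elements of $S$ unselected within $S$ form a subset of those unselected within $S'$, hence their number is smaller than $k-\ell$. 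With those two lines added, your argument establishes everything the paper actually consumes downstream --- Lemmas~\ref{lem:subphase-progress} and~\ref{lem:epoch-invariant} invoke the selectors only through Fact~\ref{fact:avoiding-selectors} applied to sets of size \emph{at most} $k$ --- whereas as written your proof covers only the exact-size case and silently assumes the monotonicity that the padding argument supplies.
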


\subsection{The \alg Algorithm}
\label{sec:main-general-algorithm}

Our simulator algorithm proceeds in epochs $i=1,\ldots\log\Delta $. 
A pseudo-code for an epoch $i$ is provided at the end of this subsection.
In the beginning, each node has all its links not successfully realized -- here by a link $\{v,w\}$ being realized we understand that up to the current round, an input message/ID sent by $v$ (using a sequence of beeps) has been successfully encoded by $w$ and vice versa (note that these are two different messages and were sent/encoded each in a different round); the formal definition of link realization will be given later.
The goal of the algorithm is to preserve the following invariant for epoch $i\ge 1$: 
\begin{quote}
\hspace*{-1em}
At the end of epoch 
$i=1,\ldots,\log \Delta$, 
each vertex has less than 
$\kappa_i= \Delta / 2^{i}$ 
incident links not realized. 
\end{quote}
We also set an auxiliary value $\kappa_0=\Delta$, which corresponds to the maximum number of adjacent links per node at the beginning of the computation. For ease of presentation, we assume that node IDs come from the range $[1,n]$. Note that in all the formulas, the number of possible IDs appears only under logarithms, so the algorithm and proof for range $[1,n^c]$ are the same.

\vspace*{-1.5ex}
\paragraph{Algorithm for epoch $i$: Preliminaries and main concepts.} 

Epoch $i$ proceeds in subsequent batches of $2\log n$ rounds, each batch is called a \defn{super-round}. In a single super-round, a node can constantly listen or keep beeping according to some 0-1 sequence of length $2\log n$, where 1 corresponds to beeping in the related round and 0 means staying silent. 
The sequences that the nodes use during the algorithm are \defn{extended-IDs}, defined as follows: the first $\log n$ positions contain an ID of some node in $\{1,\ldots,n\}$, while the next $\log n$ positions contain the same ID 
\mamr{with the bits flipped, that is,}
with ones swapped to zeros and vice versa. 
Note that extended-IDs are pairwise different, and each of them contains exactly $\log n$ ones and $\log n$ zeros.
We say that a node $v$ {\em beeps an extended-ID of node $w$ in a super-round} 
\mamr{$s$ if, within super-round~$s$, node $v$ beeps only in rounds corresponding to positions with $1$'s in the extended ID of $w$ } 
($w$ could be a different node id than $v$).
We say that a node $w$ {\em receives an extended-ID of a node $v$ in a super-round} \mamr{$s$} if:
\vspace*{-0.8ex}
\begin{itemize}
\item 
$w$ does not beep in super-round \mamr{$s$},
\vspace*{-0.8ex}
\item 
the sequence of 
\mamr{noise/silence heard by $w$}
in super-round \mamr{$s$} form an extended-ID of $v$.
\end{itemize}

\vspace*{-0.8ex}
\noindent
From the perspective of receiving information in a super-round, all other cases not falling under the above definition of receiving an extended-ID, i.e., when a node is not silent in the super-round or receives a sequence of beeps that does not form any extended-ID, are ignored by the algorithm, in the sense that it could be treated as meaningless information noise. 

\mamr{Analogously to extended-ID's, nodes create an \defn{extended-message}} by taking the binary representation of the message of logarithmic length and transforming it to a $2\log n$ binary sequence in the same way as an extended-ID is created from the binary ID of a node.
An extended-message, as well as an extended-ID, is easily decodable after being received without interruptions from other neighbors.

\mamr{A specification of the conditions to achieve one-to-one communication, which is given ``for free'' in the \congest model, is crucial. An illustration of the following handshake communication procedure is shown in Figure~\ref{fig:alg}.}
We say that our algorithm \defn{realizes link $\{v,w\}$} if the following 
are~satisfied:
\vspace*{-0.8ex}
\begin{itemize}
\item[(a)] 
there are three consecutive super-rounds (called ``responding'') in which $v$ beeps an extended-ID of itself followed by an extended-ID of $w$ and then by extended-message of $v$ addressed to $w$, and $w$ receives them in these super-rounds; intuitively, it corresponds to the situation when $v$ ``tells'' $w$ that it dedicates these three super-rounds for communication from itself to $w$, and $w$ receives this information;
\vspace*{-3.5ex}
\item[(b)] 
there are three consecutive super-rounds (called ``confirming'') in which $w$ beeps an extended-ID of itself followed by an extended-ID of $v$ and by its extended-message addressed to $v$, and $v$ receives them in these super-rounds; intuitively, it corresponds to the situation when $w$ ``tells'' $v$ that it dedicates these three super-rounds for communication from itself to $v$, and $v$ receives this information;
\vspace*{-0.8ex}
\item[(c)]
there is a super-round, not earlier than the one specified in point (a), at the end of which node $w$ locally marks link $\{v,w\}$ as realized, 
and analogously, 
there is a super-round, not earlier than the one specified in point (b), at the end of which node $v$ locally marks link $\{v,w\}$ as realized.
\end{itemize}

\vspace*{-0.8ex}
\noindent
It is straightforward to see that in super-rounds specified in points (a) and (b), a multi-directional communication between $v$ and $w$ takes place -- by sending and receiving both ``directed pairs'' of extended-IDs of these two nodes, each of them commits that the super-rounds specified in points (a) and (b) are dedicated for sending a message dedicated to the other node, and vice versa. Additionally, in some super-round(s) both nodes commit that it has happened (c.f., point (c) above).

\vspace*{-2ex}
\paragraph{Algorithm for epoch $i$: Structure.} 

\remove{
\begin{algorithm}[htbp]
\DontPrintSemicolon
$status(v)\leftarrow$ \texttt{nil}\;
\For{each epoch $i=1,2,\dots,\log\Delta$}{
    $k_i\leftarrow\Delta/2^i$\;
    \For{each phase $j=1,2,\dots,|\mF_{\Delta,k_i}|$}{ 
        \tcp{announcing super-round}
        \For{each round $r=1,2,\dots,2\log n$}{
            \leIf{$v\in \mF_{\Delta,k_i}(j)$ {\bf and} $\langle v\rangle(r)=1$}{beep}
            {listen}
        }
        \If{$v\notin \mF_{\Delta,k_i}(j)$ {\bf and} some $\langle w\rangle$ was heard {\bf and} $\{w,v\}\in E(v)$}{
            $status(v) \leftarrow w$-$responsive$\;
        }
        \tcp{}
        \For{each sub-phase $a=1,2,\dots,\log k_i$}{
            \For{$b=1,2,\dots,|\mF_{k_i/2^{a-2},k_i/2^{a-1}}|$}{
                \tcp{responding 3 super-rounds}
                \For{each round $r=1,2,\dots,2\log n$}{
                    \leIf{$status(v)=w$-$responsive$ {\bf and} 
                    $v\in\mF_{k_i/2^{a-2},k_i/2^{a-1}}(b)$ {\bf and} $\langle v\rangle(r)=1$}
                    {beep}{listen}
                }
                \For{each round $r=1,2,\dots,2\log n$}{
                    \leIf{$status(v)=w$-$responsive$ {\bf and}
                    $v\in\mF_{k_i/2^{a-2},k_i/2^{a-1}}(b)$ {\bf and}  
                    $\langle w\rangle(r)=1$}
                    {beep}{listen}
                }
                \For{each round $r=1,2,\dots,2\log n$}{
                    \leIf{$status(v)=w$-$responsive$ {\bf and}
                    $v\in\mF_{k_i/2^{a-2},k_i/2^{a-1}}(b)$ {\bf and}  
                    $\langle m_{v,w}\rangle(r)=1$}
                    {beep}{listen}
                }
                \If{$v\in \mF_{\Delta,k_i}(j)$ {\bf and} 
                some $\langle w\rangle\langle v\rangle\langle m_{w,v}\rangle$ was heard {\bf and} 
                $\{w,v\}\in E(v)$}{
                    $status(v)\leftarrow w$-$answered$\;
                    $E(v)\leftarrow E(v)\setminus \{w,v\}$
                }
                \tcp{confirming 3 super-rounds}
                \For{each round $r=1,2,\dots,2\log n$}{
                    \leIf{$status(v)=w$-$answered$ {\bf and}
                    $\langle v\rangle(r)=1$}
                    {beep}{listen}
                }
                \For{each round $r=1,2,\dots,2\log n$}{
                    \leIf{$status(v)=w$-$answered$ {\bf and}
                    $\langle w\rangle(r)=1$}
                    {beep}{listen}
                }
                \For{each round $r=1,2,\dots,2\log n$}{
                    \leIf{$status(v)=w$-$answered$ {\bf and}
                    $\langle m_{v,w}\rangle(r)=1$}
                    {beep}{listen}
                }
                \If{$status(v)=w$-$responsive$ {\bf and}
                some $\langle w\rangle\langle v\rangle\langle m_{w,v}\rangle$ was heard}{
                    $status(v)\leftarrow$ \texttt{nil}\;
                    $E(v)\leftarrow E(v)\setminus \{w,v\}$
                }
                \tcp{stopping condition}
                \lIf{$E(v)=\emptyset$}{{\bf stop}}
            }
        }
    }
}
\caption{\alg algorithm for each node $v$. At any given time, $E(v)$ is the set of all links with end node $v$ that have not been realized yet. $\langle v\rangle$ is the extended-ID of node $v$ and $\langle m_{v,w}\rangle$ is the extended-message of node $v$ for node $w$. For any sequence of bits $s$, $s(i)$ is the $i^{th}$ bit of $s$.} 
\label{algC2B}
\end{algorithm}
}

\begin{algorithm}[b!]
\DontPrintSemicolon
\SetKwFunction{ctob}{C2B}
\SetKwFunction{announcer}{announcer}
\SetKwFunction{listener}{listener}
\SetKwProg{myalg}{Algorithm}{}{}
\let\oldnl\nl
\newcommand{\nlnonumber}{\renewcommand{\nl}{\let\nl\oldnl}}
\nlnonumber
$E(u)\leftarrow \{\{u,v\}|\{u,v\}\in E$ for some $v\in V\}$\;
\nlnonumber
\myalg{\ctob{}}{
\For{each epoch $i=1,2,\dots,\log\Delta$}{
    $k_i\leftarrow\Delta/2^i$\;
    \For{each phase $j=1,2,\dots,|\mF_{\Delta,k_i}|$}{ 
        \eIf{$u\in \mF_{\Delta,k_i}(j)$}
        {\announcer{$u,k_i$}}
        {\listener{$u,k_i$}}
    }
}
}
\caption{\alg algorithm for each node $u$. $E(u)$ is a global variable.} 
\label{algC2Bv2}
\end{algorithm}

An epoch $i$ is split into $|\mF_{\Delta,k_i}|$ {\em phases}, for a given $(n,\Delta,\Delta - k_i)$-avoiding selector $\mF_{\Delta,k_i}$ and parameter $k_i=\Delta/2^i$, parameterized by a variable $j$. Each phase starts with one {\em announcing super-round}, in which nodes in set $\mF_{\Delta,k_i}(j)$ beep in pursuit to be received by some of their neighbors. This super-round is followed by $\log k_i$ {\em sub-phases}, parameterized by $a=1,\ldots, \log k_i$. A sub-phase $a$ uses sets from an $(n,k_i/2^{a-2},k_i/2^{a-1})$-avoiding selector $\mF_{k_i/2^{a-2},k_i/2^{a-1}}$ to determine who beeps in which super-round (together with additional rules to decide what extended-ID and extended-message to beep and how to confirm receiving them), and consists of 
$|\mF_{k_i/2^{a-2},k_i/2^{a-1}}|$ 
$6$-tuples
of super-rounds ($3$ responding super-rounds and $3$ confirming super-rounds). 
The goal of a phase is to realize links that were successfully received (``announced'') in the first (announcing) super-round of this phase. This is particularly challenging in a distributed setting since many neighbors could receive such an announcement, but the links between them and the announcing node must be confirmed so that one-to-one communication between the announcer and responders could take place in different super-rounds (in one super-round, a node can receive only logarithmic-size information).

\begin{algorithm}[t!]
\DontPrintSemicolon
\SetKwFunction{announcer}{announcer}
\SetKwProg{myalg}{Procedure}{}{}
\let\oldnl\nl
\newcommand{\nlnonumber}{\renewcommand{\nl}{\let\nl\oldnl}}
\nlnonumber
\myalg{\announcer{$v,k_i$}}{
    \tcp{announcing super-round}
    \For{each round $r=1,2,\dots,2\log n$}{
        \leIf{$\langle v\rangle(r)=1$}{beep}{listen}
    }
        \For{each sub-phase $a=1,2,\dots,\log k_i$}{\label{line:subphaseloopA}
            \For{$b=1,2,\dots,|\mF_{k_i/2^{a-2},k_i/2^{a-1}}|$}{
                \tcp{responding 3 super-rounds}
                \lFor(\tcp*[h]{announcer only listens}){$6\log n$ rounds}{listen}
                \eIf{some $\langle w\rangle\langle v\rangle\langle m_{w,v}\rangle$ was heard {\bf and} 
                $\{w,v\}\in E(v)$}{
                    \tcp{confirming 3 super-rounds}
                    \For{each round $r=1,2,\dots,2\log n$}{
                        \leIf{
                        $\langle v\rangle(r)=1$}
                        {beep}{listen}
                    }
                    \For{each round $r=1,2,\dots,2\log n$}{
                        \leIf{
                        $\langle w\rangle(r)=1$}
                        {beep}{listen}
                    }
                    \For{each round $r=1,2,\dots,2\log n$}{
                        \leIf{
                        $\langle m_{v,w}\rangle(r)=1$}
                        {beep}{listen}
                    }
                    $E(v)\leftarrow E(v)\setminus \{w,v\}$ \tcp{link realized}
                    \lIf{$E(v)=\emptyset$}{$v$ stops executing}
                }{
                    \lFor(\tcp*[h]{wait to synchronize}){$6\log n$ rounds}{listen}
                }
            }
        }
    }
\caption{\alg algorithm for \underline{announcer} node $v$.} 
\label{algC2Bv2A}
\end{algorithm}

\mamr{
\vspace*{-2ex}
\paragraph{Algorithm for epoch $i$: Definitions and notation.} 
The pseudocode of the \alg algorithm can be seen in Algorithm~\ref{algC2Bv2}, and its subroutines in Algorithms~\ref{algC2Bv2A} and~\ref{algC2Bv2L}.
$\mF_{\Delta,k_i}$ is a locally computed $(n,\Delta,\Delta-k_i)$-avoiding selector, and for any $a=1,\ldots,\log k_i$,
$\mF_{k_i/2^{a-2},k_i/2^{a-1}}$ is a (locally computed) $(n,k_i/2^{a-2},k_i/2^{a-1})$-avoiding selector, as in Theorem~\ref{thm:avoiding-selectors}.
We denote the extended-ID of node $x$ as $\langle x\rangle$, and the extended-message of node $x$ for node $y$ as $\langle m_{x,y}\rangle$, both given as a sequence of bits. For any sequence of bits $s$, $s(i)$ is the $i^{th}$ bit of $s$.
}

\remove{
\begin{enumerate}
\item all nodes become active, 
$k_i\gets \Delta / 2^{i}$
    \item for each phase $j=1,2,\ldots,|\mF_{\Delta,k_i}|$
    \begin{enumerate}
        \item {\bf\em Announcing super-round:} each active node $v$ in set $\mF_{\Delta,k_i}(j)$ beeps its extended-ID in a super-round (recall that a super-round contains $2\log n$ subsequent rounds); \\
        a node $w$ that receives an extended-ID of some node $v$ and has not realized the link $\{v,w\}$ yet, becomes {\em $(j,v)$-responsive}
        \label{l:first-beep}
\item for each sub-phase $a=1,\ldots,\log k_i$
\label{alg:sub-phase}
       \begin{enumerate}
            \item for $b=1,2,\ldots,|\mF_{k_i/2^{a-2},k_i/2^{a-1}}|$ 
            \begin{enumerate}
            \item {\bf\em Responding $3$ super-rounds:}  
            if $w$ is $(j,v)$-responsive, for some $v$, and $w$ is in set $\mF_{k_i/2^{a-2},k_i/2^{a-1}}(b)$, node $w$ beeps its extended-ID in one super-round, followed by the extended-ID of $v$ in the next super-round, followed by the extended-message of $w$ addressed to $v$;
        \item {\bf\em Confirming $3$ super-rounds:} if $v$ is in set $\mF_{\Delta,k_i}(j)$ (i.e., it beeped its extended-ID in a super-round in line~\ref{l:first-beep}) received an extended-IDs of $w$ and of itself and an extended-message in the preceding $3$ responding super-rounds, for some $w$, it beeps an extended-ID of itself in the one super-round, followed by the extended-ID of $w$ in the next super-round, followed by its extended-message addresses to $w$;\\
        Then, at the end of the third confirming super-round, the beeping node $v$ (locally) marks the link $\{v,w\}$ as realized; \\
        If a $(j,v')$-responsive node $w'$ receives an extended-ID of $v'$ followed by its extended-ID and an extended-message in the current confirming $3$ super-rounds, it (locally) marks link $\{v',w'\}$ as realized and $w'$ abandons its $(j,v')$-responsive status (as the corresponding link has been already marked as realized)
            \end{enumerate}
        \end{enumerate}
    \end{enumerate}
\end{enumerate}
}

\vspace*{-2ex}
\subsection{Analysis of the \alg Algorithm}

Recall that the algorithm proceeds in synchronized super-rounds, each containing a subsequent $2\log n$ rounds. Therefore, our analysis assumes that the computation is partitioned into consecutive super-rounds and, unless stated otherwise, it focuses on correctness and progress in super-rounds. 
Recall also that each node either stays silent (no beeping at all) or beeps an extended ID of some node or an extended message of one node addressed to one of its neighbors in a super-round.
The missing proofs 
are deferred to Section~\ref{sec:proofs-main-simulation}.

In the next two technical results, we state and prove the facts that receiving an extended-ID by a node $w$ in a super-round can happen if and only if there is {\em exactly one neighbor} of $w$ has been beeping {\em the same extended-ID} during the considered super-round. 

\begin{fact}[Single beeping]
\label{fact:single-beeping}
If during a super-round, exactly one neighbor of a node $w$ beeps an extended-ID of some $z$, then $w$ receives this extended-ID in this super-round.
\end{fact}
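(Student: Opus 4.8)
The plan is to unwind the two definitions—``$v$ beeps an extended-ID of $z$'' and ``$w$ receives an extended-ID of $z$''—and to observe that, when there is a single beeping neighbor, the OR-superposition of the beeping channel collapses exactly onto that one neighbor's beep pattern, with no collision to rule out. This is the ``if'' direction of the claimed equivalence, so I expect it to be the easy half.

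First I would fix the super-round $s$ and let $v$ be the unique neighbor of $w$ that beeps, beeping the extended-ID $\langle z\rangle$; by hypothesis every other neighbor of $w$ stays silent throughout $s$, and (as is required for $w$ to receive anything at all) $w$ itself listens during all $2\log n$ rounds of $s$. By the definition of beeping an extended-ID, in each round $r\in\{1,\dots,2\log n\}$ node $v$ beeps precisely when $\langle z\rangle(r)=1$ and is silent precisely when $\langle z\rangle(r)=0$.

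Next I would invoke the channel semantics: a listening node hears \emph{noise} in a round if and only if at least one of its neighbors beeps in that round, and \emph{silence} otherwise. Since $v$ is the only neighbor of $w$ that beeps during $s$, node $w$ hears noise in round $r$ exactly when $v$ beeps, i.e.\ exactly when $\langle z\rangle(r)=1$, and silence exactly when $\langle z\rangle(r)=0$. Reading the heard sequence as a binary string (noise $=1$, silence $=0$), it therefore coincides bit-for-bit with $\langle z\rangle$. Because $\langle z\rangle$ is by construction a legitimate extended-ID (its first $\log n$ bits are the ID of $z$ and its last $\log n$ bits the bitwise complement), the string heard by $w$ forms the extended-ID of $z$.

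Finally, both conditions in the definition of ``receives an extended-ID'' are met—$w$ does not beep during $s$, and the noise/silence sequence it hears forms the extended-ID of $z$—so $w$ receives $\langle z\rangle$ in $s$, as claimed. There is no genuine obstacle here; the only two points worth flagging are the implicit assumption that $w$ listens during $s$ (without it the first condition of ``receives'' fails outright) and the observation that a single beeper makes the channel's OR reproduce that beeper's pattern exactly, so there is no second transmitter whose beeps could superimpose and corrupt the heard string.
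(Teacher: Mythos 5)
Your proof is correct and matches the paper's approach: the paper disposes of this fact in one line (``directly from the definition of receiving an extended-ID''), and your argument is exactly that definitional unwinding, spelled out with the channel semantics. Your flag that $w$ must itself be listening during the super-round is a fair reading of the (slightly loose) statement, and is consistent with how the paper uses the fact, since a beeping node can never satisfy the first condition of receiving.
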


\begin{proof}
Directly from the definition of receiving an extended-ID. 
\end{proof}

\begin{lemma}[Correct receiving]
\label{lem:correct-receiving}
During the algorithm, if a node $w$ 
receives some extended-ID of $z$ in a super-round, then some unique neighbor $v$ of $w$ has been beeping an extended-ID of $z$ in this super-round while all other neighbors of $w$ have been silent. 
The above holds except, possibly, some second responding super-rounds, in which a node can receive an extended-ID of $z$ that has been beeped by more than one neighbor.
\end{lemma}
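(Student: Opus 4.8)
The plan is to reduce the statement to a counting property of extended-IDs and then run a short case analysis over the role the super-round plays in the algorithm. First I would record the channel semantics: since, by definition, $w$ receives an extended-ID only if it stays silent for the whole super-round, the $2\log n$-bit silence/noise pattern heard by $w$ is exactly the bitwise OR of the $2\log n$-bit beep patterns of its neighbors, where a neighbor beeping an extended-ID or extended-message contributes $1$'s precisely in the positions of that sequence and a silent neighbor contributes $0$'s. The crucial observation is that every extended-ID and every extended-message has \emph{exactly} $\log n$ ones, because its second half is the bitwise complement of its first half. Hence, if $w$ receives $\langle z\rangle$, the heard pattern has exactly $\log n$ ones, and each beeping neighbor's pattern is, as a set of $1$-positions, contained in that of $\langle z\rangle$ while itself having $\log n$ ones. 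A subset of a set of equal size equals it, so every beeping neighbor beeps \emph{exactly} the pattern $\langle z\rangle$, and (since $\langle z\rangle\neq 0$) at least one neighbor beeps. This already yields the ``all other neighbors silent'' part once uniqueness is established.

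Next I would establish uniqueness by the role of the super-round, the point being that in every super-round except the second responding one the beepers transmit either their own extended-ID or an extended-ID forced to be unique by the handshake. In the announcing super-round, the first responding super-round, and the first confirming super-round, each beeping node beeps its own extended-ID; since all beepers beep $\langle z\rangle$ and IDs are unique, all beepers coincide with the single node $z$. The delicate case is the confirming super-round in which an announcer beeps the extended-ID of the responder it heard: here I would invoke the three-stage handshake. An announcer advances to confirming a responder $w$ only after hearing the triple $\langle w\rangle\langle v\rangle\langle m_{w,v}\rangle$ with \emph{its own} extended-ID in the middle position; but a responsive node echoes the extended-ID of exactly one announcer (the one it is responsive to) in the second responding super-round, so at most one announcer ever sees its own ID there and proceeds. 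Therefore at most one neighbor beeps $\langle z\rangle=\langle w\rangle$ in that confirming super-round, giving uniqueness.

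Finally I would exhibit the exception. In the second responding super-round every responsive node beeps the extended-ID of \emph{its announcer} rather than its own, so several neighbors of an announcer $v$ that are all $v$-responsive beep the identical pattern $\langle v\rangle$; their OR is again $\langle v\rangle$, a valid extended-ID, so $v$ receives $\langle z\rangle=\langle v\rangle$ beeped by more than one neighbor, exactly as the statement allows. I expect the counting step to be routine; the main obstacle is the uniqueness argument for the confirming super-round, where the beepers do \emph{not} beep their own IDs. Handling it cleanly requires the handshake invariant that a responder commits to a single announcer by echoing that announcer's ID, which I would state (or verify inline) before using it, since it is precisely what rules out two distinct announcers confirming to the same responder at once.
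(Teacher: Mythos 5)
Your proof is correct, but it reaches the conclusion by a genuinely different route than the paper's. The paper argues by contradiction on the \emph{first} super-round violating the claim: it first shows, by counting ones, that two neighbors beeping \emph{different} extended-IDs would force more than $\log n$ ones in the heard pattern; it then splits the same-pattern case by super-round type, and in the delicate second-confirming case it uses the minimality assumption to conclude that each would-be confirmer $v_1,v_2$ had $z$ as its \emph{unique} beeper in the first responding super-round, deriving the contradiction back at the \emph{announcing} super-round: both $v_1,v_2\in\mF_{\Delta,k_i}(j)$ beeped there, so $z$ heard a garbled union, could not become $(j,\cdot)$-responsive, and hence could not have beeped in any responding super-round. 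You instead prove an unconditional rigidity fact --- the heard pattern is the OR of the beeped patterns, all of weight exactly $\log n$, so a received $\langle z\rangle$ forces every beeper to beep exactly the pattern $\langle z\rangle$ --- which makes the first-violation induction unnecessary, and you close the second-confirming case by a different contradiction: since the responsive status is single-valued, $z$ echoes exactly one announcer's ID in the second responding super-round, so applying rigidity at $v_1$ and $v_2$ gives $\langle v_1\rangle=\langle v'\rangle=\langle v_2\rangle$ and the two confirmers coincide. Your subset-of-equal-cardinality step is a cleaner packaging of the paper's counting argument, and your appeal to the second responding round is legitimate precisely because the rigidity fact holds unconditionally, so no induction is hidden there; what the paper's framing buys instead is that its minimality assumption lets it reuse the receiving definition wholesale without isolating the rigidity fact. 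One caveat you share with the paper: neither proof separately enumerates the third (extended-message) super-rounds, in which two identical messages whose common pattern happens to coincide with some $\langle z\rangle$ would also be received from more than one beeper; this is an imprecision of the lemma's statement itself rather than a defect of your argument relative to the paper's.
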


\begin{algorithm}[t!]
\DontPrintSemicolon
\SetKwFunction{listener}{listener}
\SetKwProg{myalg}{Procedure}{}{}
\let\oldnl\nl
\newcommand{\nlnonumber}{\renewcommand{\nl}{\let\nl\oldnl}}
\nlnonumber
\myalg{\listener{$w,k_i$}}{
    $status(u)\leftarrow$ \texttt{nil}\;
    \tcp{announcing super-round}
    \lFor(\tcp*[h]{listener only listens}){$2\log n$ rounds}{listen}
    \If{some $\langle v\rangle$ was heard {\bf and} $\{v,w\}\in E(w)$}{
        $status(w) \leftarrow v$-$responsive$\;
    }
    \For{each sub-phase $a=1,2,\dots,\log k_i$}{\label{line:subphaseloopL}
        \For{$b=1,2,\dots,|\mF_{k_i/2^{a-2},k_i/2^{a-1}}|$}{
            \eIf{$status(w)=v$-$responsive$ {\bf and} 
                $w\in\mF_{k_i/2^{a-2},k_i/2^{a-1}}(b)$}{
                \tcp{responding 3 super-rounds}
                \For{each round $r=1,2,\dots,2\log n$}{
                    \leIf{$\langle w\rangle(r)=1$}{beep}{listen}
                }
                \For{each round $r=1,2,\dots,2\log n$}{
                    \leIf{$\langle v\rangle(r)=1$}{beep}{listen}
                }
                \For{each round $r=1,2,\dots,2\log n$}{
                    \leIf{$\langle m_{w,v}\rangle(r)=1$}{beep}{listen}
                }
                \tcp{confirming 3 super-rounds}
                \lFor(\tcp*[h]{listener only listens}){$6\log n$ rounds}{listen}
                \If{some $\langle v\rangle\langle w\rangle\langle m_{v,w}\rangle$ was heard}{
                    $status(w)\leftarrow$ \texttt{nil}\;
                    $E(w)\leftarrow E(w)\setminus \{v,w\}$ \tcp{link realized}
                    \lIf{$E(w)=\emptyset$}{$w$ stops executing}
                }            
            }{
                \lFor(\tcp*[h]{wait to synchronize}){$12\log n$ rounds}{listen}
            }
        }
    }
}
\caption{\alg algorithm for \underline{listener} node $w$.} 
\label{algC2Bv2L}
\end{algorithm}


We now prove that link realization implemented by our algorithm is consistent with the definition -- it allocates in a distributed way super-rounds for bi-directional communication of distinct messages.

\begin{lemma}[Correct realization]
\label{lem:correct-realization}
If a node $v$ (locally) marks some link $\{v,w\}$ as realized, which may happen only at the end of a second confirming super-round, the link has been realized by then. 
\end{lemma}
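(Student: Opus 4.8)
The plan is to show that the only two places in the pseudocode where a link gets marked---by an \emph{announcer} in Algorithm~\ref{algC2Bv2A} and by a \emph{listener}/responder in Algorithm~\ref{algC2Bv2L}---each force all three conditions (a), (b), (c) of link realization to hold by the end of the super-round in which the mark is set. Inspecting the code, in either role the assignment $E(\cdot)\leftarrow E(\cdot)\setminus\{v,w\}$ is executed only at the end of the confirming super-rounds of some $6$-tuple, which justifies the parenthetical claim about \emph{when} marking can occur; so it suffices to verify (a)--(c) separately for the two roles. The tool I would use repeatedly is Lemma~\ref{lem:correct-receiving}: whenever a node receives a valid extended-ID in a (non-exceptional) super-round, a \emph{unique} neighbor beeped exactly that extended-ID and every other neighbor was silent. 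Since a responder (resp.\ announcer) that beeps the first super-round of a responding (resp.\ confirming) triple beeps all three of its super-rounds or none, the uniqueness obtained from the first super-round propagates to the other two by Fact~\ref{fact:single-beeping}, so the accompanying extended-ID and extended-message are decoded cleanly as well.

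The first structural fact I would isolate is that \textbf{the responsive relation is exclusive on the announcer side}: if a node $x$ became $y$-responsive in phase $j$, then $y$ is the \emph{only} announcer among the neighbors of $x$ in that phase. Indeed, $x$ sets its status only upon receiving a valid $\langle y\rangle$ in the announcing super-round, and by Lemma~\ref{lem:correct-receiving} this forces every other neighbor of $x$ to be silent during that super-round; but \emph{all} announcers of phase $j$ beep their own extended-ID then, so no other neighbor of $x$ can be an announcer. Since only announcers beep during the confirming super-rounds, while $x$ listens for a confirmation it can hear beeps from $y$ alone.

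For the responder case ($v$ marks $\{v,w\}$ via Algorithm~\ref{algC2Bv2L}, with $w$ the announcer), $v$ is $w$-responsive and, in some iteration, beeped its responding triple $\langle v\rangle\langle w\rangle\langle m_{v,w}\rangle$ and then heard the confirming triple $\langle w\rangle\langle v\rangle\langle m_{w,v}\rangle$. By exclusivity $w$ is $v$'s only announcer-neighbor, so the confirming beeps reaching $v$ come from $w$ alone and are decoded cleanly, which gives the beeping half of (b). Moreover $w$ emits a confirmation only after having \emph{heard} a responding triple, and the confirmation it sent ($\langle v\rangle$ in its second super-round) shows that its chosen responder is $v$; hence $w$ received a responding triple whose first super-round was $\langle v\rangle$, and Lemma~\ref{lem:correct-receiving} makes $v$ the unique beeper, so $w$ received $v$'s entire responding triple---this is exactly (a). Condition (b) is completed since $v$ received the confirmation, and (c) holds because $v$ marks now while the announcer $w$ marks unconditionally at the end of the very same confirming super-round.

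The announcer case ($v$ marks $\{v,w\}$ via Algorithm~\ref{algC2Bv2A}) is where the \textbf{main obstacle} lies: the announcer sets its mark \emph{unconditionally} right after emitting its confirmation, so I must argue that the confirmation is genuinely received by $w$ and that $w$ also marks, even though $v$ gets no local acknowledgement. First, (a) is immediate: $v$ marked only because it heard $\langle w\rangle\langle v\rangle\langle m_{w,v}\rangle$ in the responding super-rounds, and Lemma~\ref{lem:correct-receiving} applied to the first of them pins $w$ as the unique responder that beeped, whence $w$ beeped its full responding triple addressed to $v$ and $v$ received it. For (b) and the second half of (c), I invoke exclusivity: since $w$ is $v$-responsive, $v$ is $w$'s only announcer-neighbor, and only announcers beep during confirming super-rounds; therefore, when $v$ beeps $\langle v\rangle\langle w\rangle\langle m_{v,w}\rangle$ in the confirming super-rounds of the same iteration, node $w$---which is listening in exactly those super-rounds---hears $v$ and nobody else, receives the confirmation, and executes its own mark at the end of the same super-round. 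Thus (a)--(c) all hold by the time $v$ marks. The crux of the whole argument is precisely this step: turning the one-sided, acknowledgement-free act of marking by the announcer into a guaranteed clean reception at the responder, which rests on the exclusivity of the responsive relation together with the observation that responders fall silent during confirmation.
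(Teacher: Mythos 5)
Your proof is correct and follows essentially the same route as the paper's: the same case split between the announcer and the responder marking the link, the same reliance on Lemma~\ref{lem:correct-receiving} and Fact~\ref{fact:single-beeping}, and the same key exclusivity observation (a $(j,v)$-responsive node has $v$ as its \emph{unique} announcer-neighbor, so the confirming super-rounds are collision-free at the responder, which is exactly how the paper forces $w$ to receive and mark as well). Your explicit ``beeps all three super-rounds or none'' propagation of uniqueness is merely a cleaner packaging of the paper's inline remark that a second beeper colliding in the later super-rounds would already have collided in the first responding super-round.
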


As mentioned earlier in the description of the phase, the goal of a phase $j$ (of epoch $i$) is to assure that any node $v$ that was received by some other nodes $w$ in the announcing super-round, gets all such links $\{v,w\}$ realized by the end of the phase (and vice versa, because the condition on the realization by this algorithm is symmetric).
The next step is conditional progress in a sub-phase $a$ of a phase $j$.

\begin{lemma}[Sub-phase progress]
\label{lem:subphase-progress}
Consider any node $v$ and suppose that in the beginning of sub-phase $a$ of phase $j$, there are at most $\Delta/2^{i+a-2}$ nodes $w$ such that $w$ is $(j,v)$-responsive and it does not mark link $\{v,w\}$ as realized. Then, by the end of the sub-phase, the number of such nodes is reduced to less~than~$\Delta/2^{i+a-1}$.
\end{lemma}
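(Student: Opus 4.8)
The plan is to fix the node $v$ from the statement and track the set of its currently unrealized responders across the iterations of sub-phase $a$. First note that we may assume $v$ is an announcer of phase $j$, i.e. $v\in\mF_{\Delta,k_i}(j)$: if $v\notin\mF_{\Delta,k_i}(j)$ then $v$ never beeps $\langle v\rangle$ in the announcing super-round, so no node is ever $(j,v)$-responsive and the count is $0$, making the claim vacuous. The crucial structural observation I would establish next is that \emph{every} neighbour of $v$ that is responsive during phase $j$ is in fact $(j,v)$-responsive, and symmetrically that each responder $w$ of $v$ has $v$ as its unique announcer-neighbour. Indeed, since $v$ beeps $\langle v\rangle$ in the announcing super-round, $v$ is an announcer-neighbour of each of its neighbours; by Lemma~\ref{lem:correct-receiving} a listener $w$ receives a clean extended-ID only if a single neighbour beeps it while all others are silent, so a responsive neighbour $w$ of $v$ must have had exactly one announcer-neighbour beep, and as all announcers in $\mF_{\Delta,k_i}(j)$ do beep, that unique announcer-neighbour is $v$ itself. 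This observation removes all cross-interference: the only beeping neighbours $v$ can hear in the responding super-rounds are its own responders, and the only announcer that can beep at $w$ during the confirming super-rounds is $v$.

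With this in hand, let $S_a$ denote the set counted by the hypothesis at the start of sub-phase $a$, so $|S_a|\le k:=\Delta/2^{i+a-2}$, and recall that sub-phase $a$ runs over the sets $\mF(b):=\mF_{k_i/2^{a-2},k_i/2^{a-1}}(b)$ of an $(n,k,\ell)$-avoiding selector with $\ell=\Delta/2^{i+a-1}$ and hence $k-\ell=\Delta/2^{i+a-1}$. By Fact~\ref{fact:avoiding-selectors} applied to $S_a$, all but fewer than $k-\ell=\Delta/2^{i+a-1}$ elements $w\in S_a$ are \emph{selected}, meaning there is an iteration $b$ with $\mF(b)\cap S_a=\{w\}$. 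I would then show that every selected $w$ gets link $\{v,w\}$ realized before the sub-phase ends. Consider the first iteration $b$ isolating $w$; if $w$ is already realized we are done, otherwise $w$ is still responsive, and since the currently responding neighbours of $v$ form a subset of $S_a$ whose intersection with $\mF(b)$ is exactly $\{w\}$, node $w$ is the only neighbour of $v$ that beeps. By Fact~\ref{fact:single-beeping}, $v$ cleanly receives $\langle w\rangle\langle v\rangle\langle m_{w,v}\rangle$ over the three responding super-rounds and therefore executes its confirming super-rounds; by the confirming-interference part of the structural observation, $w$ in turn cleanly hears $\langle v\rangle\langle w\rangle\langle m_{v,w}\rangle$, and by Lemma~\ref{lem:correct-realization} both endpoints mark $\{v,w\}$ realized. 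Hence the responders still unrealized at the end of the sub-phase are contained in the unselected subset of $S_a$, whose size is $<\Delta/2^{i+a-1}$, which is exactly the conclusion.

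The main obstacle is not the counting step, which is precisely Fact~\ref{fact:avoiding-selectors} with the sub-phase parameters, but justifying that a selector singleton taken with respect to the \emph{static} set $S_a$ really produces collision-free two-way communication in the \emph{dynamic} execution, where responders keep dropping out as their links are realized and where, a priori, responders of other announcers could jam $v$ and other announcers' confirmations could jam $w$. The structural observation that an announcer's responders all belong to it, and that a responder has a unique announcer-neighbour, is what neutralizes every such interference and lets the static selector guarantee transfer to the execution; the shrinking of $S_a$ over the iterations only removes potential beepers and so can never create a new collision, making it harmless.
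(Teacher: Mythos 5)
Your proof is correct and takes essentially the same route as the paper's: the same reduction to the case where $v$ announced, the same two interference-elimination arguments via Lemma~\ref{lem:correct-receiving} (responsive neighbours of $v$ are all $(j,v)$-responsive, and a responder's unique announcer-neighbour is $v$, so confirmations arrive cleanly), and the same application of Fact~\ref{fact:avoiding-selectors} with parameters $k=\Delta/2^{i+a-2}$, $\ell=\Delta/2^{i+a-1}$. One cosmetic remark: the final marking of $\{v,w\}$ follows directly from the algorithm's specification upon clean reception (as the paper argues), not from Lemma~\ref{lem:correct-realization}, which gives the converse (soundness of marking); your explicit treatment of the static-versus-dynamic responder set is a nice touch the paper leaves implicit.
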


\begin{lemma}[Phase progress]
\label{lem:phase-progress}
Consider a phase $j$ of epoch $i$ and assume that in the beginning, there are at most $2k_i$ non-realized incident links to any node. Every node $w$ that becomes $(j,v)$-responsive in the first (announcing) super-round of the phase, for some $v$, mark locally the link $\{v,w\}$ as realized during this phase. And vice versa, also node $v$ marks locally that link as realized. 
\end{lemma}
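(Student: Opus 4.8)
The plan is to prove the statement by induction over the $\log k_i$ sub-phases of the phase, using Lemma~\ref{lem:subphase-progress} as the inductive engine. For a fixed announcer $v$ — meaning a node in $\mF_{\Delta,k_i}(j)$ whose extended-ID was actually received by at least one neighbor in the announcing super-round — I would track the quantity $N_a(v)$, the number of nodes $w$ that are $(j,v)$-responsive and have not yet marked the link $\{v,w\}$ as realized at the beginning of sub-phase $a$. The goal is to show $N_a(v)$ shrinks geometrically and is driven below $2$ by the end of the phase, simultaneously for every announcer $v$.

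First I would establish the base case. By hypothesis every node has at most $2k_i$ non-realized incident links at the start of the phase, and each $(j,v)$-responsive node $w$ corresponds to a distinct non-realized incident link $\{v,w\}$ of $v$; hence $N_1(v)\le 2k_i = \Delta/2^{i-1}=\Delta/2^{i+1-2}$, which is exactly the entry hypothesis of Lemma~\ref{lem:subphase-progress} for $a=1$. For the inductive step I would observe that the bound is self-propagating: Lemma~\ref{lem:subphase-progress} turns the entry bound $N_a(v)\le \Delta/2^{i+a-2}$ into an exit bound strictly below $\Delta/2^{i+a-1}=\Delta/2^{i+(a+1)-2}$, which is precisely the entry hypothesis for sub-phase $a+1$. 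Thus the invariant $N_a(v)\le \Delta/2^{i+a-2}$ holds at the start of every sub-phase, and chaining it across $a=1,\dots,\log k_i$ gives, at the end of the last sub-phase, $N(v)<\Delta/2^{i+\log k_i-1}=2$; that is, at most one unrealized $(j,v)$-responsive node can remain.

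The endgame is to eliminate this last residual responder and to argue the claim symmetrically. A single surviving responder $w$ has no competing responder of $v$, so whenever $w$ beeps its responding triple in a super-round that selects it, $v$ hears $\langle w\rangle\langle v\rangle\langle m_{w,v}\rangle$ cleanly by Fact~\ref{fact:single-beeping} and Lemma~\ref{lem:correct-receiving}, triggering $v$'s confirming triple. For the ``vice versa'' direction I would use that the algorithm's realization is two-sided: $v$ marks $\{v,w\}$ realized right after confirming, and $w$ marks it realized upon receiving $\langle v\rangle\langle w\rangle\langle m_{v,w}\rangle$ in the confirming super-rounds; by Lemma~\ref{lem:correct-realization} this is a genuine, definition-compliant realization for both endpoints.

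The main obstacle, I expect, is not the geometric bookkeeping — which is a direct telescoping of Lemma~\ref{lem:subphase-progress} — but closing out the two-directional handshake: showing that whenever $v$ correctly receives a responding triple, $w$ in turn correctly receives the matching confirming triple, so that both endpoints register the realization within the phase. This reduces to guaranteeing the confirming super-rounds are interference-free at $w$, which I would obtain from Lemma~\ref{lem:correct-receiving} together with the fact that each extended-ID contains exactly $\log n$ ones, so the superposition (logical OR) of two distinct extended-IDs is never itself a valid extended-ID and is therefore discarded rather than mis-decoded. Handling the final residual responder, where the avoiding-selector guarantee nominally bottoms out at ``fewer than two,'' is the one place requiring the extra observation that a unique active responder is always isolated once selected and hence realized before the phase ends.
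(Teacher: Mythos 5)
Your overall route is the paper's: the paper proves Lemma~\ref{lem:phase-progress} precisely by chaining Lemma~\ref{lem:subphase-progress} over the sub-phases $a=1,\dots,\log k_i$, seeded by the observation that the $(j,v)$-responsive nodes form a subset of the endpoints of $v$'s non-realized incident links, hence number at most $2k_i$ initially. Your base case, the self-propagating invariant $N_a(v)\le \Delta/2^{i+a-2}$, and the two-sidedness via Lemma~\ref{lem:correct-realization} all match (the ``vice versa'' part is in fact already discharged inside the paper's proof of Lemma~\ref{lem:subphase-progress}, where $w$ is shown to receive the confirming triple because no other neighbor of $w$ can lie in $\mF_{\Delta,k_i}(j)$). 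You are also right --- and more careful than the paper's two-sentence proof --- that the telescoped bound only yields $N<2$ at the end of sub-phase $\log k_i$, i.e., a possible residual responder.

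However, your endgame for that residual has a genuine gap. You claim a unique surviving responder is ``always isolated once selected,'' but a sub-phase is a fixed, finite schedule: Fact~\ref{fact:avoiding-selectors} permits one element of a size-$k$ set $S$ to appear in no singleton intersection $F\cap S$ at all, and the sets $F\ni w$ that do occur may all be scheduled early in the sub-phase, while other responders in $S$ are still active and collide with $w$ at $v$; once the last sub-phase ends, the phase contains no further super-rounds in which your ``eventually isolated'' event could occur. The sound closure is instead an integrality/strictness argument: the exit bound of sub-phase $\log k_i-1$ is \emph{strictly} below $4$, so at most $3$ responders enter the final sub-phase, whose selector has $k=4$, $\ell=2$; and if the unselected set $U\subseteq S$ were nonempty, taking $R$ in Definition~\ref{def:avoid-selector} to be the selected set would force the selected set to have size greater than $\ell=2$, which for $|S|\le 3$ means all of $S$ is selected, i.e., $U=\emptyset$. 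So the residual you flagged cannot actually arise (modulo the degenerate case $k_i=2$, where the phase has a single sub-phase entered by up to $2k_i=4$ responders and deserves a separate check); what fails is your mechanism for eliminating it, not the lemma.
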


The next lemma proves the invariant for epoch $i$, assuming that it holds in the previous epochs. 

\begin{lemma}[Epoch invariant]
\label{lem:epoch-invariant}
The invariant for epoch $i\ge 1$ holds. 
\end{lemma}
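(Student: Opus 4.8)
The plan is to prove the statement by induction on the epoch index $i$, using Lemma~\ref{lem:phase-progress} (Phase progress) as the per-phase engine and the avoiding-selector guarantee (Definition~\ref{def:avoid-selector} and Fact~\ref{fact:avoiding-selectors}) to bound how many links fail to be realized within an epoch. For the base case, before epoch $1$ every node has at most $\kappa_0=\Delta=2k_1$ incident links, all possibly non-realized. For the inductive step I assume the invariant at the end of epoch $i-1$, i.e.\ that each node enters epoch $i$ with fewer than $\kappa_{i-1}=2k_i$ non-realized incident links. The key monotonicity observation is that, by Lemma~\ref{lem:correct-realization}, a link once marked realized stays realized, so the number of non-realized incident links per node never increases during epoch $i$; in particular it stays at most $2k_i$, which is exactly the precondition required by Lemma~\ref{lem:phase-progress} at the start of every phase of epoch~$i$.

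The heart of the argument is a per-node counting step. Fix a node $w$ and let $S$ be the set containing $w$ together with every neighbor of $w$ that is still executing (has at least one non-realized incident link); note $|S|\le \Delta+1$, the additive $+1$ being absorbed by the selector's parameters. I apply the $(n,\Delta,\Delta-k_i)$-avoiding selector $\mF_{\Delta,k_i}$ with the excluded set $R$ chosen to contain $w$. By Definition~\ref{def:avoid-selector} and Fact~\ref{fact:avoiding-selectors}, all but fewer than $k-\ell=\Delta-(\Delta-k_i)=k_i$ elements of $S\setminus R$ are \emph{uniquely selected}: for each such neighbor $v$ there is a phase $j$ with $\mF_{\Delta,k_i}(j)\cap S=\{v\}$. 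In that phase, since $w\in S$ we have $w\notin\mF_{\Delta,k_i}(j)$, so $w$ listens in the announcing super-round, and $v$ is the only executing neighbor of $w$ that beeps. By Fact~\ref{fact:single-beeping} and Lemma~\ref{lem:correct-receiving} (whose only exception, the second responding super-rounds, does not concern announcing super-rounds), $w$ cleanly receives $v$'s extended-ID and becomes $(j,v)$-responsive, provided the link $\{v,w\}$ is still non-realized; Lemma~\ref{lem:phase-progress} then realizes $\{v,w\}$ by the end of phase $j$ (and symmetrically from $v$'s side). If $\{v,w\}$ was already realized earlier, it is realized anyway.

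Putting this together, the only non-realized neighbors of $w$ that can survive epoch $i$ are the fewer than $k_i$ non-selected elements, so at the end of epoch $i$ node $w$ has fewer than $k_i=\kappa_i$ non-realized incident links; since $w$ is arbitrary, the epoch-$i$ invariant follows. The step I expect to be the main obstacle is reconciling the avoiding-selector guarantee, which on its own only controls being the \emph{unique beeper} within a chosen competitor set, with the two extra requirements needed for a clean one-directional announcement: that $w$ itself be a listener, and that no already-realized-but-still-executing neighbor of $w$ corrupts the super-round by beeping simultaneously. Both requirements are what force $w$ and all of $w$'s executing neighbors into the ground set $S$, and it is precisely this sizing, together with the tuned parameter $\ell=\Delta-k_i$, that makes the number of ``losers'' drop to $k_i$ and match $\kappa_i$; verifying that this bookkeeping is tight (and that the additive treatment of $w$ does not erode the bound) is the delicate part.
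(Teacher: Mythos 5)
Your proof is correct and takes essentially the same route as the paper's: induction on the epoch index, the $(n,\Delta,\Delta-k_i)$-avoiding-selector count from Definition~\ref{def:avoid-selector} and Fact~\ref{fact:avoiding-selectors} bounding the non-announced neighbors of each node $w$ by $k_i$, and Lemma~\ref{lem:phase-progress} (with Lemma~\ref{lem:correct-realization} for the symmetric marking) realizing every announced link within its phase. Your additional bookkeeping --- placing $w$ itself and all still-executing neighbors into the ground set $S$ so that $w$ listens and already-realized-but-active neighbors cannot interfere (at the harmless cost of a $(n,\Delta+1,\cdot)$ reparameterization of the selector), and verifying Lemma~\ref{lem:phase-progress}'s $2k_i$ precondition at every phase of the epoch via monotonicity of realization --- merely makes explicit steps the paper's proof leaves implicit.
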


\begin{theorem}
\label{thm:congest-sim}
The \alg algorithm deterministically and distributedly
simulates any round of any algorithm designed for the \congest networks in $O(\Delta^2 \polylog n \log\Delta)$ beeping rounds, where the $\polylog n$ is the square of the (poly-)logarithm in the construction of avoiding-selectors in Theorem~\ref{thm:avoiding-selectors} multiplied by $\log n$.
\end{theorem}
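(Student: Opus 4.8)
The plan is to obtain the theorem as a bookkeeping combination of the structural lemmas already in hand, separating the argument into correctness and running-time. First I would settle correctness by feeding the Epoch Invariant (Lemma~\ref{lem:epoch-invariant}) through its full range $i=1,\dots,\log\Delta$. At $i=\log\Delta$ the invariant guarantees that every vertex has fewer than $\kappa_{\log\Delta}=\Delta/2^{\log\Delta}=1$ incident unrealized links, hence none, so every link of $G$ is realized. I would then invoke the Correct Realization lemma (Lemma~\ref{lem:correct-realization}) together with conditions (a)--(c) of the definition of link realization: whenever $\{v,w\}$ is marked realized, $v$ and $w$ have each received the other's extended-ID and extended-message in dedicated, interference-free super-rounds (the ``without interference'' part being exactly Lemma~\ref{lem:correct-receiving}). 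Consequently $m_{v,w}$ and the ID of $v$ reach $w$, and symmetrically for the reverse direction; ranging over all links yields precisely the delivery requirement of the \congest simulation problem for every ordered pair $(u,v)$ with $v\in N(u)$. I would also note that the membership tests $\{w,v\}\in E(v)$ presuppose knowledge of $N(v)$, supplied by the Learning Neighborhood routine of Theorem~\ref{th:learning_neighbourhood} in $O(\Delta^2\log^2 n)$ rounds, which is dominated by the claimed bound.

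Next I would bound the running time by summing avoiding-selector lengths, remembering that a super-round is $2\log n$ rounds. By Theorem~\ref{thm:avoiding-selectors}, the announcing selector $\mF_{\Delta,k_i}$, being an $(n,\Delta,\Delta-k_i)$-avoiding selector, has length $O\!\left(\tfrac{\Delta^2}{k_i}\polylog n\right)=O(\Delta\,2^{i}\polylog n)$, while the sub-phase selector $\mF_{k_i/2^{a-2},k_i/2^{a-1}}$ has length $O\!\left(\tfrac{(k_i/2^{a-2})^2}{k_i/2^{a-1}}\polylog n\right)=O\!\left(\tfrac{k_i}{2^{a}}\polylog n\right)$. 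Since sub-phase $a$ spends $6$ super-rounds per selector set, it costs $O\!\left(\tfrac{k_i}{2^{a}}\polylog n\cdot\log n\right)$ rounds, and summing the geometric series over $a=1,\dots,\log k_i$ (the lone announcing super-round being negligible) gives a per-phase cost of $O(k_i\polylog n\cdot\log n)$. Multiplying by the number of phases $|\mF_{\Delta,k_i}|=O(\Delta\,2^{i}\polylog n)$ yields, for each epoch,
\begin{equation*}
O(\Delta\,2^{i}\polylog n)\cdot O\!\left(\tfrac{\Delta}{2^{i}}\polylog n\cdot\log n\right)=O\!\left(\Delta^2(\polylog n)^2\log n\right),
\end{equation*}
the factors $2^{i}$ and $2^{-i}$ cancelling precisely because $k_i=\Delta/2^{i}$. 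Summing over the $\log\Delta$ epochs multiplies by $\log\Delta$, giving $O(\Delta^2\polylog n\,\log\Delta)$ under the stated convention that $\polylog n$ abbreviates the square of the selector-construction poly-logarithm times $\log n$.

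The step I expect to be the main obstacle is the running-time accounting rather than correctness. Three points need care: (i) verifying that the factor $2^{i}$ from the number of announcing phases exactly cancels the factor $2^{-i}$ in the per-phase cost, which is what pins down the choice $\kappa_i=\Delta/2^{i}$ and the particular selector parameters; (ii) checking that every selector actually invoked meets the admissibility constraint $1\le\ell<k$ of Definition~\ref{def:avoid-selector} (e.g.\ $\Delta-k_i<\Delta$ and $k_i/2^{a-1}<k_i/2^{a-2}$) so that Theorem~\ref{thm:avoiding-selectors} is applicable and the denominators $k-\ell$ are as computed; and (iii) confirming that the sub-phase sum is a convergent geometric series contributing only a constant, not an additional logarithmic factor. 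Once these are dispatched, correctness follows essentially immediately from the Epoch Invariant and Correct Realization lemmas, so the bulk of the effort is the tidy summation above.
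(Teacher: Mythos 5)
Your proposal is correct and follows essentially the same route as the paper's proof: correctness via the Epoch Invariant (Lemma~\ref{lem:epoch-invariant}) combined with Correct Realization (Lemma~\ref{lem:correct-realization}), and the time bound by summing avoiding-selector lengths per Theorem~\ref{thm:avoiding-selectors}, with the geometric series over sub-phases giving $O(k_i\polylog n)$ super-rounds per phase and the $2^{i}$ factors cancelling against $|\mF_{\Delta,k_i}|=O((\Delta^2/k_i)\polylog n)$ to yield $O(\Delta^2\polylog n)$ super-rounds per epoch, times $\log\Delta$ epochs and $O(\log n)$ rounds per super-round. Your write-up is in fact slightly more explicit than the paper's (e.g., instantiating the invariant at $i=\log\Delta$ to conclude all links are realized, and checking selector admissibility), but these are elaborations of the same argument rather than a different approach.
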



\begin{proof}
By Lemma~\ref{lem:epoch-invariant}, each epoch $i$ reduces by at least half the number of non-realized incident links. 
We next count the number of rounds in each epoch by counting the number of super-rounds and multiplying the result by the $O(\log n)$ length of each super-round.
Recall that link realization means that some triples of responding and confirming super rounds were not interrupted by other neighbors of both end nodes of that link; therefore, the attached extended messages (in the third super-rounds in a row) were correctly received. Thus, the local exchange of messages addressed to specific neighbors took place successfully.

Each sub-phase $a$ has $O(\Delta^2 \polylog n)$ super-rounds, because for each set in of the $(n,k_i/2^{a-1},k_i/2^a)$-avoiding selector $\mF_{k_i/2^{a-1},k_i/2^a}$, there are four super-rounds and the selector itself has $O((k_i/2^a) \polylog n)$ set, by Theorem~\ref{thm:avoiding-selectors}.

Therefore, the total number of super-rounds in all sub-phases executed 
within 
\mamr{the loops in Line~\ref{line:subphaseloopA} of Algorithm~\ref{algC2Bv2A} and Line~\ref{line:subphaseloopL} of Algorithm~\ref{algC2Bv2L}}
is 
\vspace*{-1.3ex}
\[
O(\sum_{a=1}^{\log k_i} (k_i/2^a) \polylog n) \le
O( k_i \polylog n)
\ .
\]

\vspace*{-0.7ex}
\noindent
Within one phase, they are executed as many times as the number of announcing super-rounds. 
The number of announcing super-rounds in a phase is
$|\mF_{\Delta,k_i}|$, which is $O((\Delta^2/k_i)\cdot \polylog n)$ by Theorem~\ref{thm:avoiding-selectors}.
Hence, the total number of super-rounds in a phase is 
$O( (\Delta^2/k_i)\cdot \polylog n \cdot k_i \polylog n)
\le 
O(\Delta^2 \polylog n)$,
where the final $\polylog n$ is a square of the (poly-)logarithms from Theorem~\ref{thm:avoiding-selectors}.

Since there are $\log\Delta$ epochs, the total number of super-rounds is $O(\Delta^2 \polylog n \log\Delta)$, which is additionally multiplied by $O(\log n)$ -- the length of each super-round -- if we want to refer the total number of beeping rounds.
%
\end{proof}

\vspace*{-2.5ex}
\paragraph{Maximal Independent Set (MIS):}
To demonstrate that the above efficient simulator can yield efficient results for many graph problems, we apply it to the 
algorithm of~\cite{ghaffari2021improved}%
 to improve polynomially (with respect to $\Delta$) the best-known solutions 
for MIS (c.f. \cite{beauquier2018fast}):



\begin{corollary}
\label{cor:mis}
MIS can be solved deterministically on any network of maximum node-degree $\Delta$ in $O(\Delta^2 \polylog n)$ beeping rounds.
\end{corollary}

\remove{

\subsection{Cubic Lower Bound for Non-adaptive Beeping Codes}
\label{sec:lower-non-adaptive}

Consider the following simplification of the simulation problem. Each node $v$ is given, as an input, parameters $n,\Delta$ and a vector of numbers in $[n]$ of length $x_v\le \Delta$. The goal is: for any graph $G$ such that $x_v=|N(v)|$, for any node $v$, every $i$th neighbor of $v$ (according to the order of IDs) learns the $i$th number in the vector of $v$, for any $1\le i \le x_v$.
We call this problem {\em local ports' learning}, as we could think about the numbers in the input vectors as (arbitrary) labels of ports from the node to its corresponding neighbor.

\begin{theorem}
Any beeping code solving the local ports' learning problem has length $\Omega(\Delta^3\log n)$.
\end{theorem}

\begin{proof}

\end{proof}

}

\section{Multi-hop \mam{Simulation}}




\dk{We generalize the 
\mam{simulation}
at distance $1$ to the following {\em $B$-bit $h$-hop simulation} problem: each node has messages, potentially different, of size at most $B$ \mam{bits} addressed to any other node, and it needs to deliver them to all destination nodes \mam{within} distance at most $h$ \mam{hops}. 
If each node has only a single message of size at most $B$ \mam{bits} to be delivered to all nodes \mam{within} distance at most $h$ \mam{hops}, then we call this restricted version {\em $B$-bit $h$-hop Local Broadcast}.
Note also that we do not require messages addressed to nodes of distance larger than $h$ to be delivered.}
Below we generalize the lower bound for single-hop simulation 
in~\cite{davies2023optimal} 
to multi-hop simulation and multi-hop local broadcast.

\begin{theorem}\label{thm:multihoplb}
    There 
    is
    an adversarial network of size $\Theta(\Delta^h)$ such that any $B$-bit $h$-hop simulation algorithm requires $\Omega(B\Delta^{h+1})$ beeping rounds to succeed with probability more than $2^{-\frac{1}{2}\cdot B(\Delta-1)^{h-2}(\Delta/2)^3} = 2^{-\Theta\left(B\Delta^{h+1}\right)}$.

    There exists an adversarial network of size $\Theta(\Delta^h)$ such that any $B$-bit $h$-hop Local Broadcast requires $\Omega(B\Delta^{h})$ beeping rounds to succeed with probability more than $2^{-\frac{1}{2}\cdot B(\Delta-1)^{h-2}(\Delta/2)^2} = 2^{-\Theta\left(B\Delta^{h}\right)}$.
\end{theorem}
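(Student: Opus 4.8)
The plan is to exhibit a single layered adversarial graph of $\Theta(\Delta^h)$ nodes for both bounds, feed its nodes uniformly random $B$-bit messages, and then argue by an information-theoretic counting argument that a short schedule cannot deliver them except with doubly-tiny probability. The base construction I would use is a tree $\mathcal{T}$ rooted at a node $r$: the root has $\Delta/2$ children, each of these again has $\Delta/2$ children, and every node at depth between $2$ and $h-1$ has $\Delta-1$ children, down to depth $h$. This keeps the maximum degree at $\Delta$ (root has degree $\Delta/2$, internal nodes degree $1+(\Delta-1)=\Delta$), makes the graph connected, and gives $\Theta((\Delta-1)^{h-2}(\Delta/2)^2)=\Theta(\Delta^h)$ nodes, the number appearing in the exponents. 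The top two ``fan'' levels are where the single-hop hard instance of~\cite{davies2023optimal} is embedded, and the lower $h-2$ levels of branching by $\Delta-1$ replicate that difficulty.

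\textbf{The local broadcast bound (the clean case).} Fix the root $r$ and give every node an independent uniform $B$-bit message. Since ``within $h$ hops'' is symmetric, each of the $N:=(\Delta-1)^{h-2}(\Delta/2)^2$ nodes within distance $h$ of $r$ must deliver its message to $r$, so $r$ must recover $E:=BN$ bits of entropy. In the Beeping model a listening node gains at most one bit (silence versus noise) per round, so after $T$ rounds the entire view of $r$ is one of at most $2^{T}$ possible transcripts, and its output is a deterministic function of that transcript once we fix $r$'s private randomness by averaging. Hence $r$ can be correct on at most $2^{T}$ of the $2^{E}$ equally likely inputs, giving $\Pr[\text{success}]\le 2^{\,T-E}$. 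For $T\le E/2$ this is at most $2^{-E/2}=2^{-\frac12 B(\Delta-1)^{h-2}(\Delta/2)^2}$, which is exactly the stated bound; equivalently, beating that probability forces $T=\Omega(E)=\Omega(B\Delta^{h})$.

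\textbf{The simulation bound and the extra factor $\Delta$.} Here each node additionally holds a distinct message for each in-range destination, and the extra $\Delta/2$ in the exponent comes from the top ``core,'' which is precisely the single-hop simulation instance of~\cite{davies2023optimal}: the $\Delta/2$ children of $r$ must each deliver $\Delta/2$ \emph{distinct} messages, and because neighbors are mutually invisible and the channel only reports the $\mathrm{OR}$ of beeps, separating the senders costs a factor $\Delta/2$ beyond the raw $\Theta((\Delta/2)^2)$ message count, so the core carries entropy $\Theta(B(\Delta/2)^3)$ that must pass through the feedback bottleneck at $r$ and its children. Replicating the core under each of the $(\Delta-1)^{h-2}$ depth-$(h-2)$ branches makes the total resolvable entropy $E':=B(\Delta-1)^{h-2}(\Delta/2)^3$, and the same transcript-counting step applied to the joint view of the decoding nodes yields $\Pr[\text{success}]\le 2^{\,T-E'}\le 2^{-E'/2}$ for $T\le E'/2$, i.e. $\Omega(B\Delta^{h+1})$.

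\textbf{The main obstacle.} The delicate point is making the difficulties of the $(\Delta-1)^{h-2}$ cores \emph{add} rather than run in parallel: a priori the cores occupy node-disjoint regions and could be served in the same rounds. I would resolve this by charging the required information to a quantity genuinely bounded by the global number of rounds -- for local broadcast this is the single root's $T$-bit transcript, while for simulation it must be the beep pattern of a shared bottleneck, exploiting that a beep is heard \emph{identically} by all neighbors (broadcast), so the $T$ rounds are forced to serialize the encodings of the distinct cores. Formalizing this ``no parallel speedup'' statement -- that the decodable information across the whole network increases by at most one bit per round even though $\Theta(\Delta^h)$ nodes are listening -- is the crux of the argument. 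The second technical hurdle, needed to justify the cubic core, is showing that the single-hop simulation instance costs $\Theta(B(\Delta/2)^3)$ even against adaptive, randomized protocols, i.e. the adaptive strengthening of the selector/superimposed-code lower bound, following the approach of~\cite{davies2023optimal}.
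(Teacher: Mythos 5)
Your local broadcast half is correct and is essentially the paper's own argument: uniformly random $B$-bit inputs, a receiver whose channel feedback is one bit (noise/silence) per round, and the range-counting step $\Pr[\text{success}]\le 2^{T-E}$. The paper uses a complete bipartite top $K_{\Delta/2,\Delta/2}$ with the $(\Delta-1)$-ary transmitter trees hanging below $T_1$, rather than your single-root tree, but for the $\Omega(B\Delta^{h})$ bound your single root works just as well.

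The simulation half, however, has a genuine gap. In your tree, each leaf holds only \emph{one} message addressed to $r$; its other messages are addressed to other destinations and never have to be decodable from $r$'s transcript, so the entropy forced through the root's one-bit-per-round bottleneck is only $B\cdot\Theta(\Delta^{h})$ --- your construction gives the same bound for simulation as for local broadcast, not $\Omega(B\Delta^{h+1})$. Your two proposed repairs do not close this. The claim that ``the decodable information across the whole network increases by at most one bit per round'' is false: node-disjoint regions each gain a bit per round, which is precisely why parallel service of your $(\Delta-1)^{h-2}$ replicated cores cannot be ruled out this way. And the ``adaptive strengthening'' of a single-hop $B(\Delta/2)^{3}$ core bound is left unproved; it also cannot hold at the top of your tree, since $r$ and its children do \emph{not} share a common one-bit bottleneck (the children hear ORs over different neighborhoods). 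The paper needs neither ingredient: it places a single bipartite core at the top, so the $\Delta/2$ receivers in $R$ have \emph{identical} neighborhoods $T_1$, making the joint view of \emph{all} receivers a single $t$-bit OR-sequence, while each of the $(\Delta/2)^{2}(\Delta-1)^{h-2}$ sources in $T_h$ carries $\Delta/2$ distinct messages, one per node of $R$. The joint required output of $R$ then has entropy $B(\Delta-1)^{h-2}(\Delta/2)^{3}$, and the very same counting step you used for $r$ yields $\Pr[\text{success}]\le 2^{t-B(\Delta-1)^{h-2}(\Delta/2)^{3}}$ directly --- the extra factor $\Delta/2$ comes from pure entropy accounting across one shared bottleneck, not from replicating cores or from any single-hop lemma. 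To fix your proof, replace the root by a set $R$ of $\Delta/2$ receivers all adjacent to all $\Delta/2$ first-layer transmitters (keeping your trees below) and count the joint output entropy of $R$.
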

\begin{proof}
    \noindent\textbf{Problem instance.}
    \mamr{We describe the construction of the adversarial network and input set of messages used to prove our lower bound as follows (refer to Figures~\ref{fig:multihop_graph} and~\ref{fig:multihop_graph2}).}
    Consider a full bipartite graph $K_{\Delta/2,\Delta/2}$, with one part called $T$ and the other $R$. We 
    focus on transmissions going towards nodes in $R$, hence nodes in $R$ will be called receivers, while nodes in $T = T_1$ will be called the first \mamr{of $h$ layers}~of~transmitters. 
    
    Each node in $T_1$ will be a root of an $(h-1)$-depth tree of transmitters. We create a second layer of transmitters $T_2$ composed of $(\Delta/2)^2$ nodes. Each node in $T_1$ \mamr{(already connected to each node in $R$)} will also be connected to different $\Delta/2$ nodes in $T_2$.
    \mamr{For subsequent layers of transmitters, that is each} layer $T_i$, for $3 \leq i \leq h$, will be composed of $(\Delta/2)^2 (\Delta-1)^{i-2}$ nodes. Each node in layer $T_{i-1}$, for $3 \leq i \leq h$, will be connected to different $\Delta-1$ nodes in layer~$T_i$. 
%
    \begin{figure}[t]
        \centering
        \includegraphics[width=0.5\linewidth]{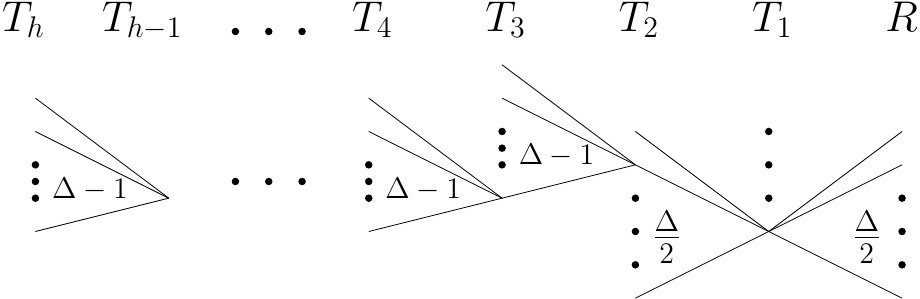}
        \caption{An illustration of the structure of the graph. The graph is partitioned into vertical layers $T_h$, \dots, $T_1$, $R$. \pga{The graph is branching out heavily, so we show only a path} from an arbitrary node in $T_1$ layer to an arbitrary node in $T_h$ layer, with all the edges incident to the path. The numbers between the layers denote the number of edges between the layers that are incident \pga{to the path or on} the path. Recall that layers $T_1$ and $R$ have $\Delta/2$ nodes each, while the other layers have significantly more nodes, but we only show nodes that are adjacent to the considered path.}
        \label{fig:multihop_graph}
    \end{figure}
    Note that each node \mamr{in the defined network} has at most~$\Delta$~neighbors.

    \mamr{We define now the input set of messages as follows.}
    Let each node $v \in T_{h}$ have a $B$-bit message $m_{v \rightarrow u}$ to each node $u \in R$. We choose those messages uniformly at random. We will show that just these messages cannot be relayed efficiently and we do not need any other messages in our problem instance\footnote{Alternatively we can make all the other messages known to the optimal algorithm, e.g., by setting them to be $0^B$.}.

    \noindent\textbf{\mam{Multihop Simulation.}} Here we analyze the \mam{multihop simulation} algorithms.

    There are $(\Delta/2)^2(\Delta-1)^{h-2}$ nodes in $T_h$, and each of them has $\Delta/2$ (possibly different) messages, one for each node in $R$. Therefore, there are $(\Delta/2)^3 (\Delta-1)^{h-2} \pga{=} \Theta(\Delta^{h+1})$ messages to nodes in $R$ that are passing through nodes in $T_1$.

    

    Let $\mathcal{R}$ be the concatenated string of local randomness in all the nodes in $R$.
    The output of any receiver $u \in R$ must depend only on $\mathcal{R}$, node IDs and the pattern of beeps and silences of nodes in $T_1$.

    There are $2^t$ possible patterns of beeps and silences in $t$ rounds. Therefore, the output of nodes in $R$ must be one of the $2^t$ possible distributions, where a distribution is over the randomness of $\mathcal{R}$. The correct output of nodes in $R$ is a string $\{0,1\}^{B(\Delta-1)^{h-2}(\Delta/2)^3}=\{0,1\}^{\Theta(B\Delta^{h+1})}$ chosen uniformly at random (since the input messages of nodes in $T_{h}$ were chosen uniformly at random). Therefore, the probability of picking the correct result is at most $2^{t-B(\Delta-1)^{h-2}(\Delta/2)^3}$,
    \mamr{and any} algorithm that finishes within $t \leq \frac{1}{2}\cdot B(\Delta-1)^{h-2}(\Delta/2)^3$ rounds has at most $2^{-\frac{1}{2}\cdot B(\Delta-1)^{h-2}(\Delta/2)^3}$ probability of outputting the correct answer.
    
    \noindent\textbf{Local Broadcast.} The analysis of Local Broadcast is analogous to the analysis of \mam{multihop simulation}, except that there are $\Delta/2$ times fewer messages to transmit \mamr{(because the same message is transmitted to all nodes located within distance $h$ hops of the transmitter)}. The full analysis of Local Broadcast~is~below.

    There are $(\Delta/2)^2(\Delta-1)^{h-2}$ nodes in $T_h$ and each of them has $1$ message to nodes in $R$. Therefore, there are $(\Delta-1)^{h-2}(\Delta/2)^2 \pga{=} \Theta(\Delta^{h})$ messages to nodes in $R$ that are passing through nodes in $T_1$.

    Let $\mathcal{R}$ be the concatenated string of local randomness in all the nodes in $R$. The output of any receiver $u \in R$ must depend only on $\mathcal{R}$, node IDs, and the pattern of beeps and silences of nodes in $T_1$.

    There are $2^t$ possible patterns of beeps and silences in $t$ rounds. Therefore, the output of nodes in $R$ must be one of the $2^t$ possible distributions, where a distribution is over the randomness of $\mathcal{R}$. The correct output of nodes in $R$ is a string $\{0,1\}^{B(\Delta-1)^{h-2}(\Delta/2)^2}=\{0,1\}^{\Theta(B\Delta^{h})}$ chosen uniformly at random (since the input messages of nodes in $T_{h}$ were chosen uniformly at random). Therefore, the probability of picking the correct result is at most $2^{t-B(\Delta-1)^{h-2}(\Delta/2)^2}$,
    \mamr{and any} algorithm that finishes within $t \leq \frac{1}{2}\cdot B(\Delta-1)^{h-2}(\Delta/2)^2$ rounds has at most $2^{-\frac{1}{2}\cdot B(\Delta-1)^{h-2}(\Delta/2)^2}$ probability of outputting the correct answer.
\end{proof}

    \begin{figure}[t]
        \centering
        \includegraphics[width=1\linewidth]{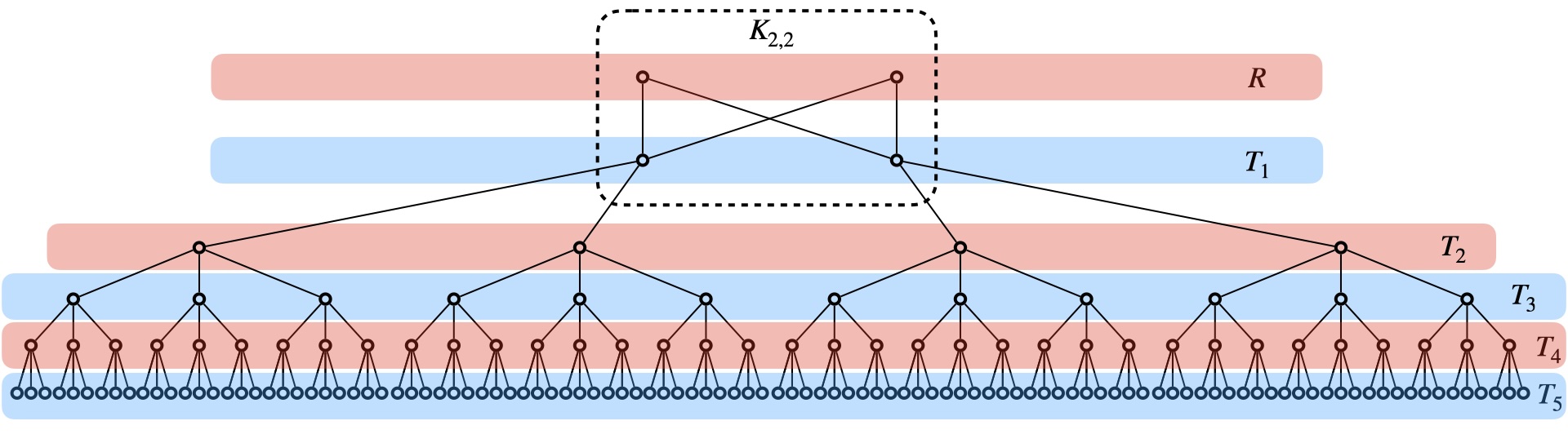}
        \caption{Illustration for Theorem~\ref{thm:multihoplb}. Example of adversarial graph for $\Delta=4$ and $h=5$.}
        \label{fig:multihop_graph2}
    \end{figure}
    
\vspace*{-1.5ex}
\paragraph{Algorithm.}
\pga{A simple algorithm would repeatedly use a 1-hop Local Broadcast routine to flood the network with the messages until nodes at a distance $h$ received the messages. This, however, can take $\Omega(\Delta^{2h})$ rounds. Instead, we limit the flooding by only sending messages along the shortest paths to their destinations using a 1-hop simulation algorithm. The details of the algorithm as well as its analysis are presented next.}

In the beginning, 
nodes use a standard protocol to disseminate their IDs up to distance $h$. They do it in $h$ subsequent epochs, each epoch $i$ of $t_i$ rounds sufficient to run our  Local Broadcast from Section~\ref{sec:primitives} (see Theorem~\ref{th:local_broadcast}) for messages of size $\Delta^i\log n$.
These messages contain different IDs learned by the node at the beginning of the current epoch.
A direct inductive argument, also using the property that there are at most $\Delta^i$ nodes at a distance at most $i$, shows that at the end of epoch $i$, each node knows the IDs of all nodes at a distance at most $i$ from it.
Additionally, each node records in which epoch $i$ it learned each known ID $v$ for the first time and from which of its neighbors $w$ -- and stores this information as a triple $(v,w,i)$.
The invariant for $i=h$ proves that at the end of epoch $h$, each node knows IDs of all nodes of distance at most $h$ from it. The round complexity is 
$\sum_{i=1}^h \Delta^{i}\log n \cdot \Delta^2 \log n \pga{=} O(\Delta^{h+2} \log^2 n)$, and as will be seen later, it is subsumed by the round complexity of the second part of our algorithm (as the $\polylog n$ function in Theorem~\ref{thm:congest-sim} is asymptotically bigger than $\log^2 n$).

Note that a sequence of triples $(v,w_1=v,1),\ldots,(v,w_{\ell},\ell)$,
stored at nodes $w_2,w_3,\dots,w_\ell,w_{\ell+1}=u$ respectively, represents a shortest path to node $v$ starting from the node $u$; the length of that path~is~$\ell$.

In the second part, nodes also proceed in epochs, but this time each epoch $i$ takes $t^*_i$ rounds sufficient to execute $1$-hop simulation algorithm from Section~\ref{sec:main-simulation} (see Theorem~\ref{thm:congest-sim}) for point-to-point messages of size $(B+\log n)\Delta^h$. 
Here $B$ denotes the known upper bound on the size of any input message.
In epoch $i$, every node $u$ transmits a (possibly different) message of size $(B+\log n)\Delta^h$ to each neighbor $w$. Such a message contains all the input messages of nodes within $i-1$ distance and the recipients of these messages such that $w$ is the next node on the saved shortest path to the recipient. The messages have already traveled $i-1$ hops, so their destination is at most $h-(i-1)$ hops away. More specifically, the message from node $u$
addressed to a neighbor $w$ in epoch $i$ 
contains pairs $(v,m_{z \rightarrow v})$, where $v$ is such that $(v,w,i')$ is stored at the node for some $i'\le h-(i-1)$ and $m_{z \rightarrow v}$ is a message received by the node $u$ in epoch $i-1$ (in case of $i-1=0$, it is the original message of the node addressed to $v$).
A direct inductive argument shows that at the end of each epoch $i$ a node knows at most $\Delta^i$ messages addressed to any node $v$ of distance $\ell \le h-i$ from the node. This invariant is based on the following arguments: 
\begin{itemize}
\item 
Because there is a unique neighbor $w$ of the node $u$ such that a triple $(v,w,\ell)$ is stored at the node, 
the number of such nodes $v$ of distance at most $h-i+1$ from the node $u$ is at most $\Delta^{h-i+1}$, 
\item \pga{by the end of epoch $i-1$, node $u$ could receive messages to be relayed to $v$ from $\Delta^{i-1}$ different nodes at distance $i-1$,}
\item \pga{each message contains up to $B$-bit long original message and an ID of length $\log n$,}
\item 
hence, messages of size at most $(B+\log n)\Delta^{i-1} \cdot \Delta^{h-i+1} = (B+\log n) \Delta^h$ are being sent to each neighbor in epoch $i$, and by definition -- epoch $i$ has sufficient number of rounds to deliver them. 
\end{itemize}
The invariant for $i=h$ proves the desired property of $B$-bit $h$-hop simulation. The total number of rounds~is 
\[
O(h\cdot (B+\log n)\Delta^h \cdot \Delta^2\polylog n) 
\subseteq
O(h\cdot B\Delta^{h+2} \polylog n) 
\ ,
\]
where factor $h$ comes from the number of epochs, each sending at most $\Delta$ point-to-point messages of size at most $(B+\log n)\Delta^h$ to neighbors (by the invariant) using the $1$-hop simulation protocol with overhead $O(\Delta^2 \polylog n)$ (by Theorem~\ref{thm:congest-sim}).
Hence we proved the following.

\begin{theorem}\label{thm:multihopub}
There is a distributed deterministic algorithm solving the $B$-bit $h$-hop simulation problem \mam{in a beeping network} in $O(h\cdot B\Delta^{h+2}\polylog n)$ rounds.
\end{theorem}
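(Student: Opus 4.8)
The plan is to avoid naive flooding (which would repeatedly invoke the 1-hop Local Broadcast of Theorem~\ref{th:local_broadcast} and cost $\Omega(\Delta^{2h})$ rounds) and instead route every message along a single shortest path toward its destination, pipelining the $h$ forwarding steps across $h$ epochs. I would split the construction into two parts: a routing-table part and a message-delivery part, and analyze each by induction on the epoch number.

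\emph{First part: ID dissemination and shortest-path recording.} I would run $h$ epochs, where in epoch $i$ every node uses Local Broadcast (Theorem~\ref{th:local_broadcast}) to send its neighbors the set of IDs it currently knows, a set of at most $\Delta^i$ IDs, i.e. a message of size $O(\Delta^i\log n)$. By induction on $i$, using that there are at most $\Delta^i$ nodes within distance $i$, each node learns the IDs of all nodes within distance $i$ by the end of epoch $i$. Recording, for each newly learned ID $v$, the epoch $i$ and the neighbor $w$ from which $v$ was first heard as a triple $(v,w,i)$ gives, for each destination, a \emph{unique} next-hop neighbor lying on a shortest path to it. The cost of this part is $\sum_{i=1}^{h} O(\Delta^i\log n\cdot\Delta^2\log n) = O(\Delta^{h+2}\log^2 n)$, which I expect to be subsumed by the second part since the $\polylog n$ factor of Theorem~\ref{thm:congest-sim} dominates $\log^2 n$.

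\emph{Second part: pipelined routing.} Again in $h$ epochs, in epoch $i$ each node invokes the 1-hop simulator of Theorem~\ref{thm:congest-sim} to send to each neighbor $w$ exactly those $(v, m_{z\to v})$ pairs whose recorded next hop is $w$ and whose destination $v$ still lies within the remaining distance budget. The key invariant, proved by induction, is that at the end of epoch $i$ a node holds at most $\Delta^i$ payloads for any destination at distance at most $h-i$; correctness of the whole algorithm follows from the $i=h$ case, which has delivered every message to every destination within $h$ hops.

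The main obstacle is the per-epoch, per-neighbor message-size bound needed to apply Theorem~\ref{thm:congest-sim} cheaply. I would argue that in epoch $i$ the message node $u$ sends to a fixed neighbor $w$ has size at most $(B+\log n)\Delta^h$ bits: there are at most $\Delta^{h-i+1}$ candidate destinations within the remaining budget, each carrying at most $\Delta^{i-1}$ accumulated payloads (one per source at distance $i-1$), each of size $B+\log n$. Crucially, the unique-next-hop property from the first part means each such pair is forwarded to exactly one neighbor rather than duplicated, so the product $\Delta^{h-i+1}\cdot\Delta^{i-1}\cdot(B+\log n) = (B+\log n)\Delta^h$ bounds the load. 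Feeding messages of this size into Theorem~\ref{thm:congest-sim} costs $O((B+\log n)\Delta^h\cdot\Delta^2\polylog n)$ per epoch, and summing over the $h$ epochs yields the claimed bound $O(h\cdot B\Delta^{h+2}\polylog n)$.
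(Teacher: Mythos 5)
Your proposal is correct and follows essentially the same route as the paper's own proof: the identical two-part structure (ID dissemination via Local Broadcast recording next-hop triples $(v,w,i)$, then $h$ epochs of pipelined routing through the 1-hop simulator of Theorem~\ref{thm:congest-sim}), the same invariant of at most $\Delta^i$ payloads per destination at distance at most $h-i$, and the same per-neighbor load bound $\Delta^{h-i+1}\cdot\Delta^{i-1}\cdot(B+\log n)=(B+\log n)\Delta^h$ yielding $O(h\cdot B\Delta^{h+2}\polylog n)$ rounds. Nothing further is needed.
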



\section{Details of Section~\ref{sec:main-simulation} -- Analysis of the \alg Algorithm}
\label{sec:proofs-main-simulation}

\begin{figure}[thbp]
\centering
\begin{subfigure}[htbp]{0.30\textwidth}
\centering
\vspace*{-10ex}
\includegraphics[width=\linewidth]{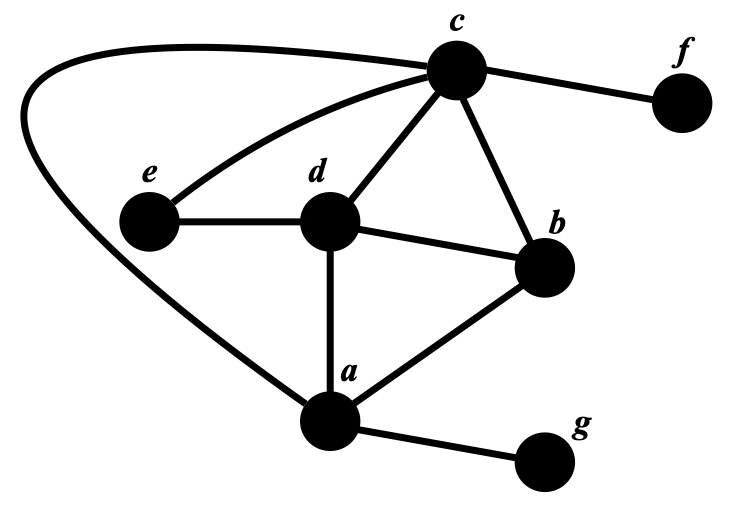}
\caption{Some part of a beeping network.}
\label{subfig:bn}
\end{subfigure}
\hspace{0.1in}
\begin{subfigure}[htbp]{0.30\textwidth}
\centering
\includegraphics[width=\linewidth]{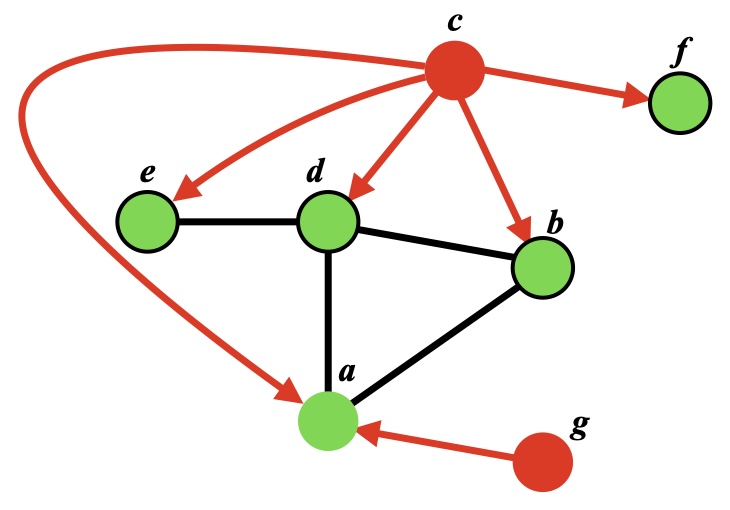}
\caption{Announcing super-round of some phase $j$: $\{c,g\}$ announce, $\{a,b,d,e,f\}$ hear noise, but only $\{b,d,e,f\}$ receive an extended-ID $\langle c\rangle$ and become $c$-$responsive$.}
\label{subfig:announce}
\end{subfigure}
\hspace{0.1in}
\begin{subfigure}[htbp]{0.30\textwidth}
\centering
\vspace*{-3ex}
\includegraphics[width=\linewidth]{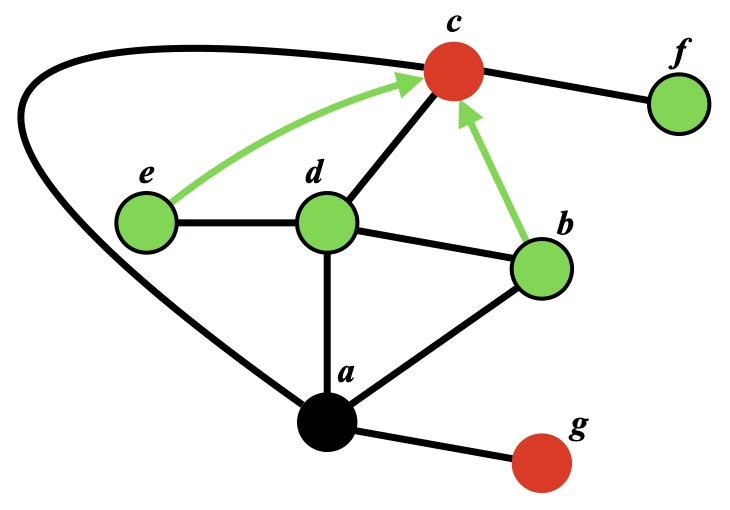}
\caption{Responding $3$ super-rounds within some sub-phase $a'$: $\{b,e\}$ respond and $c$ receives $\langle b\rangle\langle c\rangle\langle m_{b,c}\rangle$ and $\langle e\rangle\langle c\rangle\langle m_{e,c}\rangle$.}
\label{subfig:resp1}
\end{subfigure}
\\
\begin{subfigure}[htbp]{0.30\textwidth}
\centering
\vspace*{3ex}
\includegraphics[width=\linewidth]{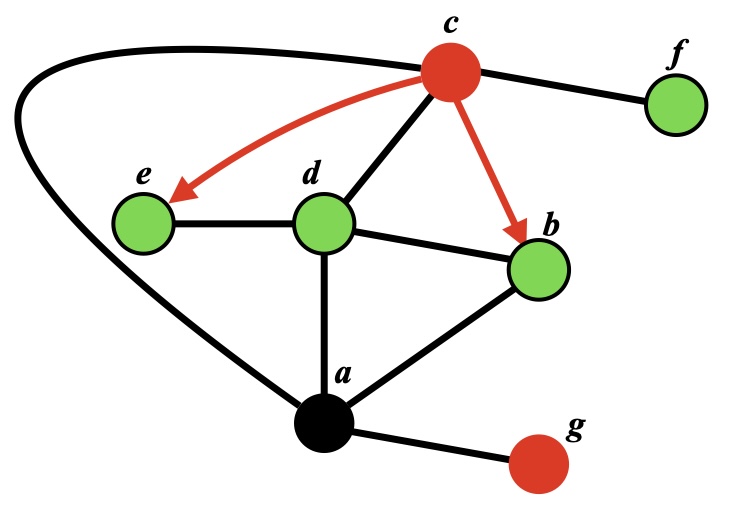}
\caption{Confirming $3$ super-rounds during sub-phase $a'$: $c$ confirms, $\{b,e\}$ receive $\langle c\rangle\langle b\rangle\langle m_{c,b}\rangle$ and $\langle c\rangle\langle e\rangle\langle m_{c,e}\rangle$ respectively. After this $\{b,e\}$ abandon the $c$-$responsive$ status and $\{\{b,c\},\{e,c\}\}$ are marked as realized.}
\label{subfig:conf1}
\end{subfigure}
\hspace{0.1in}
\begin{subfigure}[htbp]{0.30\textwidth}
\centering
\vspace*{-8ex}
\includegraphics[width=\linewidth]{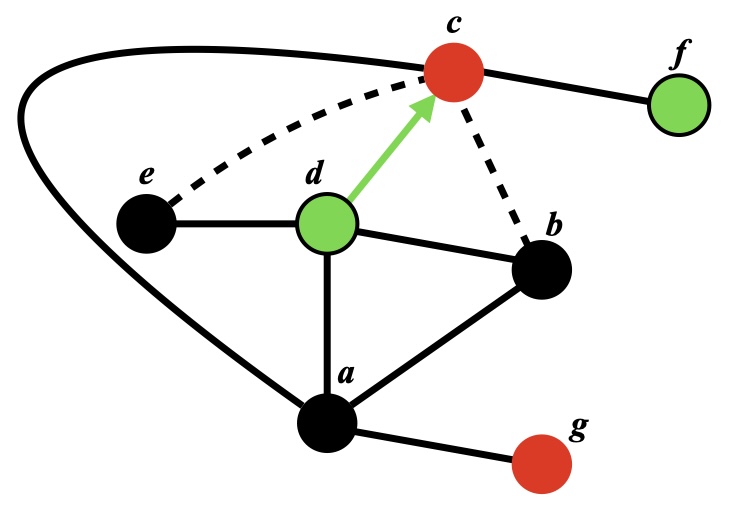}
\caption{Responding $3$ super-rounds within some sub-phase $a''$: $d$ responds and $c$ receives $\langle d\rangle\langle c\rangle\langle m_{d,c}\rangle$.}
\label{subfig:resp2}
\end{subfigure}
\hspace{0.1in}
\begin{subfigure}[htbp]{0.30\textwidth}
\centering
\vspace*{-3ex}
\includegraphics[width=\linewidth]{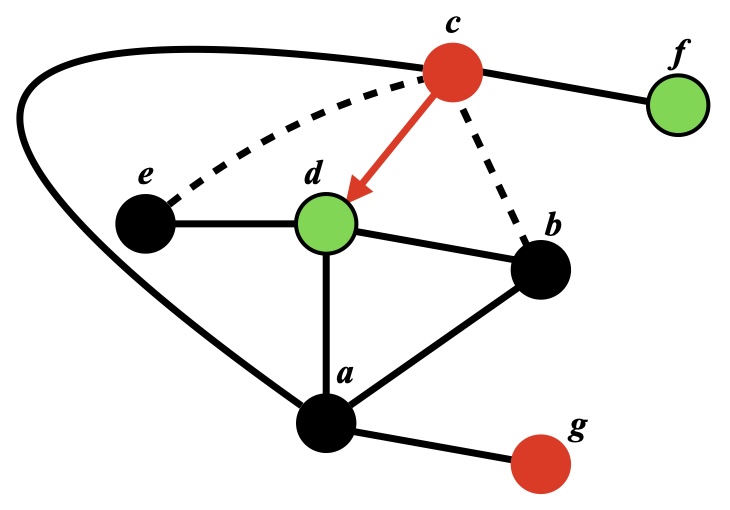}
\caption{Confirming $3$ super-rounds during sub-phase $a''$: $c$ confirms, $d$ receives $\langle c\rangle\langle d\rangle\langle m_{c,d}\rangle$. After this $d$ abandons the $c$-$responsive$ status and $\{d,c\}$ is marked as realized.}
\label{subfig:conf2}
\end{subfigure}
\\
\begin{subfigure}[htbp]{0.30\textwidth}
\centering
\includegraphics[width=\linewidth]{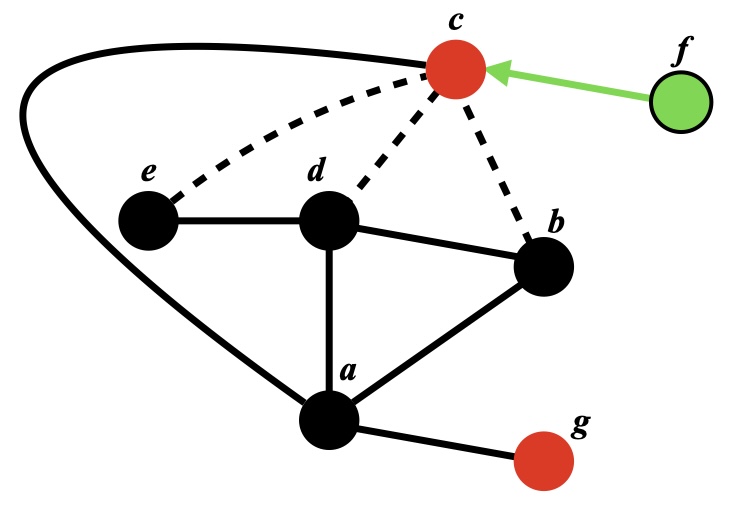}
\caption{Responding $3$ super-rounds within some sub-phase $a'''$: $f$ responds and $c$ receives $\langle f\rangle\langle c\rangle\langle m_{f,c}\rangle$.}
\label{subfig:resp3}
\end{subfigure}
\hspace{0.1in}
\begin{subfigure}[htbp]{0.30\textwidth}
\centering
\vspace*{5ex}
\includegraphics[width=\linewidth]{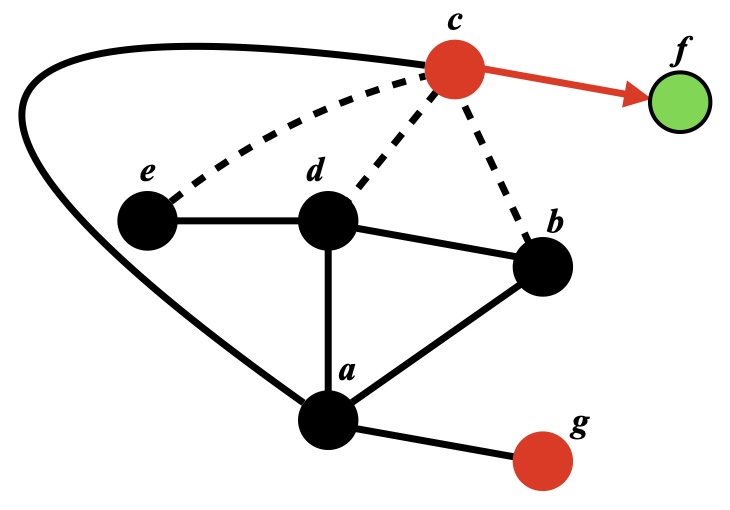}
\caption{Confirming $3$ super-rounds during sub-phase $a'''$: $c$ confirms, $f$ receives $\langle c\rangle\langle f\rangle\langle m_{c,f}\rangle$. After this $f$ abandons the $c$-$responsive$ status and $\{f,c\}$ is marked as realized.}
\label{subfig:conf3}
\end{subfigure}
\hspace{0.1in}
\begin{subfigure}[htbp]{0.30\textwidth}
\centering
\includegraphics[width=\linewidth]{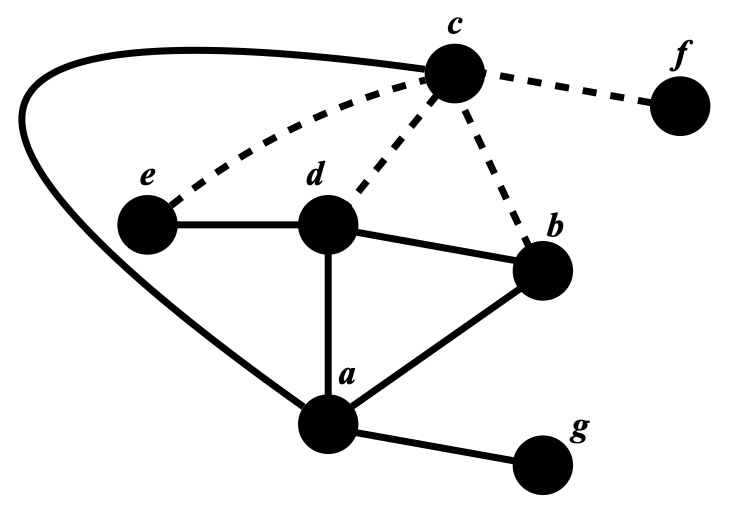}
\caption{By the end of phase $j$ links $\{\{b,c\},\{d,c\},\{e,c\},\{f,c\}\}$ have been realized.}
\label{subfig:end}
\end{subfigure}
\caption{Illustration of \alg algorithm -- consecutive handshakes between the announcer $c$ and its responders during a phase.}
\label{fig:alg}
\end{figure}

\begin{proof}[Proof of Lemma~\ref{lem:correct-receiving}]
The proof is by contradiction -- suppose that in some super-round a node $w$ 
receives an extended-ID $z$ but the claim of the lemma does not hold. 
Without lost of generality, we may assume that this is the first such super-round.

Recall that the definition of receiving an extended-ID requires that node $w$ has been silent in this super-round. Note that if exactly one neighbor of node $w$ has been beeping during the super-round, it must have been an extended-ID of some node (by specification of the algorithm), and therefore node $w$ receives this extended-ID (as other neighbors do not beep at all).
Similarly, we argue that at least one neighbor of node $w$ must have been beeping (some extended-ID) in the super-round, as otherwise node $w$ would not have received any beep (and so, also no extended-ID) in the considered super-round.

In the remainder we focus on the complementary case that at least two neighbors of node $w$ have been beeping in the super-round, each of them some extended-ID (again, by specification of the algorithm, a node beeps only some extended-ID or stay silent during any super-round).

First, suppose that some two neighbors, $v_1,v_2$, beeped different extended-IDs, say $z_1\ne z_2$, respectively.
It means that node $w$ received more than $\log n$ beeps during the super-round: $\log n$ beeps coming from one of the extended-IDs and at least one more because the extended IDs of different nodes differ by at least one position. Hence, the received sequence of beeps does not form any extended-ID, as it must always have $\log n$ bits $1$ corresponding to the beeps. This contradicts the fact that $w$ receives an extended-ID in the considered super-round.

Second, suppose that all extended-IDs beeped by (at least two) neighbors on node $w$ are the same. If this happens in an announcing, or a first responding, or a first confirming super-round, it is a contradiction because all nodes that beep in such super-rounds beep their own extended-IDs, which are pairwise different.
If this happens in a second confirming super-round, it means that these two neighbors $v_1,v_2$ belong to the same set $\mF_{\Delta,k_i}(j)$, for some phase number $j$, and both of them received an extended-ID of $z$ in the preceding responding super-round. By the fact that the considered super-round is the first when the lemma's claim does not hold, we get that in this preceding responding super-round, both $v_1,v_2$ received the extended-ID of $z$ when $z$ was their unique beeping neighbor (beeping its own extended-ID, by the specification of the responding super-rounds).
This, however, implies that in the beginning of the current phase $j$, i.e., during its announcing super-round, both $v_1,v_2$ beeped their extended-IDs and, again by the choice of the current contradictory super-round, their neighbor $w$ could not have received any extended-ID -- this is a contradiction with the fact that $z$ was transmitting in the responding super-round preceding the considered (contradictory) super-round. More precisely, only $(j,\cdot)$-responsive nodes can transmit in responding super-rounds, but $z$ is not $(j,\cdot)$-responsive because it had not received any extended-IDs in the first (announcing) super-round of the current phase.

The last sub-case of the above scenario, when all extended-IDs beeped by (at least two) neighbors on node $w$ are the same, is as follows. 
If this situation happens in a second responding super-round, it means 
that these two neighbors $v_1,v_2$ are both $(j,z)$-responsive and beep extended-ID of $z$. This is, however, acceptable due to the exception in the lemma's statement.

This completes the proof of the lemma.
\end{proof}

\begin{proof}[Proof of Lemma~\ref{lem:correct-realization}]
It is enough to show that points (a) and (b) of the definition of link realization occurred in the last four super-rounds (two responding and two confirming) and also that the other node, $w$, (locally) marks link $\{v,w\}$ as realized at the same time when $v$ does.

If node $v$ marked the link $\{v,w\}$ as realized, it could be because of one of two reasons. 

First, it is in the set $\mF_{\Delta,k_i}(j)$, where $i$ is the number of the current epoch and $j$ is the number of the current phase and received an extended-ID of $w$ followed by its own extended-ID in the preceding two responding super-rounds. This satisfies point (a) of the definition of link realization, as both were beeped by node $w$, by the algorithm specification, and by Lemma~\ref{lem:correct-receiving}. Note that the exception in that lemma does not really apply here because if there were two or more neighbors beeping the same extended-ID (of $v$) in the second responding round, they would also be beeping their own extended-IDs in the first responding round, which could contradict the fact that $v$ received a single extended-ID at that super round.

This also means that $v$ has beeped its own extended-ID followed by extended-ID of $w$ in the last two confirming super-rounds, which must have been received by $w$ because $w$ is $(j,v)$-responsive (because only such nodes could have beeped in the preceding responding super-rounds) and thus its only neighbor in set $\mF_{\Delta,k_i}(j)$ (only such nodes are allowed to beep in confirming super-rounds) is $v$; here we use Fact~\ref{fact:single-beeping}. Hence, $w$ also marks link 
$\{v,w\}$ as realized at the end of the two confirming super-rounds; by the algorithm's specification, point (c) of the definition also holds in this case.

Second, it is $(j,z)$-responsive and received an extended-ID of $z$ followed by its own extended-ID in the current two confirming super-rounds. This satisfies point (b) of the definition, as both were beeped by node $z$, by the specification of the algorithm and Lemma~\ref{lem:correct-receiving} (exception in that lemma does not apply here because we now consider only confirming super-rounds).

This also means that $v$ beeped its own extended-ID followed by extended-ID of $z$ in the preceding two responding super-rounds (because $v$ is $(j,z)$-responsive and only such nodes could beep in the preceding responding super-rounds), which must have been received by $z$ (otherwise, by the specification of the algorithm, $z$ would not beep its extended-ID followed by the extended-ID of $v$ in the last two confirming super-rounds). Hence, $z$ also marks link $\{v,z\}$ as realized at the end of the two confirming super-rounds by the algorithm's specification, and point (c) of the definition also holds in this case.
\end{proof}

\begin{proof}[Proof of Lemma~\ref{lem:subphase-progress}]
The lemma follows from the definition of the $(n,k_i/2^{a-2},k_i/2^{a-1})$-avoiding selector $\mF_{k_i/2^{a-2},k_i/2^{a-1}}$ used throughout sub-phase $a$ of phase $j$ of epoch $i$. 
By specification of the sub-phase, only nodes $w$ such that $w$ is $(j,v)$-responsive and it does not marked link $\{v,w\}$ as realized take active part in sub-phase $a$ (in the sense that only those nodes can beep extended-IDs of itself followed by $v$ in pairs of responding super-rounds), while other neighbors of $v$ do not beep at all. The latter statement needs more justification -- in the beginning of the current phase, in the announcing super-round, $v$ must have beeped because some nodes have become $(j,v)$-responsive in this phase (w.l.o.g. we may assume that at least one node has become $(j,v)$-responsive, because otherwise the lemma trivially holds), therefore, by Lemma~\ref{lem:correct-receiving}, other neighbors of $v$ could not receive another announcement and become $(j,v')$-responsive, for some $v'\ne v$, and thus by the description of the algorithm -- they stay silent throughout the whole phase. 

By lemma assumption, there are at most $\Delta/2^{i+a-2}=k_i/2^{a-2}$ $(j,v)$-responsive nodes $w$ that have not marked link $\{v,w\}$ as realized by the beginning of the sub-phase. Hence, at least half of them will be in a singleton intersection with some set $\mF_{k_i/2^{a-2},k_i/2^{a-1}}(b)$, by Definition~\ref{def:avoid-selector} and Fact~\ref{fact:avoiding-selectors}, in which case $v$ receives their beeping in the corresponding pair of the responding super-rounds. Consequently, $v$ beeps back its own extended-ID and the extended-ID of $w$ in the following two confirming super-rounds. 

Node $w$ receives those beepings, as there is no other neighbor of $w$ who is allowed to beep in these two rounds -- indeed if there was, it would belong to set $\mF_{\Delta,k_i}(j)$ and thus it would have been beeping in the announcing super-round of this phase, preventing (together with neighbor $v$ of $w$) node $w$ from receiving anything in that super-round (by Lemma~\ref{lem:correct-receiving}), which contradicts the fact that $w$ must have received an extended-ID of $v$ in that round to become $(j,v)$-responsive (as assumed). Therefore, by the description of the algorithm, $w$ marks link $\{v,w\}$ as realized. This completes the proof that the number of $(j,v)$-responsive neighbors $w$ of $v$ who remain without realizing link $\{v,w\}$ becomes less than $\Delta/2^{i+a-1}$ at the end of the considered sub-phase. 
\end{proof}

\begin{proof}[Proof of Lemma~\ref{lem:phase-progress}]
It follows directly from the fact that a phase, after its announcing super-round, iterates sub-phases $a=1,\ldots,\log k_i$. Each subsequent sub-phase halves the number of not-realized links $\{v,w\}$, for $(j,v)$-responsive nodes $w$ and each announcing node $v$, c.f., Lemma~\ref{lem:subphase-progress}, starting from the assumed $2k_i$ maximum number of $(j,v)$-responding nodes (recall that $(j,v)$-responding nodes form a subset of those to whom links are not realized, hence there are at most $2k_i$ of them in the beginning).
\end{proof}

\begin{proof}[Proof of Lemma~\ref{lem:epoch-invariant}]
The proof is by induction on epoch number $i$.
Obviously, the invariant holds at the beginning of the first epoch, i.e., $\kappa_1\le k_1$, where $\kappa_i$ was defined as the sharp upper bound on the maximum number of not realized links at a node at the end of epoch $i$ and $k_i$ is the parameter used in the algorithm for epoch $i$. 

Consider epoch $i\ge 1$.
We have to prove that:
assuming that $\kappa_{i'}\le k_{i'}$, for any $1\le i' < i$, we also have $\kappa_i \le k_i$.
Technically we can assume that $\kappa_0=k_0=\Delta$.

Consider a node $w$. 
By the inductive assumption, it has at most $k_{i-1}$ neighbors $v$ such that link $\{v,w\}$ has not been marked by $w$ as realized.
By Definition~\ref{def:avoid-selector} and Fact~\ref{fact:avoiding-selectors} applied to $(n,\Delta,\Delta-k_i)$-avoiding selector $\mF_{\Delta,k_i}$, 
which sets are used for announcing super-rounds (and later for confirming super-rounds), the number of neighbors $v$ of node $w$ from whom node $w$ has not received their extended-ID during the announcing super-round is smaller than $k_i$. By Lemma~\ref{lem:phase-progress}, all such nodes $w$ realize their links, and by Lemma~\ref{lem:correct-realization}, also node $v$ realizes these links during the considered phase. Hence, the number of non-realized links incident to any node $w$ drops below $k_i$ by the end of epoch $k_i$.
\end{proof}

\section{Details of Section~\ref{sec:primitives} -- Algorithms and Analysis of Building Blocks}
\label{sec:prim_details}

In this section, we include the remaining details of our building blocks (Section~\ref{sec:primitives}). Theorems are restated for easy reference. First, let us introduce the following combinatorial object, to be used later.

\begin{definition}[Strong selector]
\label{def:strong-selector}
    A family $\mathcal{F}$ of subsets of $[n]$ of size at most $k$ each is called an \emph{$(n,k)$ strong selector} if for every non-empty subset $S \in [n]$ such that $|S| \leq k$, for every element $a \in S$, there exists a set $F \in \mathcal{F}$ such that $|F \cap S| = \{a\}$.
\end{definition}

Note that there are known constructions of $(n,\Delta)$ strong selectors of length at most $O(\Delta^2 \log n)$~\cite{5967914}. Next, we show how to use an $(n,\Delta)$ strong selector to perform a local broadcast.

\subsection{Local broadcast}
\label{sec:local-broadcast}

Our local broadcast routine is non-adaptive. That is, each node 
has a predefined schedule 
specifying in which rounds beeps and in which rounds listens. 

\parhead{Assumptions} 
The nodes know the total number of nodes $n$, parameter $c$,
the maximum degree $\Delta$ of the graph and have access to a global clock. Additionally, we assume that each node $v$ knows its neighborhood $N(v)$. (In Subsection~\ref{sub:neighbourhood} we will show how all nodes can learn their neighborhoods in $O(\Delta^2 \log^2 n)$ beeping rounds.)

\localbroadcastthm*
\remove{
\begin{theorem}
\label{th:local_broadcast}
\mm{Consider a Beeping Network where each node $v$ knows $n$, $\Delta$, and $N(v)$.}
    Assume that every message $m_v$ of each node $v$ has length at most $k$ for some $k>0$. Then, there is a deterministic distributed local broadcast algorithm that works in $O(k\Delta^2 \log^2 n)$ beeping rounds.
\end{theorem}
}

\begin{proof}
Consider an $(n^c,\Delta)$ strong selector $\mathcal{F}=\{S_1,S_2,\dots, S_L\}$ of length $L=O(\Delta^2 \log n)$, known to all nodes. Our local broadcast schedule will take $L$ rounds. At any round $i$, nodes $v \in S_i$ that have bit 1 (indicating to transmit) send a beep while all the other nodes are silent.

Consider any receiver $r$. Consider the set $N(r)$ of neighbors of $r$. Note that $|N(r)| \leq \Delta$. From the definition of a $(n^c,\Delta)$ strong selector $\mathcal{F}$, for every $v \in N(r)$ there exists an index $i$ such that $S_i \cap N(r) = \{v\}$. Therefore, for every pair of transmitters $v$ and receiver $r$ that are adjacent to each other, there exists a round $i$ such that $v$ is the only transmitting neighbor of $r$.


Since every node $v$ knows its neighborhood $N(v)$ and the sets $S_i$ for all $i$, node $v$ also knows for each neighbor $u \in N(v)$ at what round $t$ neighbor $u$ is the only neighbor transmitting. If at round $t$ node $v$ hears a beep, it means that $u$ transmitted bit 1. If at round $t$ node $v$ hears silence, it means that $u$ transmitted bit 0.

Therefore, after $L$ rounds the algorithm will go through the entire strong selector $\mathcal{F}$ and each node will learn a bit of information from each of its neighbors.

The procedure can be repeated $B$ times to broadcast messages of at most $B$ bits.
Hence, the claim follows.
\end{proof}

\subsection{Learning neighborhood}
\label{sub:neighbourhood}

Now we show how all the nodes can learn their neighborhoods in $O(\Delta^2 \log^2 n)$ beeping rounds. The following procedure will be non-adaptive. 

\parhead{Assumptions} 
The nodes know the total number of nodes $n$ and parameter $c$.

\learningneighthm*
\remove{
\begin{theorem}
\label{th:learning_neighbourhood}
\mm{Consider a Beeping Network where each node $v$ knows $n$.}
    There is a deterministic distributed learning neighborhood algorithm that works in $O(\Delta^2 \log^2 n)$ beeping rounds.
\end{theorem}
}

\begin{proof}
The nodes will beep according to a strong selector $\mF$ as in the previous subsection. This time, however, for each set $S_i \in \mF$ there will be $2\log n$ beeping rounds instead of $1$ round. First, nodes $v \in S_1$ will transmit for $2\log n$ rounds, then nodes $v \in S_2$ will transmit for $2\log n$ rounds and so on.

In each block of $2\log n$ rounds corresponding to a set $S_i$ for some $i$, each node $v \in S_i$ will encode its ID in the following way. For each bit in its ID, if the bit is 1, then the node listens for 1 round and then beeps. If the bit is 0, then the node beeps once and then listens for 1 round. The process is repeated for each bit in the ID.

After all $2\log n$ beeping rounds corresponding to a set $S_i$ for some $i$ pass, each node $v$ can look at the string of beeps and silences that it heard during the block. If there were beeps in both rounds $2k$ and $2k+1$ for some $k$, then there were multiple neighbors transmitting in the block and $v$ will ignore this block. Otherwise, the string of beeps encodes the ID of the only transmitting neighbor $u$. 
In particular, $u$ can be added to the list of neighbors of $v$. 

After all $L$ blocks of transmissions pass, each node $v$ heard from each $u \in N(v)$ in a block such that $u$ was the only transmitting neighbor. Therefore, each $u \in N(v)$ was added to the list of neighbors of $v$. No other nodes were added to the list of neighbors of $v$. Thus, $v$ knows its neighborhood from now on and the claim follows.
\end{proof}

\subsection{Cluster gathering}
\label{sec:gathering}


\noindent \textbf{Assumptions.} 
Thanks to running cluster gathering inside the network decomposition algorithm, we will have access to additional structures. During the working of the network decomposition algorithm, each cluster $C$ will have a Steiner tree $S$ associated with it. All nodes $v \in C$ will be regular nodes in the Steiner tree $S$, while there may be some additional nodes $u \notin C$ that are Steiner nodes in $S$. All Steiner trees will have depth at most $O(\log^2 n)$, i.e., the diameter of the Steiner tree $S$ will be the same as the weak-diameter of the cluster $C$ that corresponds to $S$. Each node and each edge will be in at most $O(\log n)$ Steiner trees. Each Steiner tree $S$ will have a fixed root node $r$.
Given that our cluster gathering algorithm uses the local broadcast algorithm defined above, the previous assumptions also~apply.


\noindent \textbf{Effect and efficiency.} Given the above assumptions, we can develop an algorithm that gathers and aggregates limited information (each node passes at most $O(\log n)$ bits for each Steiner tree it is in) from all nodes in a cluster $C$ to the root $r$ of the corresponding Steiner tree $S$, with all clusters working in parallel. Additionally, the root $r$ can broadcast $O(\log n)$ bits to all nodes in $C$. The algorithm will work in $O(\Delta^2 \cdot \log^6 n)$ beeping rounds.

\noindent \textbf{Utilization.} The gathering algorithm will be used throughout the network decomposition algorithm but may be of independent use, especially since the network decomposition algorithm may output the Steiner trees it was using as well as the decomposition. Therefore, any algorithm using our network decomposition algorithm will have access to the appropriate Steiner trees to use for 
collecting information using our cluster gathering algorithm.

\clustergatherthm*
\remove{
\begin{theorem}
\label{th:cluster_gathering}
\mm{Consider a Beeping Network where each node $v$ knows $n$, $\Delta$, and $N(v)$.}
    There is a deterministic distributed cluster gathering algorithm that works in $O(\Delta^2 \log^4 n)$ beeping rounds.
\end{theorem}
}

\begin{proof}
The algorithm will utilize the local broadcast subroutine at each step. When we say "transmit", we mean to use local broadcast subroutine, unless specified otherwise. Each node $v \in C$ broadcasts its $O(\log n)$ bits (e.g., a number less than $n$) in parallel using the local broadcast subroutine $O(\log n)$ times. At each step, the parent $p$ of $v$ in the corresponding Steiner tree will listen for the message from $v$ as well as messages from its other children. Whenever $p$ receive messages from some of its children, $p$ prepares its own message (e.g., sum of numbers provided by its children in the current step, assuming that the sum is smaller than $n$) of $O(\log n)$ bits. The process will repeat until the root $r$ receives all the messages, which will last the number of steps equal to the depth of the Steiner tree, $O(\log^2 n)$. Note that node $p$ may receive messages from its children in multiple steps; in that case in each step $t$ node $p$ transmits the aggregate of messages it received at $t$, thus transmitting multiple times.

Similarly, one can broadcast a message from $r$ to all the nodes $v\in C$ in $O(\log^2 n)$ steps. The entire algorithm takes $O(\log^2 n)$ steps, which is $O(t_{local\_broadcast} \cdot \log^2 n)$ beeping rounds and the claim follows.
\end{proof}


\subsection{Network Decomposition}
\label{sec:decomposition}


In this section, we present how to adapt the network decomposition algorithm of Ghaffari et al.~\cite{ghaffari2021improved} to the beeping model. The only changes are in the way that nodes communicate. The original algorithm was 
designed for the \congest model, where communication was straightforward. 
Instead, for the beeping model, we have to carefully implement all the concurrent communication, so that the algorithm remains efficient.
First, let us recall the result from~\cite{ghaffari2021improved}. 


\begin{theorem}\cite{ghaffari2021improved}
    There is a deterministic distributed algorithm that computes a $(\log n, \log^2 n)$ network decomposition in $O(\log^5 n)$ \congest rounds.
\end{theorem}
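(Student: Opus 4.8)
The plan is to prove this through the standard reduction of a full network decomposition to a single-color \emph{separated clustering} subroutine, following the deterministic decomposition of Rozhon and Ghaffari that underlies \cite{ghaffari2021improved}. The target decomposition with $O(\log n)$ colors and weak-diameter $O(\log^2 n)$ is built one color class at a time. Given the set $S$ of nodes not yet placed in any cluster, I would invoke a subroutine that carves out of $G[S]$ a family of pairwise \emph{non-adjacent} clusters, each of weak-diameter $O(\log^2 n)$, which together cover at least half of $S$. All clusters produced by one invocation receive the same fresh color; pairwise non-adjacency makes this a valid coloring, and since each invocation halves the number of surviving nodes, $O(\log n)$ invocations exhaust $V$, producing $O(\log n)$ colors.

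The heart of the argument is the single-color subroutine. Each surviving node starts as its own singleton cluster whose \emph{cluster-ID} is its $b = O(\log n)$-bit node ID, and I would process the bit positions $i = 1, \dots, b$ one at a time. In phase $i$, call a cluster \emph{blue} if bit $i$ of its ID is $1$ and \emph{red} otherwise. Each blue cluster adjacent in $G$ to some red cluster merges into such a red neighbor, its nodes adopting the red cluster-ID; a blue cluster with no red neighbor is \emph{removed} (to be handled by a later color), and red clusters retain their IDs. The key correctness invariant is that after all $b$ phases any two surviving clusters with distinct IDs cannot be adjacent: at the last bit in which two adjacent surviving clusters differed, the blue one would already have merged or been removed. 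This yields the coloring property directly.

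The delicate point, and the step I expect to be the main obstacle, is bounding the weak-diameter. If each phase simply attached blue clusters to red ones across single boundary edges, the radius could double per phase and blow up to $2^{O(\log n)} = \mathrm{poly}(n)$. The fix is to realize each phase as a bounded \emph{growth} process: over $O(\log n)$ inner steps the red clusters expand outward one BFS layer at a time, absorbing the blue clusters they reach, while blue clusters not reached within these $O(\log n)$ steps are removed. This caps the radius increase at an additive $O(\log n)$ per phase, and summing over the $b = O(\log n)$ phases gives weak-diameter $O(\log^2 n)$. Simultaneously one must argue, via a charging (potential) argument, that the nodes removed across all phases number at most half of $|S|$. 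Establishing the additive (rather than multiplicative) diameter growth and the constant-fraction survival guarantee together is the technically demanding core of the proof.

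Finally I would implement the subroutine in the \congest model and account for the round complexity. Each cluster maintains a low-depth BFS/Steiner tree rooted at an ID-holder; the blue/red decision, the selection of a merge edge, and the adoption of a new cluster-ID are aggregations and broadcasts over these trees of depth $O(\log^2 n)$, with $O(\log n)$-bit IDs, costing $O(\log^2 n)$ rounds per aggregation after pipelining. One color then costs $O(\log n)$ bit-phases $\times$ $O(\log n)$ inner growth steps $\times$ $O(\log^2 n)$ per aggregation $= O(\log^4 n)$ rounds, and the $O(\log n)$ colors give $O(\log^5 n)$ \congest rounds overall, matching the claim. Congestion arising from a node or edge participating in several cluster trees is controlled by the disjointness of each color's clusters together with standard scheduling, so no link carries more than a polylogarithmic number of overlapping messages per round.
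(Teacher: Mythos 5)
The paper does not actually prove this theorem: it imports it from \cite{ghaffari2021improved}, and the appendix (Section~\ref{sec:Ghaffari-verbatim}) reproduces that algorithm verbatim. GGR's algorithm is a \emph{single unified process} in which all clusters evolve simultaneously, each carrying a level $lev(C)\le b$ and a token count $t(C)$, with proposals accepted iff $p\ge t(C)/(28(b+\log n))$, deletions paid for at $14(b+\log n)$ tokens per killed vertex, ``stalling'' clusters, and progress driven by the potential $\Phi_i(C)=3i-2\,lev_i(C)+id_{lev_i(C)+1}(C)$. What you reconstruct instead is the older Rozho\v{n}--Ghaffari scheme: an outer loop over $O(\log n)$ colors, each color built by $b$ bitwise phases of blue/red merging with a per-phase growth budget. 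That framework is the right ancestor, but it is precisely the algorithm whose known complexity is $O(\log^7 n)$; reducing this to $O(\log^5 n)$ is the actual contribution of \cite{ghaffari2021improved}, achieved by discarding the color-by-color outer loop and introducing the token/level machinery. Your final accounting silently assumes away the bottlenecks that machinery exists to remove.

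Concretely, two steps fail. First, congestion: your $O(\log^2 n)$ cost per aggregation rests on ``disjointness of each color's clusters,'' but vertex-disjointness of the cluster \emph{sets} does not make their communication trees disjoint --- weak-diameter clusters are in general disconnected in $G$ (a cluster loses boundary nodes when it is blue), so each BFS/Steiner tree must route through vertices outside its cluster, and many trees can share an edge. Deterministically bounding this overlap is nontrivial; GGR's construction guarantees each node and edge lies in at most $O(\log n)$ Steiner trees and even then pays $O(\log^3 n)$ \congest rounds per gathering step (as the present paper's item ``Gathering proposals'' records), which inside your outer color loop yields $O(\log^6 n)$ or worse, not $O(\log^5 n)$; ``standard scheduling'' is not available deterministically without that overlap bound. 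Second, your merge/removal rules are wrong as literally stated: removing every blue cluster with no red neighbor fails already on the edgeless graph --- half the singleton clusters die in each bit-phase and only the all-zeros ID survives the $b$ phases, destroying the ``at least half of $S$ survives'' invariant (untouched blue clusters must \emph{survive}, since they are adjacent to no conflicting cluster); and letting red clusters absorb \emph{whole} blue clusters upon contact breaks the additive diameter bound, since the radius then jumps by the absorbed cluster's diameter, reintroducing exactly the doubling blow-up you set out to avoid. The correct mechanism is node-granular: only the boundary nodes of a blue cluster in contact migrate, and they migrate only if they form at least a $\Theta(1/\log n)$ fraction (in GGR, measured in tokens), otherwise precisely those boundary nodes are deleted, which both caps total deletions and severs the adjacency. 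This quantitative charging argument is the step you yourself flag as ``the technically demanding core'' but never supply, so the proposal as written establishes neither the survival guarantee nor the claimed round bound.
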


We adapt the above result to the beeping model and obtain the next theorem.

\networkdecompthm*
\remove{
\begin{theorem}
\label{thm:local-decomposition}
    There is a deterministic distributed algorithm that computes a $(\log n, \log^2 n)$ network decomposition in $O(\Delta^2 \log^8 n)$ beeping rounds.
\end{theorem}
}

The algorithm works as follows. In preprocessing, each node learns its neighborhood, so that we will be able to use local broadcast routine.
This part can take up to $O(\Delta^2 \log^2 n)$ beeping rounds (see Theorem~\ref{th:learning_neighbourhood}). Next, we perform the network decomposition algorithm from~\cite{ghaffari2021improved} with all the communication carefully replaced, as shown 
below. This completes our network decomposition algorithm.


\label{sec:rundown}


\paragraph{Summary of messages:}
We need to carefully replace all the communication from~\cite{ghaffari2021improved} with routines that work in the beeping model. We will use our local broadcast and cluster gathering routines from Section~\ref{sec:primitives}.


Below, we present a list of messages transmitted in 
a step and how to implement this information 
propagation in the beeping model (for convenience, we attach the network decomposition algorithm from~\cite{ghaffari2021improved} in Section~\ref{sec:Ghaffari-verbatim}):
\begin{enumerate}
    \item \textbf{Before proposals.} A node $v$ needs to check which clusters are adjacent to it and what are the parameters of these clusters (id, level). Every node $u$ can broadcast the parameters of the cluster $u$ is in, which is at most $O(\log n)$ bits. This is done once per step, 
    at a cost of $O(1)$ \congest rounds per step. In the beeping model, it can be done via a local broadcast with messages of length at most $O(\log n)$ bits. According to Theorem~\ref{th:local_broadcast}, it will cost $O(\Delta^2 \log^2 n)$ beeping rounds per step.
    \item \textbf{Proposals.} Proposals to join a cluster can be transmitted to all neighbors if the target cluster (or the target node) is specified in the message. Other receivers may ignore the message. This part used to cost $O(1)$ \congest rounds per step. 
    In the beeping model, it can be done via a local broadcast, where the message specifies the IDs of both, sender and receiver, meaning messages of length at most $O(\log n)$ bits. According to Theorem~\ref{th:local_broadcast}, it will cost $O(\Delta^2 \log^2 n)$ beeping rounds per step.
    \item \textbf{Gathering proposals.} The leader of each cluster should learn the total number of proposals and the total number of tokens in the cluster. This part was done in $O(\log^3 n)$ \congest rounds. 
    Note that the algorithm keeps the Steiner tree of $O(\log^2 n)$ diameter for each cluster, such that each node (and therefore each edge) is in at most $O(\log n)$ Steiner trees. Additionally, each node that participates in this cluster gathering  can add the numbers of proposals/tokens it receives from other nodes and then transmit the sums instead of relaying each message separately; these sums can never exceed the total number of nodes $n$, so this guarantees that each node only transmits $O(\log n)$ bits per Steiner tree it is in. Therefore, we can use the cluster gathering algorithm. According to Theorem~\ref{th:cluster_gathering}, this takes at most $O(\Delta^2 \log^6 n)$ beeping rounds.
    \item \textbf{Responding to proposals.} Each node informs all its neighbors either that all the proposals were accepted or that all the proposing nodes should be killed. This part used to cost $O(1)$ \congest rounds per step. In the beeping model, the same can be done in a local broadcast, with $O(1)$ bit messages. According to Theorem~\ref{th:local_broadcast}, it will cost $O(\Delta^2 \log n)$ beeping rounds.
    \item \textbf{Stalling.} If a cluster decides to stall, all nodes neighboring the cluster should be informed about it. This \mm{part} used to cost $O(1)$ \congest rounds per step. In the beeping model, \mm{the same} can be done in a local broadcast with $O(1)$ bit messages. According to Theorem~\ref{th:local_broadcast}, it will cost $O(\Delta^2 \log n)$ beeping rounds.
\end{enumerate}

In the procedure described above, there is a total of $O(\Delta^2 \cdot \log^6 n)$ beeping rounds required per step. There are up to $O(\log n)$ steps per phase and up to $O(\log n)$ phases in the algorithm, which results in $O(\Delta^2 \cdot \log^8 n)$ beeping rounds for the entire algorithm. Note that only the means of communication changed. Therefore, the correctness of the algorithm is unaffected. This completes the proof of Theorem~\ref{thm:local-decomposition}.

\subsection{GGR network decomposition algorithm}
\label{sec:Ghaffari-verbatim}

\noindent\textbf{Notation:} $b$ is the length of identifiers, $n$ is the number of nodes in graph $G$.

The remainder of this subsection, which is important from perspective of assurance that our Local Broadcast could be combined with the tools in~\cite{ghaffari2021improved}, is cited from~\cite{ghaffari2021improved} verbatim.

\noindent\textbf{Construction  outline:}   The  construction  has  $2(b+\log n) =O(\log n)$ phases. Each phase has $28(b+\log n) =O(\log n)$ steps.  Initially, all nodes of $G$ are \emph{living}, during the construction some living nodes \emph{die}. Each living node is  part  of  exactly  one  cluster.   Initially,  there  is  one cluster $C_v$ for each vertex $v\in V(G)$ and we define the identifier $id(C)$ of $C$ as the unique identifier of $v$ and use $id_i(C)$  to  denote  the $i$-th  least  significant  bit  of  $id(C)$. From now on, we talk only about identifiers of clusters and do not think of vertices as having identifiers, though they will still use them for simple symmetry breaking tasks.   Also,  at  the  beginning,  the  Steiner  tree $T_{C_v}$ of a cluster $C_v$ contains just one node, namely $v$ itself, as a  terminal  node.   Clusters  will  grow  or  shrink  during the iterations, while their Steiner trees collecting their vertices can only grow.  When a cluster does not contain any nodes, it does not participate in the algorithm anymore.

\noindent\textbf{Parameters of each cluster:} Each cluster $C$ keeps two other parameters besides its identifier $id(C)$ to make its decisions:  its number of tokens $t(C)$ and its level $lev(C)$.The number of tokens can change in each step -- more precisely it is incremented by one whenever a new vertex joins $C$, while it does not decrease when a vertex leaves $C$.  The number of tokens only decreases when $C$ actively deletes nodes.  We define $t_i(C)$ as the number of tokens of $C$ at the beginning of the $i$-th phase and set $t_1(C) = 1$. Each  cluster  starts  in  level  $0$.   The  level  of  each cluster does not change within a phase $i$ and can only increment by one between two phases; it is bounded by $b$.  We denote with $lev_i(C)$ the level of $C$ during phase $i$.   Moreover,  for  the  purpose  of  the  analysis,  we  keep track  of  the  potential  $\Phi(C)$  of  a  cluster $C$ defined  as $\Phi_i(C) = 3i - 2lev_i(C) + id_{lev_i(C)+1}(C)$.  The potential of each cluster stays the same within a phase.

\noindent\textbf{Description  of  a  step:} In each step, first, each node $v$ of each cluster $C$ checks whether it is adjacent to a  cluster $C'$ such that  $lev(C')<lev(C)$. If  so, then $v$ proposes  to  an arbitrary  neighboring  cluster $C'$ among the neighbors with the smallest level $lev(C')$ and if there is a choice, it prefers to join clusters with $id_{lev(C')+1}(C') = 1$.  Otherwise, if there is a neighboring cluster $C'$ with $lev(C') = lev(C)$ and $id_{lev(C')+1}(C') = 1$, while  $id_{lev(C)+1}(C)  =  0$,  then $v$ proposes  to  arbitrary such cluster.

Second, each cluster $C$ collects the number of proposals  it  received.   Once  the  cluster  has  collected  the number  of  proposals,  it  does  the  following.   If  there are $p$ proposing nodes,  then they join $C$ if and only if $p \geq t(C)/(28(b+ \log n))$.  The denominator is equal to the number of steps. If $C$ accepts these proposals, then $C$ receives $p$ new tokens, one from each newly joined node. On the other hand, if $C$ does not accept the proposals as their number is not sufficiently large, then $C$ decides to kill all those proposing nodes.  These nodes are then removed from $G$.  Cluster $C$ pays $p \cdot 14(b+ \log n)$ tokens for this, i.e., it pays $14(b+ \log n)$ tokens for every vertex that it deletes.  These tokens are forever gone.  Then the cluster does not participate in growing anymore,  until the end of the phase and throughout that time we call that cluster \emph{stalling}.  The cluster tells that it is stalling to neighboring nodes so that they do not propose to it. At the end of the phase, each stalling cluster increments its level by one.

If the cluster is in level $b-1$ and goes to the last level $b$, it will not grow anymore during the whole algorithm, and  we  say  that  it  has finished.    Other  neighboring clusters can still eat its vertices (by this we mean that vertices of the finished clusters may still propose to join other clusters). 

Whenever  a  node $u$ joins  a  cluster $C$ via  a  vertex $v\in C$, we add $u$ to the Steiner tree $T_C$ as a new terminal node and connect it via an edge $uv$.  Whenever a node $u\in C$ is deleted or eaten by a different cluster, it stays in the Steiner tree $T_C$ but is changed to a non-terminal node.

\section{Conclusions}

We provided deterministic distributed algorithms to efficiently simulate a round of algorithms designed for the CONGEST model on the Beeping Networks. This allowed us to improve polynomially the time complexity of several (also graph) problems on Beeping  Networks. The first simulation by the Local Broadcast algorithm is shorter by a polylogarithmic factor than the other, more general one -- yet still powerful enough to implement some algorithms, including the prominent solution to Network Decomposition~\cite{ghaffari2021improved}.
The more general one could be used for solving problems such as MIS.
We also considered efficient pipelining of messages via several layers of BN.

Two important lines of research arise from our work.
First, whether some (graph) problems do not need local broadcast to be solved deterministically, and whether their time complexity could be asymptotically below $\Delta^2$.
Second, could a lower bound on any deterministic local broadcast algorithm, better than $\Omega(\Delta\log n)$, be proved?



\bibliographystyle{plain}
\bibliography{bibliography}

\newpage

\appendix

\end{document}